\newcommand{\be}{\begin{equation}}
\newcommand{\ee}{\end{equation}}
\newcommand{\bC}{\mathbb{C}}
\newcommand{\bN}{\mathbb{N}}
\newcommand{\bR}{\mathbb{R}}
\newcommand{\cA}{\mathcal{A}}
\newcommand{\cB}{\mathcal{B}}
\newcommand{\cC}{\mathcal{C}}
\newcommand{\cK}{\mathcal{K}}
\newcommand{\cL}{\mathcal{L}}
\newcommand{\cM}{\mathcal{M}}
\newcommand{\cN}{\mathcal{N}}
\newcommand{\cR}{\mathcal{R}}
\newcommand{\cS}{\mathcal{S}}
\newcommand{\cU}{\mathcal{U}}
\newcommand{\cW}{\mathcal{W}}
\newcommand{\mfa}{\mathfrak{a}}
\newcommand{\mfc}{\mathfrak{c}}
\newcommand{\mfg}{\mathfrak{g}}
\newcommand{\mfgl}{\mathfrak{gl}}
\newcommand{\mfk}{\mathfrak{k}}
\newcommand{\mfm}{\mathfrak{m}}
\newcommand{\mfp}{\mathfrak{p}}
\newcommand{\mfu}{\mathfrak{u}}
\newcommand{\mfX}{\mathfrak{X}}
\newcommand{\bsone}{\boldsymbol{1}}
\newcommand{\bsLambda}{\boldsymbol{\Lambda}}
\newcommand{\bsTheta}{\boldsymbol{\Theta}}
\newcommand{\bsX}{\boldsymbol{X}}
\newcommand{\ri}{\mathrm{i}}
\newcommand{\dd}{\mathrm{d}}
\newcommand{\diag}{\mathrm{diag}}
\newcommand{\reg}{\mathrm{reg}}
\newcommand{\tr}{\mathrm{tr}}
\newcommand{\Id}{\mathrm{Id}}
\newcommand{\ad}{\mathrm{ad}}
\newcommand{\wad}{\widetilde{\mathrm{ad}}}
\newcommand{\Real}{\mathrm{Re}}
\newcommand{\Imag}{\mathrm{Im}}
\newcommand{\End}{\mathrm{End}}
\newcommand{\htheta}{{\hat{\theta}}}
\newcommand{\hTheta}{{\hat{\Theta}}}
\newcommand{\hbsTheta}{{\hat{\bsTheta}}}
\newcommand{\eps}{\varepsilon}
\newcommand{\half}{\frac{1}{2}}
\newcommand{\PD}[2]{\frac{\partial #1}{\partial #2}}
\numberwithin{equation}{section}
\theoremstyle{plain}
\newtheorem{THEOREM}{Theorem}
\newtheorem{LEMMA}[THEOREM]{Lemma}
\newtheorem{PROPOSITION}[THEOREM]{Proposition}
\begin{document}

\begin{center}
    \Large{\textbf{Lax representation of the hyperbolic van Diejen dynamics
    with two coupling parameters}}
\end{center}

\medskip
\begin{center}
    B.G.~PUSZTAI${}^{a, b}$ and T.F.~G\"ORBE${}^{c}$ \\
    \bigskip
    ${}^a$ Bolyai Institute, University of Szeged, \\
    Aradi v\'ertan\'uk tere 1, H-6720 Szeged, Hungary \\
    ${}^b$ MTA Lend\"ulet Holographic QFT Group, Wigner RCP, \\
    H-1525 Budapest 114, P.O.B. 49, Hungary \\
    e-mail: \texttt{gpusztai@math.u-szeged.hu} \\
    \medskip
    ${}^c$ Department of Theoretical Physics, University of Szeged, \\
    Tisza Lajos krt 84-86, H-6720 Szeged, Hungary \\
    e-mail: \texttt{tfgorbe@physx.u-szeged.hu}
\end{center}

%%%%%%%%%%%%%%%%%%%%%%%%%%%%%%%%%%%%%%%%%%%%%%%%%%%%%%%%%%%%%%%%%%%%%%%%%%%%%%%
%%% ABSTRACT 
%%%%%%%%%%%%%%%%%%%%%%%%%%%%%%%%%%%%%%%%%%%%%%%%%%%%%%%%%%%%%%%%%%%%%%%%%%%%%%%
\medskip
\begin{abstract}
In this paper, we construct a Lax pair for the classical hyperbolic 
van Diejen system with two independent coupling parameters. Built upon 
this construction, we show that the dynamics can be solved by a projection 
method, which in turn allows us to initiate the study of the scattering 
properties. As a consequence, we prove the equivalence between the first 
integrals provided by the eigenvalues of the Lax matrix and the family of 
van Diejen's commuting Hamiltonians. Also, at the end of the paper, we 
propose a candidate for the Lax matrix of the hyperbolic van Diejen 
system with three independent coupling constants.

\medskip
\noindent
\textbf{Keywords:} 
\emph{Integrable systems; Ruijsenaars--Schneider--van Diejen models; 
Lax matrices}

\smallskip
\noindent
\textbf{MSC (2010):} 70E40, 70G65, 70H06

\smallskip
\noindent
\textbf{PACS number:} 02.30.Ik
\end{abstract}

\renewcommand{\contentsname}{\large{Contents}}
\tableofcontents

%%%%%%%%%%%%%%%%%%%%%%%%%%%%%%%%%%%%%%%%%%%%%%%%%%%%%%%%%%%%%%%%%%%%%%%%%%%%%%%
%%% SECTION: Introduction
%%%%%%%%%%%%%%%%%%%%%%%%%%%%%%%%%%%%%%%%%%%%%%%%%%%%%%%%%%%%%%%%%%%%%%%%%%%%%%%
\section{Introduction}
\label{SECTION_Introduction}
The Ruijsenaars--Schneider--van Diejen (RSvD) systems, or simply van Diejen 
systems \cite{van_Diejen_ComposMath, van_Diejen_TMP1994, van_Diejen_JMP1995}, 
are multi-parametric integrable deformations of the translation invariant 
Ruijsenaars--Schneider (RS) models \cite{Ruij_Schneider, Ruij_CMP1987}. 
Moreover, in the so-called `non-relativistic' limit, they reproduce the 
Ca\-lo\-gero--Moser--Sutherland (CMS) models \cite{Calogero, Sutherland, 
Moser_1975, Olsha_Pere_1976} associated with the $BC$-type root systems. 
However, compared to the translation invariant $A$-type models, the 
geometrical picture underlying the most general classical van Diejen models 
is far less developed. The most probable explanation of this fact is the lack 
of Lax representation for the van Diejen dynamics. For this reason, working 
mainly in a symplectic reduction framework, in the last couple of years we 
undertook the study of the $BC$-type rational van Diejen models 
\cite{Pusztai_NPB2011, Pusztai_NPB2012, Pusztai_NPB2013, Feher_Gorbe_JMP2014, 
Gorbe_Feher_PLA2015, Pusztai_NPB2015}. By going one stage up, in this paper 
we wish to report on our first results about the hyperbolic variants of the 
van Diejen family.

In order to describe the Hamiltonian systems of our interest, let us recall 
that the configuration space of the hyperbolic $n$-particle van Diejen model 
is the open subset
\be
    Q 
    = \{ \lambda = (\lambda_1, \ldots, \lambda_n) \in \bR^n 
        \, | \, 
        \lambda_1 > \ldots > \lambda_n > 0 \}
            \subseteq \bR^n,
\label{Q}
\ee
that can be seen as an open Weyl chamber of type $BC_n$. The cotangent bundle 
of $Q$ is trivial, and it can be naturally identified with the open subset
\be
    P
    = Q \times \bR^n
    = \{ (\lambda, \theta) 
            = (\lambda_1, \ldots, \lambda_n, \theta_1, \ldots, \theta_n) 
                \in \bR^{2 n} 
        \, | \, 
        \lambda_1 > \ldots > \lambda_n > 0 \}
            \subseteq \bR^{2 n}.
\label{P}
\ee
Following the widespread custom, throughout the paper we shall occasionally 
think of the letters $\lambda_a$ and $\theta_a$ $(1 \leq a \leq n)$ as 
globally defined coordinate functions on $P$. For example, using this latter 
interpretation, the canonical symplectic form on the phase space 
$P \cong T^* Q$ can be written as
\be
    \omega = \sum_{c = 1}^n \dd \lambda_c \wedge \dd \theta_c,
\label{symplectic_form}
\ee
whereas the fundamental Poisson brackets take the form
\be
    \{ \lambda_a, \lambda_b \} = 0,
    \quad
    \{ \theta_a, \theta_b \} = 0,
    \quad
    \{ \lambda_a, \theta_b \} = \delta_{a, b}
    \qquad
    (1 \leq a, b \leq n).
\label{PBs}
\ee
The principal goal of this paper is to study the dynamics generated by the 
smooth Hamiltonian function
\be
    H 
    = \sum_{a = 1}^n \cosh(\theta_a)
        \left( 1 + \frac{\sin(\nu)^2}{\sinh(2 \lambda_a)^2} \right)^\half
        \prod_{\substack{c = 1 \\ (c \neq a)}}^n
            \left( 
                1 + \frac{\sin(\mu)^2}{\sinh(\lambda_a - \lambda_c)^2} 
            \right)^\half
            \left( 
                1 + \frac{\sin(\mu)^2}{\sinh(\lambda_a + \lambda_c)^2} 
            \right)^\half,
\label{H}
\ee
where $\mu, \nu \in \bR$ are arbitrary coupling constants satisfying the 
conditions
\be
    \sin(\mu) \neq 0 \neq \sin(\nu).
\label{mu_nu_conds}
\ee
Note that $H$ \eqref{H} does belong to the family of the hyperbolic 
$n$-particle van Diejen Hamiltonians with \emph{two independent parameters} 
$\mu$ and $\nu$ (cf. \eqref{H1_vs_H}). Of course, the values of the parameters 
$\mu$ and $\nu$ really matter only modulo $\pi$.

Now, we briefly outline the content of the paper. In the subsequent 
section, we start with a short overview on some relevant facts and notations 
from Lie theory. Having equipped with the necessary background material, in 
Section \ref{SECTION_Algebraic_properties} we define our Lax matrix \eqref{L} 
for the van Diejen system \eqref{H}, and also investigate its main algebraic 
properties. In Section \ref{SECTION_Dynamics} we turn to the study of the 
Hamiltonian flow generated by \eqref{H}. As the first step, in Theorem 
\ref{THEOREM_Completeness} we formulate the completeness of the corresponding 
Hamiltonian vector field. Most importantly, in Theorem 
\ref{THEOREM_Lax_representation} we provide a Lax representation of the 
dynamics, whereas in Theorem \ref{THEOREM_eigenvalue_dynamics} we establish 
a solution algorithm of purely algebraic nature. Making use of the projection 
method formulated in Theorem \ref{THEOREM_eigenvalue_dynamics}, we also 
initiate the study of the scattering properties of the system \eqref{H}. 
Our rigorous results on the temporal asymptotics of the maximally defined 
trajectories are summarized in Lemma \ref{LEMMA_asymptotics}. Section 
\ref{SECTION_Spectral_invariants} serves essentially two purposes. In 
Subsection \ref{SUBSECTION_link_to_vD} we elaborate the link between our 
special $2$-parameter family of Hamiltonians \eqref{H} and the most general 
$5$-parameter family of hyperbolic van Diejen systems \eqref{H_vD}. At the 
level of the coupling parameters the relationship can be read off from the 
equation \eqref{2parameters}. Furthermore, in Lemma 
\ref{LEMMA_linear_relation} we affirm the equivalence between van Diejen's
commuting family of Hamiltonians and the coefficients of the characteristic 
polynomial of the Lax matrix \eqref{L}. Based on this technical result, in 
Theorem \ref{THEOREM_commuting_eigenvalues} we can infer that the eigenvalues 
of the proposed Lax matrix \eqref{L} provide a commuting family of first 
integrals for the Hamiltonian system \eqref{H}. We conclude the paper with 
Section \ref{SECTION_Discussion}, where we discuss the potential applications, 
and also offer some open problems and conjectures. In particular, in 
\eqref{L_tilde} we propose a Lax matrix for the $3$-parameter family of 
hyperbolic van Diejen systems defined in \eqref{H_1_mu_nu_kappa}.

%%%%%%%%%%%%%%%%%%%%%%%%%%%%%%%%%%%%%%%%%%%%%%%%%%%%%%%%%%%%%%%%%%%%%%%%%%%%%%%
%%% SECTION: Preliminaries
%%%%%%%%%%%%%%%%%%%%%%%%%%%%%%%%%%%%%%%%%%%%%%%%%%%%%%%%%%%%%%%%%%%%%%%%%%%%%%%
\section{Preliminaries from group theory}
\label{SECTION_Preliminaries}
This section has two main objectives. Besides fixing the notations used
throughout the paper, we also provide a brief account on some relevant facts
from Lie theory underlying our study of the $2$-parameter family of 
hyperbolic van Diejen systems \eqref{H}. For convenience, our conventions
closely follow Knapp's book \cite{Knapp}.
 
As before, by $n \in \bN = \{1, 2, \dots \}$ we denote the number of particles.
Let $N = 2 n$, and also introduce the shorthand notations
\be 
    \bN_n = \{ 1, \dots, n \}
    \quad \text{and} \quad
    \bN_N = \{1, \dots, N \}.
\label{bN_n} 
\ee
With the aid of the $N \times N$ matrix
\be
    C 
    = \begin{bmatrix}
    	0_n & \bsone_n \\
    	\bsone_n & 0_n
    \end{bmatrix}
\label{C} 
\ee
we define the non-compact real reductive matrix Lie group
\be
	G = U(n, n) = \{ y \in GL(N, \bC) \, | \, y^* C y = C \},
\label{G}
\ee
in which the set of unitary elements
\be
    K = \{ y \in G \, | \, y^* y = \bsone_N \}
    \cong
    U(n) \times U(n)
\label{K}
\ee
forms a maximal compact subgroup. The Lie algebra of $G$ \eqref{G} takes 
the form
\be
	\mfg 
	= \mfu(u, n) 
	= \{ Y \in \mfgl(N, \bC) \, | \, Y^* C + C Y = 0 \},
\label{mfg}
\ee 
whereas for the Lie subalgebra corresponding to $K$ \eqref{K} we have
the identification
\be
    \mfk 
    = \{ Y \in \mfg \, | \, Y^* + Y = 0 \} \cong \mfu(n) \oplus \mfu(n).
\label{mfk}
\ee
Upon introducing the subspace
\be
    \mfp = \{ Y \in \mfg \, | \, Y^* = Y \},
\label{mfp}
\ee
we can write the decomposition $\mfg = \mfk \oplus \mfp$, which is orthogonal 
with respect to the usual trace pairing defined on the matrix Lie algebra 
$\mfg$. Let us note that the restriction of the exponential map onto the 
complementary subspace $\mfp$ \eqref{mfp} is injective. Moreover, the image 
of $\mfp$ under the exponential map can be identified with the set of the 
positive definite elements of the group $U(n, n)$; that is, 
\be
    \exp(\mfp) = \{ y \in U(n, n) \, | \, y > 0 \}.
\label{exp_mfp_identification}
\ee
Notice that, due to the Cartan decomposition $G = \exp(\mfp) K$, the above 
set can be also naturally identified with the non-compact symmetric space 
associated with the pair $(G, K)$, i.e.,
\be
    \exp(\mfp) 
    \cong U(n, n) / (U(n) \times U(n))
    \cong SU(n, n) / S(U(n) \times U(n)).
\label{symm_space}
\ee

To get a more detailed picture about the structure of the reductive Lie 
group $U(n, n)$, in $\mfp$ \eqref{mfp} we introduce the maximal Abelian 
subspace
\be
    \mfa 
    = \{ X = \diag(x_1, \ldots, x_n, -x_1, \ldots, -x_n) 
        \, | \, 
        x_1, \ldots, x_n \in \bR \}. 
\label{mfa}
\ee
Let us recall that we can attain every element of $\mfp$ by conjugating the 
elements of $\mfa$ with the elements of the compact subgroup $K$ \eqref{K}. 
More precisely, the map
\be
    \mfa \times K \ni (X, k) \mapsto k X k^{-1} \in \mfp
\label{mfp_and_mfa_and_K}
\ee
is well-defined and onto. As for the centralizer of $\mfa$ inside $K$ 
\eqref{K}, it turns out to be the Abelian Lie group
\be
    M = Z_K(\mfa)
    = \{ \diag(e^{\ri \chi_1}, \ldots, e^{\ri \chi_n}, 
                e^{\ri \chi_1}, \ldots, e^{\ri \chi_n}) 
        \, | \,
        \chi_1, \ldots, \chi_n \in \bR \}
\label{M}
\ee
with Lie algebra
\be
    \mfm = \{ \diag(\ri \chi_1, \ldots, \ri \chi_n, 
                    \ri \chi_1, \ldots, \ri \chi_n) 
            \, | \,
            \chi_1, \ldots, \chi_n \in \bR \}.
\label{mfm}
\ee
Let $\mfm^\perp$ and $\mfa^\perp$ denote the sets of the off-diagonal 
elements in the subspaces $\mfk$ and $\mfp$, respectively; then clearly
we can write the refined orthogonal decomposition
\be
    \mfg = \mfm \oplus \mfm^\perp \oplus \mfa \oplus \mfa^\perp.
\label{refined_decomposition}
\ee
To put it simple, each Lie algebra element $Y \in \mfg$ can be decomposed as
\be
    Y = Y_\mfm + Y_{\mfm^\perp} + Y_\mfa + Y_{\mfa^\perp}
\label{Y_decomp}
\ee
with unique components belonging to the subspaces indicated by the 
subscripts. 

Throughout our work the commuting family of linear operators
\be
    \ad_X \colon \mfgl(N, \bC) \rightarrow \mfgl(N, \bC),
    \qquad
    Y \mapsto [X, Y]
\label{ad}
\ee
defined for the diagonal matrices $X \in \mfa$ plays a distinguished role. 
Let us note that the (real) subspace 
$\mfm^\perp \oplus \mfa^\perp \subseteq \mfgl(N, \bC)$ is invariant under 
$\ad_X$, whence the restriction
\be
    \wad_X 
    = \ad_X |_{\mfm^\perp \oplus \mfa^\perp}
    \in \mfgl(\mfm^\perp \oplus \mfa^\perp)
\label{wad}
\ee
is a well-defined operator for each 
$X = \diag(x_1, \ldots, x_n, -x_1, \ldots, -x_n) \in \mfa$ with spectrum
\be
    \mathrm{Spec}(\wad_X) 
    = \{x_a - x_b, \pm (x_a + x_b), \pm 2 x_c 
        \, | \, 
        a, b, c \in \bN_n, \, a \neq b\}.
\label{wad_X_spectrum}
\ee
Now, recall that the regular part of the Abelian subalgebra $\mfa$ 
\eqref{mfa} is defined by the subset
\be
    \mfa_\reg 
    = \{ X \in \mfa
        \, | \, 
        \wad_X \text{ is invertible} \},
\label{mfa_reg}
\ee
in which the standard open Weyl chamber
\be
    \mfc 
    = \{ X = \diag(x_1, \ldots, x_n, -x_1, \ldots, -x_n) \in \mfa
        \, | \,
        x_1 > \ldots > x_n > 0 \}
\label{mfc}
\ee
is a connected component. Let us observe that it can be naturally identified 
with the configuration space $Q$ \eqref{Q}; that is, $Q \cong \mfc$. Finally, 
let us recall that the regular part of $\mfp$ \eqref{mfp} is defined as
\be
    \mfp_\reg 
    = \{ k X k^{-1} \in \mfp 
        \, | \, 
        X \in \mfa_\reg \text{ and } k \in K \}.
\label{mfp_reg}
\ee
As a matter of fact, from the map \eqref{mfp_and_mfa_and_K} we can derive
a particularly useful characterization for the open subset 
$\mfp_\reg \subseteq \mfp$. Indeed, the map
\be
    \mfc \times (K / M) \ni (X, k M) \mapsto k X k^{-1} \in \mfp_\reg
\label{mfp_reg_identification}
\ee
turns out to be a diffeomorphism, providing the identification
$\mfp_\reg \cong \mfc \times (K / M)$.

%%%%%%%%%%%%%%%%%%%%%%%%%%%%%%%%%%%%%%%%%%%%%%%%%%%%%%%%%%%%%%%%%%%%%%%%%%%%%%%
%%% SECTION: Algebraic properties of the Lax matrix
%%%%%%%%%%%%%%%%%%%%%%%%%%%%%%%%%%%%%%%%%%%%%%%%%%%%%%%%%%%%%%%%%%%%%%%%%%%%%%%
\section{Algebraic properties of the Lax matrix}
\label{SECTION_Algebraic_properties}
Having reviewed the necessary notions and notations from Lie theory, in this 
section we propose a Lax matrix for the hyperbolic van Diejen system of our 
interest \eqref{H}. To make the presentation simpler, with any
$\lambda = (\lambda_1, \dots, \lambda_n) \in \bR^n$ and 
$\theta = (\theta_1, \dots, \theta_n) \in \bR^n$ we associate the real 
$N$-tuples
\be
    \Lambda = (\lambda_1, \dots, \lambda_n, -\lambda_1, \dots, -\lambda_n)
    \quad \text{and} \quad
    \Theta = (\theta_1, \dots, \theta_n, -\theta_1, \dots, -\theta_n),
\label{Lambda_Theta}
\ee
respectively, and also define the $N \times N$ diagonal matrix
\be
    \bsLambda 
    = \diag(\Lambda_1, \dots, \Lambda_N)
    = \diag(\lambda_1, \dots, \lambda_n, -\lambda_1, \dots, -\lambda_n)
    \in \mfa.
\label{bsLambda}
\ee
Notice that if $\lambda \in \bR^n$ is a regular element in the sense that 
the corresponding diagonal matrix $\bsLambda$ \eqref{bsLambda} belongs to 
$\mfa_\reg$ \eqref{mfa_reg}, then for each $j \in \bN_N$ the complex
number
\be
    z_j = - \frac{\sinh(\ri \nu + 2 \Lambda_j)}{\sinh(2 \Lambda_j)}
            \prod_{\substack{c = 1 \\ (c \neq j, j - n)}}^n
                \frac{\sinh(\ri \mu + \Lambda_j - \lambda_c)}
                    {\sinh(\Lambda_j - \lambda_c)}
                \frac{\sinh(\ri \mu + \Lambda_j + \lambda_c)}
                    {\sinh(\Lambda_j + \lambda_c)}
\label{z_a}
\ee
is well-defined. Thinking of $z_j$ as a function of $\lambda$, let us observe 
that its modulus $u_j = \vert z_j \vert$ takes the form
\be
    u_j
    = \left( 1 + \frac{\sin(\nu)^2}{\sinh(2 \Lambda_j)^2} \right)^\half
        \prod_{\substack{c = 1 \\ (c \neq j, j - n)}}^n
            \left( 
                1 + \frac{\sin(\mu)^2}{\sinh(\Lambda_j - \lambda_c)^2} 
            \right)^\half
            \left( 
                1 + \frac{\sin(\mu)^2}{\sinh(\Lambda_j + \lambda_c)^2} 
            \right)^\half,   
\label{u_a}
\ee
and the property $z_{n + a} = \bar{z}_a$ ($a \in \bN_n$) is also clear. 
Next, built upon the functions $z_j$ and $u_j$, we introduce the column 
vector $F \in \bC^N$ with components
\be
    F_a = e^{\frac{\theta_a}{2}} u_a^{\frac{1}{2}}
    \quad \text{and} \quad
    F_{n + a} = e^{-\frac{\theta_a}{2}} \bar{z}_a u_a^{-\frac{1}{2}}
    \qquad
    (a \in \bN_n).
\label{F}
\ee
At this point we are in a position to define our Lax matrix 
$L \in \mfgl(N, \bC)$ with the entries
\be
    L_{k, l} 
    = \frac{\ri \sin(\mu) F_k \bar{F}_l 
        + \ri \sin(\mu - \nu) C_{k, l}}
        {\sinh(\ri \mu + \Lambda_k - \Lambda_l)}
    \qquad
    (k, l \in \bN_N).
\label{L}
\ee
Note that the matrix valued function $L$ is well-defined at each point
$(\lambda, \theta) \in \bR^{N}$ satisfying the regularity condition
$\bsLambda \in \mfa_\reg$. Since $\mfc \subseteq \mfa_\reg$ \eqref{mfc}, 
$L$ makes sense at each point of the phase space $P$ \eqref{P} as well.
To give a motivation for the definition of $L = L(\lambda, \theta; \mu, \nu)$ 
\eqref{L}, let us observe that in its `rational limit' we get back the Lax 
matrix of the rational van Diejen system with two parameters. Indeed, up to 
some irrelevant numerical factors caused by a slightly different convention, 
in the $\alpha \to 0^+$ limit the matrix 
$L(\alpha \lambda, \theta; \alpha \mu, \alpha \nu)$ tends to the rational 
Lax matrix $\cA = \cA(\lambda, \theta; \mu, \nu)$ as defined in the equations 
(4.2)-(4.5) of paper \cite{Pusztai_NPB2011}. In \cite{Pusztai_NPB2011} 
we saw that $\cA$ has many peculiar algebraic properties, that we wish to 
generalize for the proposed hyperbolic Lax matrix $L$ in the rest of this 
section.

%%%%%%%%%%%%%%%%%%%%%%%%%%%%%%%%%%%%%%%%%%%%%%%%%%%%%%%%%%%%%%%%%%%%%%%%%%%%%%%
%%% SUBSECTION: The Lax matrix takes values in U(n, n)
%%%%%%%%%%%%%%%%%%%%%%%%%%%%%%%%%%%%%%%%%%%%%%%%%%%%%%%%%%%%%%%%%%%%%%%%%%%%%%%
\subsection{The matrix $L$ and the Lie group $U(n, n)$}
\label{SUBSECTION_Lax_mat_and_Lie_group}
By inspecting the matrix entries \eqref{L}, it is obvious that $L$ is 
Hermitian. However, it is a less trivial fact that $L$ is closely tied with 
the non-compact Lie group $U(n, n)$ \eqref{G}. The purpose of this subsection 
is to explore this surprising relationship.

%%%%%%%%%%%%%%%%%%%%%%%%%%%%%%%%%%%%%%%%%%%%%%%%%%%%%%%%%%%%%%%%%%%%%%%%%%%%%%%
%%% LEMMA: L takes values in U(n, n)
%%%%%%%%%%%%%%%%%%%%%%%%%%%%%%%%%%%%%%%%%%%%%%%%%%%%%%%%%%%%%%%%%%%%%%%%%%%%%%%
\begin{PROPOSITION}
\label{PROPOSITION_L_in_G}
The matrix $L$ \eqref{L} obeys the quadratic equation $L C L = C$. In other 
words, the matrix valued function $L$ takes values in the Lie group $U(n, n)$.
\end{PROPOSITION}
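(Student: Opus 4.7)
The plan is to verify the matrix identity $LCL=C$ by direct computation. Since $C_{p,q}=\delta_{q,\sigma(p)}$, where $\sigma$ is the involution on $\bN_N$ defined by $\sigma(a)=n+a$ and $\sigma(n+a)=a$ for $a\in\bN_n$, we have $(LCL)_{k,l}=\sum_{p=1}^N L_{k,p}L_{\sigma(p),l}$, and the crucial identity $\Lambda_{\sigma(p)}=-\Lambda_p$ converts the two denominators into $\sinh(\ri\mu+\Lambda_k-\Lambda_p)$ and $\sinh(\ri\mu-\Lambda_p-\Lambda_l)$. The main technical tool is the partial fraction identity
$$\frac{1}{\sinh(a)\sinh(b)}=\frac{\coth(b)-\coth(a)}{\sinh(a-b)},$$
applied with $a-b=\Lambda_k+\Lambda_l$ independent of the summation index $p$. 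This pulls out a global prefactor $1/\sinh(\Lambda_k+\Lambda_l)$ and splits each $p$-sum into two pieces, each containing a single $\coth$ depending linearly on $\Lambda_p$. Once $LCL=C$ is established, the manifest Hermiticity of $L$ immediately upgrades the statement to $L\in U(n,n)$.

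Expanding the bilinear numerator
$$\bigl[\ri\sin(\mu)F_k\bar F_p+\ri\sin(\mu-\nu)C_{k,p}\bigr]\bigl[\ri\sin(\mu)F_{\sigma(p)}\bar F_l+\ri\sin(\mu-\nu)C_{\sigma(p),l}\bigr]$$
produces four contributions. The two mixed terms and the pure $C\cdot C$ term collapse to at most two summands each, because the Kronecker deltas freeze $p$ to $\sigma(k)$ or $l$. The pure $F\bar F$ contribution is the heart of the calculation and contains the bilinear $\bar F_p F_{\sigma(p)}$. Using the explicit definitions in \eqref{F}, together with $\overline{z_a}=z_{n+a}$ and $u_a=u_{n+a}$, a direct check gives the key simplification
$$\bar F_p F_{\sigma(p)}=z_{\sigma(p)}\qquad(p\in\bN_N),$$
so the $p$-dependence of the heavy term is governed entirely by a single sequence, namely the $z_p$'s.

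After the partial fraction step, the pure $F\bar F$ contribution reduces to sums of the shape $\sum_{p=1}^N z_p\coth(\ri\mu+\Lambda_p+y)$ for appropriate values of $y\in\{\pm\Lambda_k,\pm\Lambda_l\}$. To evaluate these, I would interpret each $z_p$ as (a constant multiple of) a residue of the single meromorphic function
$$\phi(x)=-\frac{\sinh(\ri\nu+2x)}{\sinh(2x)}\prod_{c=1}^{n}\frac{\sinh(\ri\mu+x-\lambda_c)\sinh(\ri\mu+x+\lambda_c)}{\sinh(x-\lambda_c)\sinh(x+\lambda_c)}$$
at its poles $x=\pm\lambda_c$, and then compute the required sum as a residue sum of $\phi(x)\coth(\ri\mu+x+y)$, viewed as a rational function of $e^{2x}$ with controllable behaviour at infinity. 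I expect the hardest step to be precisely this residue evaluation, because $\phi$ has an additional pole at $x=0$ stemming from the $\sinh(2x)$ factor in $z_j$; the contribution from that pole must be matched exactly by the $\ri\sin(\mu-\nu)\,C_{k,l}$ part of $L$. In other words, the compatibility between the two coupling parameters $\mu$ and $\nu$ is precisely what is needed for $LCL=C$ to hold, and assembling all contributions over the common denominator $\sinh(\Lambda_k+\Lambda_l)$ will give vanishing off $l=\sigma(k)$ and the value $1$ on the locus $l=\sigma(k)$, completing the proof.
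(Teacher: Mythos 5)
You take a genuinely different route from the paper. The paper first uses symmetry relations (obtained by interchanging or flipping signs of individual $\lambda$ components) to reduce the claim to showing $(LCL)_{a,b}=0$ for $a,b\in\bN_n$ and $(LCL)_{a,n+a}=1$ for $a\in\bN_n$; for each remaining entry it replaces $\lambda_a$ by $\lambda_a+w$ and shows the resulting function of $w$ is entire, $2\pi\ri$-periodic and bounded, so Liouville forces it to be the appropriate constant. You instead open up $\sum_p L_{k,p}L_{\sigma(p),l}$ directly, use the correct and useful identity $\bar F_p F_{\sigma(p)}=z_{\sigma(p)}$ together with a partial-fraction split, and propose to evaluate the resulting sums $\sum_p z_p\coth(\ri\mu+\Lambda_p+y)$ by summing residues of an interpolating meromorphic function of a new variable $x$. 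Both routes are complex-analytic, but they apply the function theory to different auxiliary variables, and yours would bypass the paper's preliminary symmetry reduction.

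However, the interpolating function $\phi$ you write down is incorrect: its residues at $x=\pm\lambda_c$ are \emph{not} constant multiples of $z_p$. Your product runs over all $c\in\bN_n$, whereas the definition \eqref{z_a} of $z_p$ explicitly excludes the factor with $c\equiv p\pmod n$. Near $x=\Lambda_p$ it is precisely this non-excluded factor that produces the simple pole, and it contributes the residue factor $\sinh(\ri\mu)\sinh(\ri\mu+2\Lambda_p)/\sinh(2\Lambda_p)$, so that
$$\mathrm{Res}_{x=\Lambda_p}\phi
  \;=\;\sinh(\ri\mu)\,\frac{\sinh(\ri\mu+2\Lambda_p)}{\sinh(2\Lambda_p)}\,z_p.$$
Consequently a residue sum of $\phi(x)\coth(\ri\mu+x+y)$ computes $\sum_p\frac{\sinh(\ri\mu+2\Lambda_p)}{\sinh(2\Lambda_p)}\,z_p\coth(\ri\mu+\Lambda_p+y)$, not the sum you need. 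Replacing $\sinh(2x)$ in the prefactor's denominator by $\sinh(\ri\mu+2x)$ does make the residue equal $\sinh(\ri\mu)\,z_p$, but it also introduces fresh poles at $x\equiv-\ri\mu/2\pmod{\ri\pi/2}$ whose residues, along with the pole at $x\equiv0\pmod{\ri\pi/2}$, must then be matched against the $\ri\sin(\mu-\nu)C$ pieces of $L$. Until this is repaired and the residue bookkeeping is actually carried out (which you flag as the hardest step but do not perform), the argument is not complete.
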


%%%%%%%%%%%%%%%%%%%%%%%%%%%%%%%%%%%%%%%%%%%%%%%%%%%%%%%%%%%%%%%%%%%%%%%%%%%%%%%
%%% PROOF
%%%%%%%%%%%%%%%%%%%%%%%%%%%%%%%%%%%%%%%%%%%%%%%%%%%%%%%%%%%%%%%%%%%%%%%%%%%%%%%
\begin{proof}
Take an arbitrary element $(\lambda, \theta) \in \bR^N$ 
satisfying the regularity condition $\bsLambda \in \mfa_\reg$. We start 
by observing that for each $a \in \bN_n$ the complex conjugates of $z_a$ 
\eqref{z_a} and $F_{n + a}$ \eqref{F} can be obtained by changing the 
sign of the single component $\lambda_a$ of $\lambda$. Therefore, if 
$a, b \in \bN_n$ are arbitrary indices, then by interchanging the components 
$\lambda_a$ and $\lambda_b$ of the $n$-tuple $\lambda$, the expression 
$(L C L)_{a, b} F_{a}^{-1} \bar{F}_{b}^{-1}$ readily transforms into 
$(L C L)_{n + a, n + b} F_{n + a}^{-1} \bar{F}_{n + b}^{-1}$. We capture 
this fact by writing
\be
    \frac{(L C L)_{a, b}}{F_{a} \bar{F}_{b}}
    \underset{\lambda_a \leftrightarrow \lambda_b}{\leadsto}
    \frac{(L C L)_{n + a, n + b}}{F_{n + a} \bar{F}_{n + b}}
    \qquad
    (a, b \in \bN_n).
\label{L1_1}
\ee
Similarly, if $a \neq b$, then from 
$(L C L)_{a, b} F_{a}^{-1} \bar{F}_{b}^{-1}$ we can recover 
$(L C L)_{n + a, b} F_{n + a}^{-1} \bar{F}_{b}^{-1}$
by exchanging $\lambda_a$ for $-\lambda_a$. Schematically, we have
\be
    \frac{(L C L)_{a, b}}{F_{a} \bar{F}_{b}}
    \underset{\lambda_a \leftrightarrow -\lambda_a}{\leadsto}
    \frac{(L C L)_{n + a, b}}{F_{n + a} \bar{F}_{b}}
    \qquad
    (a, b \in \bN_n, \, a \neq b).
\label{L1_2}
\ee
Furthermore, the expression $(L C L)_{a, b} F_{a}^{-1} \bar{F}_{b}^{-1}$ 
reproduces $(L C L)_{a, n + b} F_{a}^{-1} \bar{F}_{n + b}^{-1}$ upon 
swapping $\lambda_b$ for $-\lambda_b$, i.e.,
\be
    \frac{(L C L)_{a, b}}{F_{a} \bar{F}_{b}}
    \underset{\lambda_b \leftrightarrow -\lambda_b}{\leadsto}
    \frac{(L C L)_{a, n + b}}{F_{a} \bar{F}_{n + b}}
    \qquad(a, b \in \bN_n, a \neq b).
\label{L1_3}
\ee
Finally, the relationship between the remaining entries is given by the 
exchange 
\be
    (L C L)_{a, n + a}
    \underset{\lambda_a \leftrightarrow -\lambda_a}{\leadsto}
    (L C L)_{n + a, a}
    \qquad (a \in \bN_n).
\label{L1_4}
\ee
The message of the above equations \eqref{L1_1}-\eqref{L1_4} is quite 
evident. Indeed, in order to prove the desired matrix equation $L C L = C$, 
it does suffice to show that $(L C L)_{a, b} = 0$ for all $a, b \in \bN_n$, 
and also that $(L C L)_{a, n + a} = 1$ for all $a \in \bN_n$.

Recalling the formulae \eqref{F} and \eqref{L}, it is clear that
for all $a \in \bN_n$ we can write
\be
    \frac{(L C L)_{a, a}}{F_a \bar{F}_a}
    = 2 \Real 
    \bigg( 
        \frac{\ri \sin(\mu) z_a + \ri \sin(\mu - \nu)}
            {\sinh(\ri \mu + 2 \lambda_a)}
        -\sum_{\substack{c = 1 \\ (c \neq a)}}^n
            \frac{\sin(\mu)^2 z_c}
                {\sinh(\ri \mu + \lambda_a + \lambda_c) 
                \sinh(\ri \mu - \lambda_a + \lambda_c)}
    \bigg).
\label{LCL_aa}
\ee
To proceed further, we introduce a complex valued function $f_a$ depending 
on a single complex variable $w$ obtained simply by replacing $\lambda_a$ 
with $\lambda_a + w$ in the right-hand side of the above equation 
\eqref{LCL_aa}. Remembering \eqref{z_a}, it is obvious that the resulting 
function is meromorphic with at most first order poles at the points 
 \be
    w \equiv -\lambda_a, \, 
    w \equiv \pm \ri \mu / 2 - \lambda_a, \, 
    w \equiv \Lambda_j - \lambda_a \, (j \in \bN_N)
    \pmod{\ri \pi}.
\label{poles_aa}
\ee
However, by inspecting the terms appearing in the explicit expression
of $f_a$, a straightforward computation reveals immediately that the 
residue of $f_a$ at each of these points is zero, i.e., the singularities 
are in fact removable. As a consequence, $f_a$ can be uniquely extended 
onto the whole complex plane as a periodic entire function with period 
$2 \pi \ri$. Moreover, since $f_a(w)$ vanishes as $\Real(w) \to \infty$, 
the function $f_a$ is clearly bounded. By invoking Liouville's theorem,
we conclude that $f_a(w) = 0$ for all $w \in \bC$, and so
\be
    \frac{(L C L)_{a, a}}{F_a \bar{F}_a} = f_a(0) = 0.
\label{LCL_aa-zero}
\ee

Next, let $a, b \in \bN_n$ be arbitrary indices satisfying $a \neq b$. Keeping 
in mind the definitions \eqref{F} and \eqref{L}, we find at once that
\be
\begin{split}
    \frac{(L C L)_{a, b}}{F_a \bar{F}_b}
    = & \frac{\ri \sin(\mu) \big( \ri \sin(\mu) z_a 
                                    + \ri \sin(\mu - \nu) \big)}
            {\sinh(\ri \mu + \lambda_a - \lambda_b) 
            \sinh(\ri \mu + 2 \lambda_a)}
    + \frac{\ri \sin(\mu) \big( \ri \sin(\mu) \bar{z}_b
                                    + \ri \sin(\mu - \nu) \big)}
            {\sinh(\ri \mu + \lambda_a - \lambda_b)
            \sinh(\ri \mu - 2 \lambda_b)}
    \\
    & + \frac{\ri \sin(\mu) \bar{z}_a}
            {\sinh(\ri \mu - \lambda_a - \lambda_b)}
    + \frac{\ri \sin(\mu) z_b}{\sinh(\ri \mu + \lambda_a + \lambda_b)}
    \\
    & - \sum_{\substack{j = 1 \\ (j \neq a, b, n + a, n + b)}}^N
            \frac{\sin(\mu)^2 z_j}
                {\sinh(\ri \mu + \lambda_a + \Lambda_j)
                \sinh(\ri \mu - \lambda_b + \Lambda_j)}.
\end{split}
\label{LCL_ab}
\ee
Although this equation looks considerably more complicated than 
\eqref{LCL_aa}, it can be analyzed by the same techniques. Indeed, by 
replacing $\lambda_a$ with $\lambda_a + w$ in the right-hand side of 
\eqref{LCL_ab}, we may obtain a meromorphic function $f_{a, b}$ of 
$w \in \bC$ that has at most first order poles at the points
\be
    w \equiv -\lambda_a, \, 
    w \equiv -\ri \mu / 2 - \lambda_a, \, 
    w \equiv -\ri \mu - \lambda_a + \lambda_b, \, 
    w \equiv \Lambda_j - \lambda_a \, (j \in \bN_N)
    \pmod{\ri \pi}.
\label{poles_ab}
\ee
However, the residue of $f_{a, b}$ at each of these points turns out 
to be zero, and $f_{a, b}(w)$ also vanishes as $\Real(w) \to \infty$. Due
to Liouville's theorem we get $f_{a, b}(w) = 0$ for all $w \in \bC$, thus
\be
    \frac{(L C L)_{a, b}}{F_a \bar{F}_b} = f_{a, b}(0) = 0.
\label{LCL_ab-zero}
\ee

Finally, by taking an arbitrary $a \in \bN_n$, from \eqref{F} and \eqref{L}
we see that
\be
    (L C L)_{a, n + a} 
    = u_a^2
        + \frac{(\ri \sin(\mu) z_a + \ri \sin(\mu - \nu))^2}
            {\sinh(\ri \mu + 2 \lambda_a)^2}
        - \sum_{\substack{j = 1 \\ (j \neq a, n + a)}}^N
            \frac{\sin(\mu)^2 z_a z_j}
                {\sinh(\ri \mu + \lambda_a + \Lambda_j)^2}.
\label{LCL_n+a,a}
\ee
By replacing $\lambda_a$ with $\lambda_a + w$ in the right-hand side of 
\eqref{LCL_n+a,a}, we end up with a meromorphic function $f_{n + a}$ of 
the complex variable $w$ that has at most second order poles at the points
\be
    w \equiv -\lambda_a, \, 
    w \equiv -\ri \mu / 2 - \lambda_a, \, 
    w \equiv \Lambda_j - \lambda_a \, (j \in \bN_N)
    \pmod{\ri \pi}.
\label{poles_n+a,a}
\ee
Though the calculations are a bit more involved as in the previous cases, 
one can show that the singularities of $f_{n + a}$ are actually removable. 
Moreover, it is evident that $f_{n + a}(w) \to 1$ as $\Real(w) \to \infty$. 
Liouville's theorem applies again, implying that $f_{n + a}(w) = 1$ for 
all $w \in \bC$. Thus the relationship
\be
    (L C L)_{a, n + a} = f_{n + a}(0) = 1
\label{L_n+a,a-one}
\ee
also follows, whence the proof is complete.
\end{proof}
%%%%%%%%%%%%%%%%%%%%%%%%%%%%%%%%%%%%%%%%%%%%%%%%%%%%%%%%%%%%%%%%%%%%%%%%%%%%%%%

In the earlier paper \cite{Pusztai_NPB2011} we saw that the rational 
analogue of $L$ \eqref{L} takes values in the symmetric space $\exp(\mfp)$ 
\eqref{symm_space}. We find it reassuring that the proof of Lemma 7 of 
paper \cite{Pusztai_NPB2011} allows a straightforward generalization into 
the present hyperbolic context, too.

%%%%%%%%%%%%%%%%%%%%%%%%%%%%%%%%%%%%%%%%%%%%%%%%%%%%%%%%%%%%%%%%%%%%%%%%%%%%%%%
%%% LEMMA: L takes values in exp(\mfp)
%%%%%%%%%%%%%%%%%%%%%%%%%%%%%%%%%%%%%%%%%%%%%%%%%%%%%%%%%%%%%%%%%%%%%%%%%%%%%%%
\begin{LEMMA}
\label{LEMMA_L_in_exp_p}
At each point of the phase space we have $L \in \exp(\mfp)$.
\end{LEMMA}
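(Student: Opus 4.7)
The plan is to combine Proposition \ref{PROPOSITION_L_in_G} with the characterization in \eqref{exp_mfp_identification} of $\exp(\mfp)$ as the positive definite elements of $U(n,n)$. Since $L$ is Hermitian (as observed just before Proposition \ref{PROPOSITION_L_in_G}) and by Proposition \ref{PROPOSITION_L_in_G} takes values in $U(n,n)$, the only remaining task is to show that the Hermitian matrix $L(\lambda, \theta)$ is positive definite at each point of the phase space $P$. Membership in $U(n,n)$ already forces $L$ to be invertible, hence its real eigenvalues are nonzero throughout $P$.

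My approach is a connectedness argument. The phase space $P = Q \times \bR^n$ is connected, and the number of positive eigenvalues of $L(\lambda, \theta)$ defines an integer-valued function on $P$ which is continuous: the eigenvalues of a Hermitian matrix depend continuously on its entries, and none of them can cross zero since $L$ is invertible everywhere. Consequently this integer is constant on $P$, and it suffices to exhibit a single point of $P$ at which $L > 0$.

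For this asymptotic check, I would fix $\theta \in \bR^n$ and follow the ray $\lambda_a^{(t)} = t(n + 1 - a)$ as $t \to +\infty$, which lies in $Q$ for every $t > 0$ and makes $|\Lambda_k - \Lambda_l| \to \infty$ for all $k \neq l$ while every modulus $u_j$ from \eqref{u_a} tends to $1$. Inspecting \eqref{L}, the diagonal entries $L_{k,k} = |F_k|^2$ tend to the positive numbers $e^{\pm \theta_a}$, whereas in every off-diagonal entry the numerator $\ri \sin(\mu) F_k \bar{F}_l + \ri \sin(\mu - \nu) C_{k,l}$ remains bounded while the denominator $\sinh(\ri \mu + \Lambda_k - \Lambda_l)$ has modulus diverging to infinity. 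Thus $L$ approaches the positive diagonal matrix $\diag(|F_1|^2, \ldots, |F_N|^2)$, so $L > 0$ for all sufficiently large $t$, and the connectedness argument then gives $L \in \exp(\mfp)$ on all of $P$. The main subtle point is the bounded-numerator/blowing-up-denominator bookkeeping in this last step, but given the explicit form \eqref{z_a}--\eqref{F} of $F$ and $u_j$ as products of hyperbolic-sine ratios that tend to pure phases, this is essentially routine.
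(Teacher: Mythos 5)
Your argument is correct, and it takes a genuinely different route from the paper's. The paper also reduces to showing that the Hermitian matrix $L$ is positive definite, but it deforms in the space of \emph{coupling parameters} rather than in the phase space: it first treats the degenerate case $\sin(\mu - \nu) = 0$, where $L$ collapses to a hyperbolic Cauchy-type matrix whose leading principal minors are shown to be positive via the generalized Cauchy determinant formula (equation (1.2) of \cite{Ruij_CMP1988}) together with Sylvester's criterion, and then connects a general admissible $(\mu,\nu)$ to such a degenerate pair by a continuous path along which $L$ stays invertible (by Proposition \ref{PROPOSITION_L_in_G}), so the smallest eigenvalue never crosses zero. You instead fix $(\mu,\nu)$, use connectedness of $P = Q \times \bR^n$ and the local constancy of the signature of an invertible Hermitian family, and verify positivity at a single ``far-out'' configuration $\lambda^{(t)} = t(n,\,n-1,\,\ldots,\,1)$ where $L$ tends to the positive diagonal matrix $\diag(e^{\theta_1},\ldots,e^{\theta_n},e^{-\theta_1},\ldots,e^{-\theta_n})$. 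Both are valid homotopy-to-a-computable-case arguments; yours avoids invoking the Cauchy determinant formula and is somewhat more elementary, while the paper's version establishes positivity at a concrete interior point (rather than asymptotically) and has the minor side benefit of exhibiting a regime where $L$ reduces to a classical determinant formula, which fits the paper's emphasis on explicit algebraic structure. One could also read the two choices of deformation space as reflecting different intuitions: the paper deforms the model, you deform the state.
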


%%%%%%%%%%%%%%%%%%%%%%%%%%%%%%%%%%%%%%%%%%%%%%%%%%%%%%%%%%%%%%%%%%%%%%%%%%%%%%%
%%% PROOF
%%%%%%%%%%%%%%%%%%%%%%%%%%%%%%%%%%%%%%%%%%%%%%%%%%%%%%%%%%%%%%%%%%%%%%%%%%%%%%%
\begin{proof}
Recalling the identification \eqref{exp_mfp_identification} and Proposition 
\ref{PROPOSITION_L_in_G}, it is enough to prove that the Hermitian matrix
$L$ \eqref{L} is positive definite. For this reason, take an arbitrary point 
$(\lambda, \theta) \in P$ and keep it fixed. To prove the Lemma, below we 
offer a standard continuity argument by analyzing the dependence of $L$ 
solely on the coupling parameters.

In the very special case when the pair $(\mu, \nu)$ formed by the coupling 
parameters obey the relationship $\sin(\mu - \nu) = 0$, the Lax matrix $L$ 
\eqref{L} becomes a hyperbolic Cauchy-like matrix and the generalized Cauchy 
determinant formula (see e.g. equation (1.2) in \cite{Ruij_CMP1988}) readily 
implies the positivity of all its leading principal minors. Thus, recalling 
Sylvester's criterion, we conclude that $L$ is positive definite.

Turning to the general case, suppose that the pair $(\mu, \nu)$ is restricted 
only by the conditions displayed in \eqref{mu_nu_conds}. It is clear that in 
the $2$-dimensional space of the admissible coupling parameters characterized
by \eqref{mu_nu_conds} one can find a continuous curve with endpoints 
$(\mu, \nu)$ and $(\mu_0, \nu_0)$, where $\mu_0$ and $\nu_0$ satisfy the 
additional requirement $\sin(\mu_0 - \nu_0) = 0$. Since the dependence of the 
Hermitian matrix $L$ on the coupling parameters is smooth, along this curve 
the smallest eigenvalue of $L$ moves continuously. However, it cannot cross 
zero, since by Proposition \ref{PROPOSITION_L_in_G} the matrix $L$ remains 
invertible during this deformation. Therefore, since the eigenvalues of $L$ 
are strictly positive at the endpoint $(\mu_0, \nu_0)$, they must be strictly 
positive at the other endpoint $(\mu, \nu)$ as well.
\end{proof}
%%%%%%%%%%%%%%%%%%%%%%%%%%%%%%%%%%%%%%%%%%%%%%%%%%%%%%%%%%%%%%%%%%%%%%%%%%%%%%%

%%%%%%%%%%%%%%%%%%%%%%%%%%%%%%%%%%%%%%%%%%%%%%%%%%%%%%%%%%%%%%%%%%%%%%%%%%%%%%%
%%% SUBSECTION: Commutation relation and regularity
%%%%%%%%%%%%%%%%%%%%%%%%%%%%%%%%%%%%%%%%%%%%%%%%%%%%%%%%%%%%%%%%%%%%%%%%%%%%%%%
\subsection{Commutation relation and regularity}
\label{SUBSECTION_Commut_rel}
As Ruijsenaars has observed in his seminal paper on the translation 
invariant CMS and RS type pure soliton systems, one of the key ingredients 
in their analysis is the fact that their Lax matrices obey certain 
non-trivial commutation relations with some diagonal matrices (for details,
see equation (2.4) and the surrounding ideas in \cite{Ruij_CMP1988}). 
As a momentum map constraint, an analogous commutation relation has also 
played a key role in the geometric study of the rational $C_n$ and $BC_n$ 
RSvD systems (see \cite{Pusztai_NPB2011, Pusztai_NPB2012, 
Feher_Gorbe_JMP2014}). Due to its importance, our first goal in this 
subsection is to set up a Ruijsenaars type commutation relation for the 
proposed Lax matrix $L$ \eqref{L}, too. As a technical remark, we mention 
in passing that from now on we shall apply frequently the standard functional 
calculus on the linear operators $\ad_{\bsLambda}$ \eqref{ad} and 
$\wad_{\bsLambda}$ \eqref{wad} associated with the diagonal matrix 
$\bsLambda \in \mfc$ \eqref{bsLambda}.

%%%%%%%%%%%%%%%%%%%%%%%%%%%%%%%%%%%%%%%%%%%%%%%%%%%%%%%%%%%%%%%%%%%%%%%%%%%%%%%
%%% PROPOSITION: Commutation relation
%%%%%%%%%%%%%%%%%%%%%%%%%%%%%%%%%%%%%%%%%%%%%%%%%%%%%%%%%%%%%%%%%%%%%%%%%%%%%%%
\begin{LEMMA}
\label{LEMMA_commut_rel}
The matrix $L$ \eqref{L} and the diagonal matrix $e^{\bsLambda}$ obey the 
Ruijsenaars type commutation relation
\be
    e^{\ri \mu} e^{\ad_{\bsLambda}} L
    - e^{- \ri \mu} e^{-\ad_{\bsLambda}} L
    = 2 \ri \sin(\mu) F F^* + 2 \ri \sin(\mu - \nu) C.
\label{commut_rel} 
\ee
\end{LEMMA}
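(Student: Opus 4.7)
The plan is to verify the identity entrywise, exploiting the fact that conjugation by a diagonal matrix acts diagonally on matrix entries. Since $\bsLambda = \diag(\Lambda_1,\dots,\Lambda_N)$, the operator $\ad_{\bsLambda}$ on $\mfgl(N,\bC)$ has $E_{k,l}$ as eigenvector with eigenvalue $\Lambda_k - \Lambda_l$, so for any matrix $M$ the functional calculus gives
\[
    (e^{\pm \ad_{\bsLambda}} M)_{k, l} = e^{\pm (\Lambda_k - \Lambda_l)} M_{k, l}
    \qquad (k, l \in \bN_N).
\]
This is the only structural fact from Section \ref{SECTION_Preliminaries} I will need.

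Next, I would apply this to $M = L$ and combine the two terms on the left-hand side of \eqref{commut_rel}. Entry by entry, the LHS becomes
\[
    \bigl( e^{\ri \mu + \Lambda_k - \Lambda_l} - e^{-\ri \mu - \Lambda_k + \Lambda_l} \bigr) L_{k, l}
    = 2 \sinh(\ri \mu + \Lambda_k - \Lambda_l) \, L_{k, l}.
\]
The point is that this prefactor is precisely the denominator appearing in the definition \eqref{L} of $L_{k,l}$, so the two cancel and one is left with
\[
    2 \bigl( \ri \sin(\mu) F_k \bar{F}_l + \ri \sin(\mu - \nu) C_{k, l} \bigr),
\]
which is exactly the $(k,l)$-entry of $2 \ri \sin(\mu) F F^* + 2 \ri \sin(\mu - \nu) C$, since $(FF^*)_{k,l} = F_k \bar{F}_l$.

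There is essentially no obstacle here; the statement is a tautology once one recognizes that $L$ has been \emph{defined} by dividing the desired right-hand side by $\sinh(\ri \mu + \Lambda_k - \Lambda_l)$. The only point worth a brief remark is that the regularity assumption $\bsLambda \in \mfa_{\reg}$ (which holds throughout $P$ because $\mfc \subseteq \mfa_\reg$) guarantees that the denominators $\sinh(\ri\mu + \Lambda_k - \Lambda_l)$ are nonzero for $k\neq l$, while for $k=l$ the number $\sinh(\ri\mu)$ is nonzero by \eqref{mu_nu_conds}; so the cancellation above is legitimate at every point of the phase space. In short, the lemma is a direct verification, and I would present it as such in a few lines.
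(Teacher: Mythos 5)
Your proof is correct and is essentially identical to the paper's: both compute $\bigl(e^{\ri\mu}e^{\ad_{\bsLambda}}L - e^{-\ri\mu}e^{-\ad_{\bsLambda}}L\bigr)_{k,l} = 2\sinh(\ri\mu+\Lambda_k-\Lambda_l)L_{k,l}$ and observe that this cancels the denominator in the definition \eqref{L}. Your added remark about the regularity condition guaranteeing nonvanishing denominators is a harmless (and sensible) bit of extra care, but the argument is the same.
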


%%%%%%%%%%%%%%%%%%%%%%%%%%%%%%%%%%%%%%%%%%%%%%%%%%%%%%%%%%%%%%%%%%%%%%%%%%%%%%%
%%% PROOF
%%%%%%%%%%%%%%%%%%%%%%%%%%%%%%%%%%%%%%%%%%%%%%%%%%%%%%%%%%%%%%%%%%%%%%%%%%%%%%%
\begin{proof}
Recalling the matrix entries of $L$, for all $k, l \in \bN_N$ we can 
write that
\be
\begin{split}
    & \left(
        e^{\ri \mu} e^{\ad_{\bsLambda}} L
        - e^{- \ri \mu} e^{-\ad_{\bsLambda}} L
    \right)_{k, l}
    = \left(
        e^{\ri \mu} e^{\bsLambda} L e^{- \bsLambda}
        - e^{- \ri \mu} e^{- \bsLambda} L e^{\bsLambda}
    \right)_{k, l} 
    \\
    & = e^{\ri \mu} e^{\Lambda_k} L_{k, l} e^{- \Lambda_l}
        - e^{- \ri \mu} e^{- \Lambda_k} L_{k, l} e^{\Lambda_l} 
    = 2 \sinh(\ri \mu + \Lambda_k - \Lambda_l) L_{k, l}
    \\
    & = 2 \ri \sin(\mu) F_k \bar{F}_l + 2 \ri \sin(\mu - \nu) C_{k, l} 
    = \left(
        2 \ri \sin(\mu) F F^* + 2 \ri \sin(\mu - \nu) C
    \right)_{k, l},
\end{split}
\label{commut_rel_proof}
\ee
thus \eqref{commut_rel} follows at once. 
\end{proof}
%%%%%%%%%%%%%%%%%%%%%%%%%%%%%%%%%%%%%%%%%%%%%%%%%%%%%%%%%%%%%%%%%%%%%%%%%%%%%%%

Though the proof of Lemma \ref{LEMMA_commut_rel} is almost trivial, it 
proves to be quite handy in the forthcoming calculations. In particular, 
based on the commutation relation \eqref{commut_rel}, we shall now prove 
that the spectrum of $L$ is \emph{simple}. Heading toward our present goal, 
first let us recall that Lemma \ref{LEMMA_L_in_exp_p} tells us that 
$L \in \exp(\mfp)$. Therefore, as we can infer easily from 
\eqref{mfp_and_mfa_and_K}, one can find some $y \in K$ and a real $n$-tuple 
$\htheta = (\htheta_1, \ldots, \htheta_n) \in \bR^n$ satisfying
\be
    \htheta_1 \geq \ldots \geq \htheta_n \geq 0,
\label{htheta_assumptions}
\ee
such that with the diagonal matrix
\be
    \hbsTheta 
    = \diag(\hTheta_1, \ldots, \hTheta_N)
    = \diag(\htheta_1, \ldots, \htheta_n, 
            -\htheta_1, \ldots, -\htheta_n)
    \in \mfa   
\label{hbsTheta}
\ee
we can write
\be
    L = y e^{2 \hbsTheta} y^{-1}.
\label{L_diagonalized}
\ee
Now, upon defining
\be
    \hat{L} = y^{-1} e^{2 \bsLambda} y \in \exp(\mfp)
    \quad \text{and} \quad
    \hat{F} = e^{-\hbsTheta} y^{-1} e^{\bsLambda} F \in \bC^N,
\label{hat_L_and_hat_F}
\ee
for these new objects we can also set up a commutation relation analogous 
to \eqref{commut_rel}. Indeed, from \eqref{commut_rel} one can derive that 
\be
    e^{\ri \mu} e^{- \hbsTheta} \hat{L} e^{\hbsTheta}
    - e^{- \ri \mu} e^{\hbsTheta} \hat{L} e^{- \hbsTheta}
    = 2 \ri \sin(\mu) \hat{F} \hat{F}^* + 2 \ri \sin(\mu - \nu) C.
\label{hat_commut_rel}
\ee
Componentwise, from \eqref{hat_commut_rel} we conclude that 
\be
    \hat{L}_{k, l} =
    \frac{\ri \sin(\mu) \hat{F}_k \bar{\hat{F}}_l 
            + \ri \sin(\mu - \nu) C_{k, l}}
        {\sinh(\ri \mu - \hTheta_k + \hTheta_l)}
    \qquad
    (k, l \in \bN_N).
\label{hat_L_entries}
\ee
Since $\hat{L}$ \eqref{hat_L_and_hat_F} is a positive definite matrix, 
its diagonal entries are strictly positive. Therefore, by exploiting
\eqref{hat_L_entries}, we can write
\be
    0 < \hat{L}_{k, k} = \vert \hat{F}_k \vert^2.
\label{hat_F_comp}
\ee
The upshot of this trivial observation is that $\hat{F}_k \neq 0$ for all 
$k \in \bN_N$.

To proceed further, notice that for the inverse matrix 
$\hat{L}^{-1} = C \hat{L} C$ we can also cook up an equation analogous to 
\eqref{hat_commut_rel}. Indeed, by simply multiplying both sides of 
\eqref{hat_commut_rel} with the matrix $C$ \eqref{C}, we obtain
\be
    e^{\ri \mu} e^{\hbsTheta} \hat{L}^{-1} e^{- \hbsTheta}
    - e^{- \ri \mu} e^{- \hbsTheta} \hat{L}^{-1} e^{\hbsTheta}
    = 2 \ri \sin(\mu) (C \hat{F}) (C \hat{F})^* + 2 \ri \sin(\mu - \nu) C,
\label{hat_commut_rel_inv}
\ee
that leads immediately to the matrix entries
\be
    (\hat{L}^{-1})_{k, l} =
    \frac{\ri \sin(\mu) (C \hat{F})_k \overline{(C \hat{F})}_l 
            + \ri \sin(\mu - \nu) C_{k, l}}
        {\sinh(\ri \mu + \hTheta_k - \hTheta_l)}
    \qquad
    (k, l \in \bN_N).
\label{hat_L_inv_entries}
\ee
For further reference, we now spell out the trivial equation
\be 
    \delta_{k, l} = \sum_{j = 1}^N \hat{L}_{k, j} (\hat{L}^{-1})_{j, l}
\label{hat_trivial}
\ee
for certain values of $k, l \in \bN_N$. First, by plugging the explicit 
formulae \eqref{hat_L_entries} and \eqref{hat_L_inv_entries} into the 
relationship \eqref{hat_trivial}, with the special choice of indices 
$k = l = a \in \bN_n$ one finds that
\be
\begin{split}
    0 
    = & 1 + \frac{\sin(\mu - \nu)^2}{\sinh(\ri \mu - 2 \htheta_a)^2}
        + \frac{2 \sin(\mu) \sin(\mu - \nu) 
                \hat{F}_a \bar{\hat{F}}_{n + a}}
            {\sinh(\ri \mu - 2 \htheta_a)^2} \\
    & + \sin(\mu)^2 \hat{F}_a \bar{\hat{F}}_{n + a}
        \sum_{c = 1}^n 
            \left(
                \frac{\bar{\hat{F}}_c \hat{F}_{n + c}}
                    {\sinh(\ri \mu - \htheta_a + \htheta_c)^2}
                + \frac{\hat{F}_c \bar{\hat{F}}_{n + c}}
                    {\sinh(\ri \mu - \htheta_a - \htheta_c)^2}
            \right).
\end{split}
\label{rel_1}
\ee
Second, if $k = a$ and $l = n + a$ with some $a \in \bN_n$, then from 
\eqref{hat_trivial} we obtain
\be
\begin{split}
    & \sin(\mu)^2 
        \sum_{c = 1}^n 
            \left(
                \frac{\bar{\hat{F}}_c \hat{F}_{n + c}}
                    {\sinh(\ri \mu - \htheta_a + \htheta_c)
                    \sinh(\ri \mu + \htheta_c + \htheta_a)}
                + \frac{\hat{F}_c \bar{\hat{F}}_{n + c}}
                    {\sinh(\ri \mu - \htheta_a - \htheta_c)
                    \sinh(\ri \mu - \htheta_c + \htheta_a)}
            \right) 
    \\
    & \quad
    = \ri \sin(\mu - \nu) 
    \left(
        \frac{1}{\sinh(\ri \mu - 2 \htheta_a)}
        + \frac{1}{\sinh(\ri \mu + 2 \htheta_a)}
    \right). 
\end{split}
\label{rel_2}
\ee
Third, if $k = a$ and $l = b$ with some $a, b \in \bN_n$ satisfying 
$a \neq b$, then the relationship \eqref{hat_trivial} immediately leads to 
the equation
\be
\begin{split}
    & \sin(\mu)^2 
        \sum_{c = 1}^n 
            \left(
                \frac{\bar{\hat{F}}_c \hat{F}_{n + c}}
                    {\sinh(\ri \mu - \htheta_a + \htheta_c)
                    \sinh(\ri \mu + \htheta_c - \htheta_b)}
                + \frac{\hat{F}_c \bar{\hat{F}}_{n + c}}
                    {\sinh(\ri \mu - \htheta_a - \htheta_c)
                    \sinh(\ri \mu - \htheta_c - \htheta_b)}
            \right) 
    \\
    & \quad
    = - \frac{\sin(\mu) \sin(\mu - \nu)}
        {\sinh(\ri \mu - \htheta_a - \htheta_b)}  
    \left(
        \frac{1}{\sinh(\ri \mu - 2 \htheta_a)}
        + \frac{1}{\sinh(\ri \mu - 2 \htheta_b)}
    \right). 
\end{split}
\label{rel_3}
\ee
At this point we wish to emphasize that during the derivation of the 
last two equations \eqref{rel_2} and \eqref{rel_3} it proves to be 
essential that each component of the column vector $\hat{F}$ 
\eqref{hat_L_and_hat_F} is nonzero, as we have seen in \eqref{hat_F_comp}.

%%%%%%%%%%%%%%%%%%%%%%%%%%%%%%%%%%%%%%%%%%%%%%%%%%%%%%%%%%%%%%%%%%%%%%%%%%%%%%%
%%% LEMMA: REGULARITY OF L
%%%%%%%%%%%%%%%%%%%%%%%%%%%%%%%%%%%%%%%%%%%%%%%%%%%%%%%%%%%%%%%%%%%%%%%%%%%%%%%
\begin{LEMMA}
\label{LEMMA_regularity}
Under the additional assumption on the coupling parameters
\be
    \sin(2 \mu - \nu) \neq 0,
\label{coupling_cond_extra}
\ee
the spectrum of the matrix $L$ \eqref{L} is simple of the form
\be
    \mathrm{Spec}(L) 
    = \{ e^{\pm 2 \htheta_a} \, | \, a \in \bN_n \},
\label{spec_L}
\ee
where $\htheta_1 > \ldots > \htheta_n > 0$. In other words, $L$ is regular 
in the sense that $L \in \exp(\mfp_\reg)$.
\end{LEMMA}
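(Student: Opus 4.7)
The plan is to start from the diagonalization $L = y e^{2 \hbsTheta} y^{-1}$ established earlier, where $\hat\theta_1 \geq \dots \geq \hat\theta_n \geq 0$. Since the eigenvalues of $L$ are precisely the numbers $e^{\pm 2 \htheta_a}$ with $a \in \bN_n$, proving simplicity amounts to establishing the strict chain $\htheta_1 > \dots > \htheta_n > 0$ (once $\htheta_n > 0$ is secured, positive eigenvalues $e^{2\htheta_a}$ lie above $1$ while the $e^{-2\htheta_a}$ lie below, so no cross-clash can occur). I would separate this into two contradiction arguments: first ruling out coincidences $\htheta_a = \htheta_b$ for $1 \leq a < b \leq n$, then ruling out $\htheta_n = 0$. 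Both arguments rest on the machinery already set up in the excerpt, namely the commutation relation \eqref{hat_commut_rel}, the explicit entries \eqref{hat_L_entries}, the consequent relations \eqref{rel_1}--\eqref{rel_3}, and the positive definiteness of $\hat L$ (which holds because $\hat L \in \exp(\mfp)$ by construction).

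For a hypothetical coincidence $\htheta_a = \htheta_b$ with $a < b$ in $\bN_n$, I would look at the $2 \times 2$ principal submatrix of $\hat L$ with rows and columns indexed by $\{a, b\}$. Since $\hTheta_a = \hTheta_b$, every denominator $\sinh(\ri \mu - \hTheta_k + \hTheta_l)$ in this block reduces to $\ri \sin(\mu)$, while $C_{k, l} = 0$ throughout because both indices lie in $\bN_n$. Formula \eqref{hat_L_entries} then collapses the four entries to $\hat L_{k, l} = \hat F_k \bar{\hat F}_l$, exhibiting the submatrix as the rank-one outer product of $(\hat F_a, \hat F_b)^{T}$ with its conjugate transpose. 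Its determinant is zero, contradicting the positive definiteness that any principal submatrix of $\hat L$ must inherit.

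To rule out $\htheta_n = 0$, I would combine the relations \eqref{rel_1} and \eqref{rel_2} specialized at $a = n$. Under $\htheta_n = 0$ the boundary factors $\sinh(\ri \mu \pm 2 \htheta_n)$ both reduce to $\ri \sin(\mu)$, and the interior sum appearing in \eqref{rel_1} matches, up to an overall $\sin(\mu)^2$, precisely the left-hand side of \eqref{rel_2} at $a = n$. Substituting the value of that sum given by the right-hand side of \eqref{rel_2} back into \eqref{rel_1} makes the two terms proportional to $\hat F_n \bar{\hat F}_{2 n}$ cancel, leaving the clean identity $\sin(\mu)^2 = \sin(\mu - \nu)^2$. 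Writing this as $\sin(\mu) = \pm \sin(\mu - \nu)$ and running through the standard resolution of $\sin A = \sin B$ shows that each sign branch forces either $\sin(\nu) = 0$ or $\sin(2 \mu - \nu) = 0$, contradicting \eqref{mu_nu_conds} together with \eqref{coupling_cond_extra}.

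The main obstacle is clearly the $\htheta_n = 0$ case: while \eqref{rel_1} and \eqref{rel_2} become structurally compatible once the boundary denominators collapse, one has to align the two inner sums term-by-term so that \eqref{rel_2} genuinely eliminates the unknown bilinears $\hat F_c \bar{\hat F}_{n + c}$ from \eqref{rel_1}, and then tease apart the four sign branches of $\sin(\mu) = \pm \sin(\mu - \nu)$ to recognise $\sin(2 \mu - \nu) \neq 0$ as the missing exclusion beyond \eqref{mu_nu_conds}. The coincidence case, by contrast, is essentially immediate once \eqref{hat_L_entries} is on the table.
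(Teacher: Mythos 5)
Your argument for ruling out $\htheta_n = 0$ is essentially the paper's own: combine \eqref{rel_1} and \eqref{rel_2}, cancel the bilinears $\hat F_a \bar{\hat F}_{n+a}$, and reduce to $\sin(\mu)^2 = \sin(\mu-\nu)^2$, which factors as $\sin(\nu)\sin(2\mu-\nu)=0$ and is excluded by \eqref{mu_nu_conds} and \eqref{coupling_cond_extra}. The paper does this for an arbitrary $a$ with $\htheta_a = 0$, but because of the ordering \eqref{htheta_assumptions} this is equivalent to treating $a = n$ as you do; the "alignment" you worried about is automatic because \eqref{L2_rel_2} supplies exactly the sum that multiplies $\hat F_a \bar{\hat F}_{n+a}$ in \eqref{L2_rel_1}.

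Where you genuinely diverge from the paper is the coincidence case $\htheta_a = \htheta_b$. The paper specializes \eqref{rel_3} to $\htheta_a = \htheta_b$, plugs the result into \eqref{rel_1} to get $1 + \sin(\mu-\nu)^2/\sinh(\ri\mu - 2\htheta_a)^2 = 0$, and then separates real and imaginary parts of a hyperbolic identity to reach a contradiction, using $\htheta_a \neq 0$ from the first step. Your route is shorter and more conceptual: since $\hTheta_a = \hTheta_b$ and $C_{a,b}=0$ for $a,b \in \bN_n$, formula \eqref{hat_L_entries} collapses the $2\times 2$ principal block indexed by $\{a,b\}$ to the rank-one outer product $\hat F_{\{a,b\}}\hat F_{\{a,b\}}^*$ (the denominators all reduce to $\ri\sin(\mu)$), whose determinant vanishes, contradicting positive definiteness of $\hat L = y^*e^{2\bsLambda}y$. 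This avoids the trigonometric bookkeeping entirely and does not even invoke the extra hypothesis \eqref{coupling_cond_extra} for this half of the argument, which the paper's version also does not need but makes less visible. Both versions are correct; yours buys brevity and makes explicit that the spectral simplicity within each half $\{e^{2\htheta_a}\}$ and $\{e^{-2\htheta_a}\}$ needs only positive definiteness plus the structure of \eqref{hat_L_entries}, while \eqref{coupling_cond_extra} enters solely to keep the two halves separated (i.e., to ensure $\htheta_n \neq 0$).
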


%%%%%%%%%%%%%%%%%%%%%%%%%%%%%%%%%%%%%%%%%%%%%%%%%%%%%%%%%%%%%%%%%%%%%%%%%%%%%%%
%%% PROOF
%%%%%%%%%%%%%%%%%%%%%%%%%%%%%%%%%%%%%%%%%%%%%%%%%%%%%%%%%%%%%%%%%%%%%%%%%%%%%%%
\begin{proof}
First, let us \emph{suppose} that $\htheta_a = 0$ for some $a \in \bN_n$. 
With this particular index $a$, from equation \eqref{rel_1} we infer that
\be
\begin{split}
    0 
    = & 1 - \frac{\sin(\mu - \nu)^2}{\sin(\mu)^2}
        - \frac{2 \sin(\mu - \nu) 
                \hat{F}_a \bar{\hat{F}}_{n + a}}
            {\sin(\mu)} \\
    & + \sin(\mu)^2 \hat{F}_a \bar{\hat{F}}_{n + a}
        \sum_{c = 1}^n 
            \left(
                \frac{\bar{\hat{F}}_c \hat{F}_{n + c}}
                    {\sinh(\ri \mu + \htheta_c)^2}
                + \frac{\hat{F}_c \bar{\hat{F}}_{n + c}}
                    {\sinh(\ri \mu - \htheta_c)^2}
            \right),
\end{split}
\label{L2_rel_1}
\ee
while \eqref{rel_2} leads to the relationship
\be
    \sin(\mu)^2 
        \sum_{c = 1}^n 
            \left(
                \frac{\bar{\hat{F}}_c \hat{F}_{n + c}}
                    {\sinh(\ri \mu + \htheta_c)^2}
                + \frac{\hat{F}_c \bar{\hat{F}}_{n + c}}
                    {\sinh(\ri \mu - \htheta_c)^2}
            \right)
    = \frac{2 \sin(\mu - \nu)}{\sin(\mu)}.
\label{L2_rel_2}
\ee
Now, by plugging \eqref{L2_rel_2} into \eqref{L2_rel_1}, we obtain
\be
    0 
    = 1 - \frac{\sin(\mu - \nu)^2}{\sin(\mu)^2}
    = \frac{\sin(\mu)^2 - \sin(\mu - \nu)^2 }{\sin(\mu)^2}
    = \frac{\sin(\nu) \sin(2 \mu - \nu)}{\sin(\mu)^2},
\label{trig_contradiction_1}
\ee
which clearly contradicts the assumptions imposed in the equations 
\eqref{mu_nu_conds} and \eqref{coupling_cond_extra}. Thus, we are 
forced to conclude that for all $a \in \bN_n$ we have $\htheta_a \neq 0$.

Second, let us \emph{suppose} that $\htheta_a = \htheta_b$ for some 
$a, b \in \bN_n$ satisfying $a \neq b$. With these particular indices 
$a$ and $b$, equation \eqref{rel_3} takes the form
\be
    \sin(\mu)^2 
        \sum_{c = 1}^n 
            \left(
                \frac{\bar{\hat{F}}_c \hat{F}_{n + c}}
                    {\sinh(\ri \mu - \htheta_a + \htheta_c)^2}
                + \frac{\hat{F}_c \bar{\hat{F}}_{n + c}}
                    {\sinh(\ri \mu - \htheta_a - \htheta_c)^2}
            \right) 
    = - \frac{2 \sin(\mu) \sin(\mu - \nu)}{\sinh(\ri \mu - 2 \htheta_a)^2}.
\label{L2_rel_3}
\ee
Now, by plugging this formula into \eqref{rel_1}, we obtain immediately that
\be
    0 = 1 + \frac{\sin(\mu - \nu)^2}{\sinh(\ri \mu - 2 \htheta_a)^2},
\label{L2_rel_4}
\ee
which in turn implies that
\be
\begin{split}
    & \sin(\mu - \nu)^2 
    = - \sinh(\ri \mu - 2 \htheta_a)^2 \\
    & \quad
    = \sin(\mu)^2 \cosh(2 \htheta_a)^2 
        - \cos(\mu)^2 \sinh(2 \htheta_a)^2
        + \ri \sin(\mu) \cos(\mu) \sinh(4 \htheta_a).
\end{split}
\label{L2_ab_key}
\ee
Since $\htheta_a \neq 0$ and since $\sin(\mu) \neq 0$, the imaginary part 
of the above equation leads to the relation $\cos(\mu) = 0$, whence 
$\sin(\mu)^2 = 1$ also follows. Now, by plugging these observations into 
the real part of \eqref{L2_ab_key}, we end up with the contradiction
\be
    1 \geq \sin(\mu - \nu)^2 = \cosh(2 \htheta_a)^2 > 1.
\label{L2_contradiction}
\ee
Thus, if $a, b \in \bN_n$ and $a \neq b$, then necessarily we have 
$\htheta_a \neq \htheta_b$.
\end{proof}
%%%%%%%%%%%%%%%%%%%%%%%%%%%%%%%%%%%%%%%%%%%%%%%%%%%%%%%%%%%%%%%%%%%%%%%%%%%%%%%

Since the spectrum of $L$ \eqref{L} is simple, it follows that the
dependence of the eigenvalues on the matrix entries is smooth. Therefore,
recalling \eqref{spec_L}, it is clear that each $\htheta_c$ $(c \in \bN_n)$
can be seen as a smooth function on $P$ \eqref{P}, i.e.,
\be
    \htheta_c \in C^\infty(P).
\label{htheta_smooth}
\ee
To conclude this subsection, we also offer a few remarks on the additional
constraint appearing in \eqref{coupling_cond_extra}, that we keep in
effect in the rest of the paper. Naively, this assumption excludes a 
$1$-dimensional subset from the $2$-dimensional space of the parameters 
$(\mu, \nu)$. However, looking back to the Hamiltonian $H$ \eqref{H},
it is clear that the effective coupling constants of our van Diejen systems
are rather the positive numbers $\sin(\mu)^2$ and $\sin(\nu)^2$. Therefore, 
keeping in mind \eqref{mu_nu_conds}, on the parameters $\mu$ and $\nu$ we 
could have imposed the requirement, say,
\be
    (\mu, \nu) 
    \in \left( (0, \pi / 4) \times [ -\pi / 2, 0 ) \right)
        \cup \left( [\pi / 4, \pi / 2] \times ( 0, \pi / 2] \right),
\label{mu_nu_spec}
\ee 
at the outset. The point is that, under the requirement \eqref{mu_nu_spec}, 
the equation $\sin(2 \mu - \nu) = 0$ is equivalent to the pair of equations 
$\sin(\mu)^2 = 1 / 2$ and $\sin(\nu)^2 = 1$. To put it differently, our 
observation is that under the assumptions \eqref{mu_nu_conds} and 
\eqref{coupling_cond_extra} the pair $(\sin(\mu)^2, \sin(\nu)^2)$ formed by 
the relevant coupling constants can take on any values from the `square' 
$(0, 1] \times (0, 1]$, except the \emph{single point} $(1 / 2, 1)$.
From the proof of Lemma \ref{LEMMA_regularity}, especially from equation 
\eqref{L2_rel_2}, one may get the impression that even this very slight 
technical assumption can be relaxed by further analyzing the properties of 
column vector $\hat{F}$ \eqref{hat_L_and_hat_F}. However, we do not wish to 
pursue this direction in the present paper.

%%%%%%%%%%%%%%%%%%%%%%%%%%%%%%%%%%%%%%%%%%%%%%%%%%%%%%%%%%%%%%%%%%%%%%%%%%%%%%%
%%% SECTION: Analyzing the dynamics
%%%%%%%%%%%%%%%%%%%%%%%%%%%%%%%%%%%%%%%%%%%%%%%%%%%%%%%%%%%%%%%%%%%%%%%%%%%%%%%
\section{Analyzing the dynamics}
\label{SECTION_Dynamics}
In this section we wish to study the dynamics generated by the Hamiltonian
$H$ \eqref{H}. Recalling the formulae \eqref{u_a} and \eqref{L}, by the
obvious relationship
\be
    H = \sum_{c = 1}^n \cosh(\theta_c) u_c = \half \tr(L)
\label{H_and_u_and_L}
\ee
we can make the first contact of our van Diejen system with the proposed Lax 
matrix $L$ \eqref{L}. As an important ingredient of the forthcoming analysis,
let us introduce the Hamiltonian vector field $\bsX_H \in \mfX(P)$ with the
usual definition
\be
    \bsX_H [f] = \{ f, H \}
    \qquad
    (f \in C^\infty(P)).
\label{bsX_H}
\ee
Working with the convention \eqref{PBs}, for the time evolution of the 
global coordinate functions $\lambda_a$ and $\theta_a$ $(a \in \bN_n)$ 
we can clearly write
\begin{align}
    & \dot{\lambda}_a 
        = \bsX_H [\lambda_a]
        = \PD{H}{\theta_a}
        = \sinh(\theta_a) u_a,
    \label{lambda_dot}
    \\
    & \dot{\theta}_a 
        = \bsX_H [\theta_a]
        = -\PD{H}{\lambda_a}
        = - \sum_{c = 1}^n \cosh(\theta_c) u_c \PD{\ln(u_c)}{\lambda_a}.
    \label{theta_dot}
\end{align}
To make the right-hand side of \eqref{theta_dot} more explicit, let us 
display the logarithmic derivatives of the constituent functions $u_c$. 
Notice that for all $a \in \bN_n$ we can write
\be
    \PD{\ln(u_a)}{\lambda_a}
    = -\Real 
        \bigg(
            \frac{2 \ri \sin(\nu)}
                {\sinh(2 \lambda_a) \sinh(\ri \nu + 2 \lambda_a)}
            +\sum_{\substack{j = 1 \\ (j \neq a, n + a)}}^N
                \frac{\ri \sin(\mu)}
                    {\sinh(\lambda_a - \Lambda_j) 
                    \sinh(\ri \mu + \lambda_a - \Lambda_j)}
        \bigg),
\label{PD_u_a_lambda_a}
\ee
while if $c \in \bN_n$ and $c \neq a$, then we have
\be
    \PD{\ln(u_c)}{\lambda_a}
    = \Real
        \bigg(
            \frac{\ri \sin(\mu)}
                {\sinh(\lambda_a - \lambda_c) 
                \sinh(\ri \mu + \lambda_a - \lambda_c)}
            -\frac{\ri \sin(\mu)}
                {\sinh(\lambda_a + \lambda_c) 
                \sinh(\ri \mu + \lambda_a + \lambda_c)}
        \bigg).
\label{PD_u_c_lambda_a}
\ee
The rest of this section is devoted to the study of the Hamiltonian 
dynamical system \eqref{lambda_dot}-\eqref{theta_dot}.

%%%%%%%%%%%%%%%%%%%%%%%%%%%%%%%%%%%%%%%%%%%%%%%%%%%%%%%%%%%%%%%%%%%%%%%%%%%%%%%
%%% SUBSECTION: Completeness
%%%%%%%%%%%%%%%%%%%%%%%%%%%%%%%%%%%%%%%%%%%%%%%%%%%%%%%%%%%%%%%%%%%%%%%%%%%%%%%
\subsection{Completeness of the Hamiltonian vector field}
\label{SUBSECTION_Completeness}
Undoubtedly, the Hamiltonian \eqref{H} does not take the usual form one finds
in the standard textbooks on classical mechanics. It is thus inevitable that
we have even less intuition about the generated dynamics than in the case of
the `natural systems' characterized by a kinetic term plus a potential. To
get a finer picture about the solutions of the Hamiltonian dynamics 
\eqref{lambda_dot}-\eqref{theta_dot}, we start our study with a brief analysis
on the completeness of the Hamiltonian vector field $\bsX_H$ \eqref{bsX_H}.

As the first step, we introduce the strictly positive constant
\be
    \cS = \min \{ \vert \sin(\mu) \vert, \vert \sin(\nu) \vert \} \in (0, 1].
\label{cS}
\ee
Giving a glance at \eqref{u_a}, it is evident that
\be
    u_n 
    > \left( 1 + \frac{\sin(\nu)^2}{\sinh(2 \lambda_n)^2} \right)^\half
    > \frac{\vert \sin(\nu) \vert}{\sinh(2 \lambda_n)}
    \geq \frac{\cS}{\sinh(2 \lambda_n)},
\label{u_n_estim}
\ee
while for all $c \in \bN_{n - 1}$ we can write
\be
    u_c 
    > \left( 
        1 + \frac{\sin(\mu)^2}{\sinh(\lambda_c - \lambda_{c + 1})^2} 
    \right)^\half
    > \frac{\vert \sin(\mu) \vert}{\sinh(\lambda_c - \lambda_{c + 1})}
    \geq \frac{\cS}{\sinh(\lambda_c - \lambda_{c + 1})}.
\label{u_c_estim}
\ee
Keeping in mind the above trivial inequalities, we are ready to prove the 
following result.

%%%%%%%%%%%%%%%%%%%%%%%%%%%%%%%%%%%%%%%%%%%%%%%%%%%%%%%%%%%%%%%%%%%%%%%%%%%%%%%
%%% THEOREM: Completeness
%%%%%%%%%%%%%%%%%%%%%%%%%%%%%%%%%%%%%%%%%%%%%%%%%%%%%%%%%%%%%%%%%%%%%%%%%%%%%%%
\begin{THEOREM}
\label{THEOREM_Completeness}
The Hamiltonian vector field $\bsX_H$ \eqref{bsX_H} generated by the van Diejen
type Hamiltonian function $H$ \eqref{H} is complete. That is, the maximum
interval of existence of each integral curve of $\bsX_H$ is the whole real 
axis $\bR$.
\end{THEOREM}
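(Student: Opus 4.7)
The strategy is the standard one for classical Hamiltonian systems: show that a maximally defined integral curve cannot leave every compact subset of $P$ in finite time, so by a standard escape-time argument the interval of definition must be all of $\bR$. Since $H$ is constant along its own flow, every trajectory sits on a level set $\{H = H_0\}$, and the whole game is to convert this conservation law into quantitative bounds that prevent a trajectory from approaching the boundary of $P$, i.e., from having some $\lambda_c - \lambda_{c+1} \to 0$, some $\lambda_n \to 0$, some $\lambda_a \to \infty$, or some $|\theta_a| \to \infty$ in finite time.

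The key observation is that each factor $(1+\sin(\mu)^2/\sinh(\cdot)^2)^{1/2}$ and $(1+\sin(\nu)^2/\sinh(\cdot)^2)^{1/2}$ in \eqref{u_a} is $\geq 1$, so $u_a \geq 1$ on $P$. Since also $\cosh(\theta_a) \geq 1$ and every summand in \eqref{H} is positive, conservation of $H$ yields for each $a \in \bN_n$ and all $t$ in the maximum interval of existence
\begin{equation}
    \cosh(\theta_a(t))\, u_a(\lambda(t)) \;\leq\; H_0 := H(\lambda(0),\theta(0)).
\end{equation}
Using $u_a \geq 1$ this gives $\cosh(\theta_a(t)) \leq H_0$, whence $\theta_a$ is uniformly bounded. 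Using $\cosh(\theta_a) \geq 1$ it gives $u_a(\lambda(t)) \leq H_0$, and plugging this into the inequalities \eqref{u_n_estim} and \eqref{u_c_estim} produces positive lower bounds
\begin{equation}
    \sinh(2\lambda_n(t)) \;\geq\; \cS/H_0,
    \qquad
    \sinh(\lambda_c(t) - \lambda_{c+1}(t)) \;\geq\; \cS/H_0
    \quad (c \in \bN_{n-1}),
\end{equation}
so $\lambda_n$ stays bounded away from $0$ and no two consecutive particles can collide.

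To control $\lambda_1$ from above, I would use \eqref{lambda_dot}: $|\dot\lambda_a| = |\sinh(\theta_a)|\,u_a \leq \cosh(\theta_a)\,u_a \leq H_0$, so $|\lambda_a(t) - \lambda_a(0)| \leq H_0 |t|$ on the maximum interval. Combined with the previous step this shows that for any $T>0$, the portion of the trajectory with $|t| \leq T$ lies in the compact set
\begin{equation}
    K_T = \{(\lambda,\theta) \in P \,:\, |\lambda_a| \leq R_T,\ \lambda_c - \lambda_{c+1} \geq \delta,\ \lambda_n \geq \delta,\ \cosh(\theta_a) \leq H_0\}
\end{equation}
for appropriate $R_T, \delta > 0$.

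The standard escape-time lemma then forces the maximum interval to be all of $\bR$. The only slightly delicate point — and the one I expect to be the main obstacle — is confirming that $u_a \geq 1$ and that the one-term bounds \eqref{u_n_estim}, \eqref{u_c_estim} really do pin down both endpoints of the Weyl chamber; but the inequalities as stated in the paper already handle this directly, so the argument reduces to assembling these pieces.
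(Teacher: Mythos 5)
Your proof is correct and rests on the same fundamental ingredients as the paper's (conservation of $H$, the single-term bound $\cosh(\theta_a)\,u_a \leq H_0$, the lower estimates \eqref{u_n_estim}--\eqref{u_c_estim}, the linear-in-$t$ control of $\lambda_1$ from \eqref{lambda_dot}, and the escape-time lemma), but it assembles them in a slightly tighter way. The paper argues by contradiction: assuming a finite escape time $\beta$, it constructs a family of compact sets $Q_\eps\times\cC$, invokes the escape lemma to find times $t_\eps$ at which some gap drops below $\eps$, and then lets $\eps\to 0$ to violate $H(\gamma_0) > \cS/\sinh(\eps)$. You instead use $\cosh(\theta_a)\ge 1$ to read off $u_a(\lambda(t)) \le H_0$ directly and then \emph{invert} \eqref{u_n_estim}--\eqref{u_c_estim} to get the uniform-in-time a priori bounds $\sinh(2\lambda_n(t)) \ge \cS/H_0$ and $\sinh(\lambda_c(t)-\lambda_{c+1}(t)) \ge \cS/H_0$, so that the trajectory demonstrably stays in a single compact set $K_T$ on each bounded time interval and no $\eps$-parametrized contradiction is needed. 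This is a genuine streamlining: it eliminates the auxiliary family $Q_\eps$ and the limiting argument, and it makes the repelling nature of the dynamics explicit by producing a quantitative collision/boundary barrier depending only on $H_0$, $\mu$, $\nu$. One small stylistic caveat: the inequalities \eqref{u_n_estim}--\eqref{u_c_estim} are written with strict $>$ because of the extra factors $\geq 1$; for $n=1$ the first one degenerates to $\geq$, so it is safest to state your derived bounds with $\geq$ throughout, which is what you actually use and is all that the escape-time argument requires.
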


%%%%%%%%%%%%%%%%%%%%%%%%%%%%%%%%%%%%%%%%%%%%%%%%%%%%%%%%%%%%%%%%%%%%%%%%%%%%%%%
%%% PROOF
%%%%%%%%%%%%%%%%%%%%%%%%%%%%%%%%%%%%%%%%%%%%%%%%%%%%%%%%%%%%%%%%%%%%%%%%%%%%%%%
\begin{proof}
Take an arbitrary point
\be
    \gamma_0 = (\lambda^{(0)}, \theta^{(0)}) \in P,
\label{gamma_0}
\ee
and let
\be
    \gamma \colon (\alpha, \beta) \to P,
    \qquad
    t \mapsto \gamma(t) = (\lambda(t), \theta(t))
\label{gamma}
\ee
be the unique maximally defined integral curve of $\bsX_H$ with 
$-\infty \leq \alpha < 0 < \beta \leq \infty$ satisfying the initial condition
$\gamma(0) = \gamma_0$. Since the Hamiltonian $H$ is smooth, the existence,
the uniqueness, and also the smoothness of such a maximal solution are obvious.
Our goal is to show that for the domain of the  maximally defined trajectory
$\gamma$ \eqref{gamma} we have  $(\alpha, \beta) = \bR$; that is, 
$\alpha = -\infty$ and $\beta = \infty$.

Arguing by contradiction, first let us \emph{suppose} that $\beta < \infty$. 
Since the Hamiltonian $H$ is a first integral, for all $t \in (\alpha, \beta)$ 
and for all $a \in \bN_n$ we can write
\be
    H(\gamma_0) 
    = H(\gamma(t)) 
    = \sum_{c = 1}^n \cosh(\theta_c(t)) u_c(\lambda(t))
    > \cosh(\theta_a(t)) u_a(\lambda(t)),
\label{estim_1}
\ee
whence the estimation
\be
    H(\gamma_0) 
    > \cosh(\vert \theta_a(t) \vert) \geq \half e^{\vert \theta_a(t) \vert}
\label{estim_2}
\ee
is also immediate. Thus, upon introducing the cube
\be
    \cC
    = [-\ln(2 H(\gamma_0)), \ln(2 H(\gamma_0))]^n 
        \subseteq \bR^n,
\label{cube}
\ee
from \eqref{estim_2} we infer at once that
\be
    \theta(t) \in \cC
    \qquad
    (t \in (\alpha, \beta)).
\label{theta_in_cube}
\ee
Turning to the equations \eqref{lambda_dot} and \eqref{estim_1}, we can 
cook up an estimation on the growing of the vector $\lambda(t)$, too. 
Indeed, we see that
\be
    \vert \dot{\lambda}_1(t) \vert
    = \sinh(\vert \theta_1(t) \vert) u_1(\lambda(t))
    \leq \cosh(\vert \theta_1(t) \vert) u_1(\lambda(t))
    < H(\gamma_0)
    \qquad
    (t \in (\alpha, \beta)),
\label{lambda_dot_estim}
\ee 
that implies immediately that for all $t \in [0, \beta)$ we have
\be
    \vert \lambda_1(t) - \lambda_1^{(0)} \vert
    = \vert \lambda_1(t) - \lambda_1(0) \vert
    = \left \vert \int_0^t \dot{\lambda}_1(s) \, \dd s \right \vert
    \leq \int_0^t \vert \dot{\lambda}_1(s) \vert \, \dd s
    \leq t H(\gamma_0) < \beta H(\gamma_0).
\label{lambda_1_estim}
\ee
Therefore, with the aid of the strictly positive constant 
\be
    \rho = \lambda_1^{(0)} + \beta H(\gamma_0) \in (0, \infty), 
\label{rho}
\ee    
we end up with the estimation
\be
    \lambda_1(t) 
    = \vert \lambda_1(t) \vert
    = \vert \lambda_1^{(0)} + \lambda_1(t) - \lambda_1^{(0)} \vert
    \leq \vert \lambda_1^{(0)} \vert 
        + \vert \lambda_1(t) - \lambda_1^{(0)} \vert
    < \rho
    \qquad
    (t \in [0, \beta)).
\label{lambda_1_estim_OK}
\ee
Since $\lambda(t)$ moves in the configuration space $Q$ \eqref{Q}, the
above observation entails that
\be
    \rho > \lambda_1(t) > \ldots > \lambda_n(t) > 0
    \qquad
    (t \in [0, \beta)).
\label{lambda_estim}
\ee

To proceed further, now for all $\eps > 0$ we define the subset 
$Q_\eps \subseteq \bR^n$ consisting of those real $n$-tuples 
$x = (x_1, \dots, x_n) \in \bR^n$ that satisfy the inequalities
\be
    \rho \geq x_1
    \text{ and }
    2 x_n \geq \eps
    \text{ and }
    x_c \geq x_{c + 1} + \eps
    \text{ for all } c \in \bN_{n - 1},
\label{Q_eps_def}
\ee
simultaneously. In other words, 
\be
    Q_\eps 
    = \{ x \in \bR^n \, | \, \rho \geq x_1 \} 
        \cap \{ x \in \bR^n \, | \, 2 x_n \geq \eps \}
        \cap
            \bigcap_{c = 1}^{n - 1} 
                \{ x \in \bR^n \, | \, x_c - x_{c + 1} \geq \eps \}.
\label{Q_eps_OK}
\ee
Notice that $Q_\eps$ is a bounded and closed subset of $\bR^n$. Moreover, by
comparing the definitions \eqref{Q} and \eqref{Q_eps_def}, it is evident that
$Q_\eps \subseteq Q$. Since the cube $\cC$ \eqref{cube} is also a compact 
subset of $\bR^n$, we conclude that the Cartesian product $Q_\eps \times \cC$ 
is a \emph{compact} subset of the phase space $P$ \eqref{P}. Therefore, due 
to the assumption $\beta < \infty$, after some time the maximally defined
trajectory $\gamma$ \eqref{gamma} escapes from $Q_\eps \times \cC$, as can
be read off from any standard reference on dynamical systems (see e.g.
Theorem 2.1.18 in \cite{AM}). More precisely, there is some 
$\tau_\eps \in [0,\beta)$ such that
\be
    (\lambda(t), \theta(t)) 
    \in 
    P \setminus (Q_\eps \times \cC)
    = \left( (Q \setminus Q_\eps) \times \cC \right)
        \cup
        \left( Q \times (\bR^n \setminus \cC) \right)    
    \qquad
    (t \in (\tau_\eps, \beta)),
\label{gamma_escapes}
\ee
where the union above is actually a disjoint union. For instance,
due to the relationship \eqref{theta_in_cube}, at the mid-point
\be
    t_\eps = \frac{\tau_\eps + \beta}{2} \in (\tau_\eps, \beta)
\label{t_eps}
\ee
we can write that
\be
    \lambda(t_\eps) 
    \in 
    Q \setminus Q_\eps \subseteq \bR^n \setminus Q_\eps.
\label{lambda_escapes}
\ee 
Therefore, simply by taking the complement of $Q_\eps$ \eqref{Q_eps_OK},
and also keeping in mind \eqref{lambda_1_estim_OK}, it is evident that 
\be
    \min 
    \{ \lambda_1(t_\eps) - \lambda_2(t_\eps),
        \dots,
        \lambda_{n - 1}(t_\eps) - \lambda_n(t_\eps),
        2 \lambda_n(t_\eps)
    \} < \eps,
\label{key_for_completeness}
\ee
which in turn implies that
\be
    \max 
    \left\{ 
        \frac{1}{\sinh(\lambda_1(t_\eps) - \lambda_2(t_\eps))},
        \dots,
        \frac{1}{\sinh(\lambda_{n - 1}(t_\eps) - \lambda_n(t_\eps))},
        \frac{1}{\sinh(2 \lambda_n(t_\eps))}
    \right\} 
    > \frac{1}{\sinh(\eps)}.
\label{key_for_completeness_OK}
\ee
Now, since $\eps > 0$ was arbitrary, the estimations \eqref{u_n_estim} and 
\eqref{u_c_estim} immediately lead to the contradiction
\be
\begin{split}
    H(\gamma_0) 
    & = H(\gamma(t_\eps)) 
    = \sum_{c = 1}^n \cosh(\theta_c(t_\eps)) u_c(\lambda(t_\eps)) 
    \\
    & \geq \sum_{c = 1}^n u_c(\lambda(t_\eps))
    > \frac{\cS}{\sinh(2 \lambda_n(t_\eps))}
        + \sum_{c = 1}^{n - 1} 
            \frac{\cS}{\sinh(\lambda_c(t_\eps) - \lambda_{c + 1}(t_\eps))}
    > \frac{\cS}{\sinh(\eps)}.
\end{split}
\label{contradiction}
\ee
Therefore, necessarily, $\beta = \infty$.

Either by repeating the above ideas, or by invoking a time-reversal argument, 
one can also show that $\alpha = -\infty$, whence the proof is complete.
\end{proof}
%%%%%%%%%%%%%%%%%%%%%%%%%%%%%%%%%%%%%%%%%%%%%%%%%%%%%%%%%%%%%%%%%%%%%%%%%%%%%%%

%%%%%%%%%%%%%%%%%%%%%%%%%%%%%%%%%%%%%%%%%%%%%%%%%%%%%%%%%%%%%%%%%%%%%%%%%%%%%%%
%%% SUBSECTION: The dynamics of F
%%%%%%%%%%%%%%%%%%%%%%%%%%%%%%%%%%%%%%%%%%%%%%%%%%%%%%%%%%%%%%%%%%%%%%%%%%%%%%%
\subsection{Dynamics of the vector $F$}
\label{SUBSECTION_Evolution_of_F}
Looking back to the definition \eqref{L}, we see that the column vector $F$ 
\eqref{F} is important building block of the matrix $L$. Therefore, the study 
of the derivative of $L$ along the Hamiltonian vector field $\bsX_H$ 
\eqref{bsX_H} does require close control over the derivative of the components 
of $F$, too. Upon introducing the auxiliary functions
\be
    \varphi_k = \frac{1}{F_k} \bsX_H [F_k]
    \qquad
    (k \in \bN_N), 
\label{varphi}
\ee
for all $a \in \bN_n$ we can write
\be
\begin{split}
    2 \varphi_a  
    & = \bsX_H [\ln (F_a^2)]
    = \bsX_H [\theta_a + \ln(u_a)]
    = \left\{ \theta_a + \ln(u_a), H \right\} 
    \\
    & = \sum_{c = 1}^n
        \left(
            \sinh(\theta_c) u_c \PD{\ln(u_a)}{\lambda_c}
            - \cosh(\theta_c) u_c \PD{\ln(u_c)}{\lambda_a}
        \right).
\end{split}
\label{varphi_a}
\ee
Therefore, due to the explicit formulae \eqref{PD_u_a_lambda_a} and 
\eqref{PD_u_c_lambda_a}, we have complete control over the first $n$
components of \eqref{varphi}. Turning to the remaining components, from 
the definition \eqref{F} it is evident that 
$F_{n + a} = F_a^{-1} \bar{z}_a$, whence the relationship
\be
    \varphi_{n + a}
    = -\varphi_a + \frac{1}{\bar{z}_a} \bsX_H [\bar{z}_a]
    = -\varphi_a 
        + \sum_{c = 1}^n 
            \sinh(\theta_c) u_c \frac{1}{\bar{z}_a} \PD{\bar{z}_a}{\lambda_c}
\label{varphi_n+a}
\ee
follows immediately. Notice that for all $a \in \bN_n$ we can write that
\be
    \frac{1}{z_a} \PD{z_a}{\lambda_a}
    = -\frac{2 \ri \sin(\nu)}
            {\sinh(2 \lambda_a) \sinh(\ri \nu + 2 \lambda_a)}
        -\sum_{\substack{j = 1 \\ (j \neq a, n + a)}}^N
            \frac{\ri \sin(\mu)}
                {\sinh(\lambda_a - \Lambda_j) 
                \sinh(\ri \mu + \lambda_a -\Lambda_j)},
\label{PD_z_a_lambda_a}
\ee
whereas if $c \in \bN_n$ and $c \neq a$, then we find immediately that
\be
    \frac{1}{z_a} \PD{z_a}{\lambda_c}
    = \frac{\ri \sin(\mu)}
            {\sinh(\lambda_a - \lambda_c) 
            \sinh(\ri \mu + \lambda_a - \lambda_c)}
        -\frac{\ri \sin(\mu)}
            {\sinh(\lambda_a + \lambda_c) 
            \sinh(\ri \mu + \lambda_a + \lambda_c)}.
\label{PD_z_a_lambda_c}
\ee
The above observations can be summarized as follows.

%%%%%%%%%%%%%%%%%%%%%%%%%%%%%%%%%%%%%%%%%%%%%%%%%%%%%%%%%%%%%%%%%%%%%%%%%%%%%%%
%%% PROPOSITION: varphi
%%%%%%%%%%%%%%%%%%%%%%%%%%%%%%%%%%%%%%%%%%%%%%%%%%%%%%%%%%%%%%%%%%%%%%%%%%%%%%%
\begin{PROPOSITION}
\label{PROPOSITION_varphi}
For the derivative of the components of the function $F$ \eqref{F} along the 
Hamiltonian vector field $\bsX_H$ \eqref{bsX_H} we have
\be
    \bsX_H [F_k] = \varphi_k F_k
    \qquad
    (k \in \bN_N),
\label{F_der}
\ee
where for each $a \in \bN_n$ we can write
\be
    \varphi_a 
    = \Real
        \bigg(
            \frac{\ri \sin(\nu) e^{-\theta_a} u_a}
                {\sinh(2 \lambda_a) \sinh(\ri \nu + 2 \lambda_a)}
            + \half \sum_{\substack{j = 1 \\ (j \neq a, n + a)}}^N
                \frac{\ri \sin(\mu) (e^{-\theta_a} u_a + e^{\Theta_j} u_j)}
                    {\sinh(\lambda_a - \Lambda_j) 
                    \sinh(\ri \mu + \lambda_a - \Lambda_j)}
        \bigg),
\label{varphi_a_OK}
\ee
whereas
\be
    \varphi_{n + a} 
    = -\varphi_a
    - \frac{2 \ri \sin(\nu) \sinh(\theta_a) u_a}
        {\sinh(2 \lambda_a) \sinh(\ri \nu - 2 \lambda_a)}
    -\sum_{\substack{j = 1 \\ (j \neq a, n + a)}}^N
        \frac{\ri \sin(\mu) (\sinh(\theta_a) u_a - \sinh(\Theta_j) u_j)}
            {\sinh(\lambda_a - \Lambda_j) 
            \sinh(\ri \mu - \lambda_a + \Lambda_j)}.
\label{varphi_n+a_OK}
\ee
\end{PROPOSITION}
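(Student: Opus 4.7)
The strategy is to insert the Hamiltonian flow equations \eqref{lambda_dot}--\eqref{theta_dot} and the explicit logarithmic derivatives \eqref{PD_u_a_lambda_a}--\eqref{PD_u_c_lambda_a}, \eqref{PD_z_a_lambda_a}--\eqref{PD_z_a_lambda_c} into the already-derived intermediate formulae \eqref{varphi_a} and \eqref{varphi_n+a}, and then to repackage each resulting double sum over the physical index $c \in \bN_n$ as a single sum indexed by $j \in \bN_N \setminus \{a, n+a\}$, using the parameterization $\Lambda_j \in \{\lambda_c, -\lambda_c\}$, $\Theta_j \in \{\theta_c,-\theta_c\}$ and the evenness $u_{n+c} = u_c$. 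For the lower components the relation $F_{n+a} = \bar z_a / F_a$ reduces everything to combining $-\varphi_a$ with $\bsX_H[\bar z_a]/\bar z_a$, which is handled by the same substitution/relabeling recipe applied to \eqref{PD_z_a_lambda_a}--\eqref{PD_z_a_lambda_c}.

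To derive \eqref{varphi_a_OK}, I would first isolate the $c = a$ term in \eqref{varphi_a}: the combination $\sinh(\theta_a) - \cosh(\theta_a) = -e^{-\theta_a}$ collapses the two summands into $-e^{-\theta_a} u_a\, \partial_{\lambda_a}\ln u_a$, and inserting \eqref{PD_u_a_lambda_a} produces the single $\nu$-term and the $e^{-\theta_a} u_a$ contributions to the $\mu$-sum of \eqref{varphi_a_OK}. For each $c \neq a$, substituting \eqref{PD_u_c_lambda_a} for $\partial_{\lambda_a}\ln u_c$ and its $a \leftrightarrow c$ swap for $\partial_{\lambda_c}\ln u_a$, and using $\cosh(\theta_c) \pm \sinh(\theta_c) = e^{\pm\theta_c}$, the $c \neq a$ contribution to $2\varphi_a$ splits naturally into two real parts with arguments $\lambda_a \pm \lambda_c$; re-indexed with $j$ these furnish the remaining $e^{\Theta_j} u_j$ pieces of the sum in \eqref{varphi_a_OK}. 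Dividing by $2$ finishes the computation of $\varphi_a$.

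For $\varphi_{n+a}$, the decomposition \eqref{varphi_n+a} already reduces the task to evaluating $\bsX_H[\bar z_a]/\bar z_a = \overline{\bsX_H[z_a]/z_a}$. Substituting \eqref{PD_z_a_lambda_a}--\eqref{PD_z_a_lambda_c} and $\dot\lambda_c = \sinh(\theta_c) u_c$ gives a purely imaginary linear combination, so adding $-\varphi_a$ causes its real parts to cancel and its imaginary parts to reinforce, which is precisely what converts the numerators of \eqref{varphi_a_OK} into the $\sinh(\theta_a) u_a - \sinh(\Theta_j) u_j$ numerators of \eqref{varphi_n+a_OK}. Conjugation simultaneously flips the denominator from $\sinh(\ri\mu + \lambda_a - \Lambda_j)$ to $\sinh(\ri\mu - \lambda_a + \Lambda_j)$, explaining the denominator shape in \eqref{varphi_n+a_OK}.

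No conceptual difficulty is present in this proof; the only real labor is bookkeeping. The delicate points are (i) tracking how a single $c \neq a$ gives rise to two values $j \in \{c, n+c\}$ of the new index, (ii) consistently applying $e^{\pm\theta} = \cosh\theta \pm \sinh\theta$ to reorganize hyperbolic combinations into exponential ones, and (iii) spotting the real-versus-imaginary-part cancellations that collapse $-\varphi_a + \bsX_H[\bar z_a]/\bar z_a$ into the compact form \eqref{varphi_n+a_OK}. Carrying out these manipulations with all signs and denominators aligned constitutes the entirety of the argument.
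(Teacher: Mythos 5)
Your proposal is correct and follows the same route as the paper. In fact the paper presents no separate proof: Proposition~\ref{PROPOSITION_varphi} is introduced with ``the above observations can be summarized as follows,'' and those observations are precisely the intermediate identities \eqref{varphi_a}, \eqref{varphi_n+a} together with the logarithmic-derivative formulae \eqref{PD_u_a_lambda_a}--\eqref{PD_u_c_lambda_a} and \eqref{PD_z_a_lambda_a}--\eqref{PD_z_a_lambda_c}, which you substitute and reindex exactly as the authors intend.
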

%%%%%%%%%%%%%%%%%%%%%%%%%%%%%%%%%%%%%%%%%%%%%%%%%%%%%%%%%%%%%%%%%%%%%%%%%%%%%%%

By invoking Proposition \ref{PROPOSITION_L_in_G}, let us observe that for the 
inverse of the matrix $L$ \eqref{L} we can write that $L^{-1} = C L C$, whence
for the Hermitian matrix $L - L^{-1}$ we have
\be
    (L - L^{-1}) C + C (L - L^{-1})
    = L C - C L C^2 + C L - C^2 L C 
    = 0.
\label{L-L_inv_Hermitian}
\ee
Thus, the matrix valued smooth function $(L - L^{-1}) / 2$ defined on the
phase space $P$ \eqref{P} takes values in the subspace $\mfp$ \eqref{mfp}.
Therefore, by taking its projection onto the Abelian subspace $\mfa$ 
\eqref{mfa}, we obtain the diagonal matrix
\be
    D = (L - L^{-1})_\mfa / 2 \in \mfa
\label{D}
\ee
with diagonal entries
\be
    D_{j, j} = \sinh(\Theta_j) u_j
    \qquad
    (j \in \bN_N).
\label{D_entries}
\ee
Next, by projecting the function $(L - L^{-1}) / 2$ onto the complementary
subspace $\mfa^\perp$, we obtain the off-diagonal matrix
\be
    Y=(L - L^{-1})_{\mfa^\perp} / 2 \in \mfa^\perp,
\label{Y}
\ee
which in turn allows us to introduce the matrix valued smooth function
\be
    Z = \sinh(\wad_{\bsLambda})^{-1} Y \in \mfm^\perp,
\label{Z}
\ee
too. Since $\lambda \in Q$ \eqref{Q}, the corresponding diagonal matrix 
$\bsLambda$ \eqref{bsLambda} is regular in the sense that it takes values 
in the open Weyl chamber $\mfc \subseteq \mfa_\reg$ \eqref{mfc}. Therefore, 
$Z$ is indeed a well-defined off-diagonal $N \times N$ matrix, and its 
non-trivial entries take the form
\be
    Z_{k, l} 
    = \frac{Y_{k, l}}{\sinh(\Lambda_k - \Lambda_l)}
    = \frac{L_{k, l} - (L^{-1})_{k, l}}
        {2 \sinh(\Lambda_k - \Lambda_l)}
    \qquad
    (k, l \in \bN_N, \, k \neq l).
\label{Z_entries}
\ee
Utilizing $Z$, for each $a \in \bN_n$ we also define the function
\be
    \cM_a 
    = \frac{\ri}{F_a} \Imag((Z F)_a) 
    = \frac{\ri}{F_a} 
        \Imag \left( \sum_{j = 1}^N Z_{a, j} F_j \right)
        \in C^\infty(P).
\label{cM_a}
\ee
Recalling the subspace $\mfm$ \eqref{mfm}, it is clear that
\be
    B_\mfm = \diag(\cM_1, \dots, \cM_n, \cM_1, \dots, \cM_n) \in \mfm
\label{B_mfm}
\ee
is a well-defined function. Having the above objects at our disposal, the 
content of Proposition \ref{PROPOSITION_varphi} can be recast into a more 
convenient matrix form as follows.

%%%%%%%%%%%%%%%%%%%%%%%%%%%%%%%%%%%%%%%%%%%%%%%%%%%%%%%%%%%%%%%%%%%%%%%%%%%%%%%
%%% LEMMA: F derivative
%%%%%%%%%%%%%%%%%%%%%%%%%%%%%%%%%%%%%%%%%%%%%%%%%%%%%%%%%%%%%%%%%%%%%%%%%%%%%%%
\begin{LEMMA}
\label{LEMMA_F_derivative}
With the aid of the smooth functions $Z$ \eqref{Z} and $B_\mfm$ \eqref{B_mfm}, 
for the derivative of the column vector $F$ \eqref{F} along the Hamiltonian 
vector field $\bsX_H$ \eqref{bsX_H} we can write
\be
    \bsX_H [F] = (Z - B_{\mfm}) F.
\label{F_der_matrix_form}
\ee
\end{LEMMA}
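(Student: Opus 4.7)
The plan is to verify \eqref{F_der_matrix_form} componentwise, matching each side against the scalar formulas of Proposition \ref{PROPOSITION_varphi}. Since $\bsX_H[F_k] = \varphi_k F_k$ and $B_\mfm = \diag(\cM_1, \dots, \cM_n, \cM_1, \dots, \cM_n)$, the required identity breaks into the two families
\[
\varphi_a F_a = (ZF)_a - \cM_a F_a, \qquad \varphi_{n+a} F_{n+a} = (ZF)_{n+a} - \cM_a F_{n+a} \qquad (a \in \bN_n),
\]
and the nontrivial feature is that the same diagonal entry $\cM_a$ must simultaneously correct both the $a$th and the $(n+a)$th components.

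First I would make $(ZF)_k$ explicit. From Proposition \ref{PROPOSITION_L_in_G} together with $C^2 = \bsone_N$ one has $L^{-1} = C L C$, so that $(L^{-1})_{k,l} = L_{\sigma(k), \sigma(l)}$ with $\sigma$ the permutation of $\bN_N$ swapping $a \leftrightarrow n+a$. Combined with \eqref{L} and the direct identities $|F_l|^2 = e^{\Theta_l} u_l$ and $\bar F_{\sigma(l)} F_l = z_l$ (both immediate from \eqref{z_a}, \eqref{u_a}, \eqref{F}), the sum $(ZF)_k = \sum_{l \neq k} Z_{k,l} F_l$ can be reorganized as a finite collection of terms whose typical summand carries either the factor $F_k \, e^{\Theta_l} u_l$ or the factor $F_{\sigma(k)} \, \bar z_l$, with denominators built from $\sinh(\Lambda_k - \Lambda_l)$ and $\sinh(\ri\mu \pm (\Lambda_k - \Lambda_l))$, plus an isolated contribution coming from the $C_{k,l}$-term of $L - L^{-1}$ at the single index $l = \sigma(k)$.

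For the first family ($k = a \in \bN_n$) the clean observation is that $F_a$ is real and positive, so by \eqref{cM_a} the diagonal correction collapses to $(ZF)_a - \cM_a F_a = \Real((ZF)_a)$, and the identity reduces to $\varphi_a F_a = \Real((ZF)_a)$. This is a structural match with \eqref{varphi_a_OK}: the isolated $l = n + a$ contribution, after a simple partial-fraction manipulation of $1/[\sinh(2\lambda_a) \sinh(\ri\mu \pm 2\lambda_a)]$, produces the $\sinh(\ri\nu + 2\lambda_a)$ denominator of \eqref{varphi_a_OK}, while the remaining $l \neq a, n + a$ contributions collect into the sum over $j$ in \eqref{varphi_a_OK}; the $e^{-\theta_a} u_a$ pieces are produced precisely by the $F_{\sigma(a)} \bar z_l / F_a$ factors coming from the $L^{-1}$ part, using the relations $F_{n+a} F_a = \bar z_a$ and $|z_a|^2 = u_a^2$. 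For the second family ($k = n + a$) the key structural input is the commutation $C Z = Z C$, which follows from $Y C + C Y = 0$ (equation \eqref{L-L_inv_Hermitian}) together with $C \bsLambda = -\bsLambda C$. This yields $(ZF)_{n+a} = \sum_j Z_{a, j} F_{\sigma(j)}$, which simultaneously reduces the second family to manipulations parallel to the first and explains why one and the same $\cM_a$ corrects both components; the additional inhomogeneous terms of \eqref{varphi_n+a_OK} then arise from the discrepancy between $|F_l|^2$ and $F_{\sigma(l)} \bar F_l$.

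The main technical obstacle is the rather lengthy bookkeeping required by the partial-fraction matching, and above all the verification that the imaginary parts align so that one and the same $\cM_a$ handles both positions $a$ and $n + a$; this consistency ultimately traces back to $L$ taking values in $\exp(\mfp)$ (Lemma \ref{LEMMA_L_in_exp_p}), which enforces the $C$-equivariant symmetry of $Z$ used above.
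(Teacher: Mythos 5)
Your proposal and the paper's proof agree on the preliminary reductions — both rely on $L^{-1}=CLC$, the explicit formulas $|F_l|^2=e^{\Theta_l}u_l$ and $\bar F_{\sigma(l)}F_l=z_l$, and the fact that $\cM_a$ absorbs the imaginary part of $(ZF)_a/F_a$ so that the first $n$ equations collapse to $\varphi_a F_a=\Real((ZF)_a)$. Your observation that $CZ=ZC$ (a consequence of \eqref{L-L_inv_Hermitian} together with $C\bsLambda C=-\bsLambda$) is correct and is in fact a neat way to make explicit why the single entry $\cM_a$ simultaneously corrects components $a$ and $n+a$ — the paper does not emphasize this symmetry, and it is worth articulating. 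So far the two approaches coincide.

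Where they diverge — and where your proposal has a genuine gap — is in the crucial verification. After the reduction, one must establish a nontrivial hyperbolic identity: in the paper's notation, that the $\lambda$-dependent function $G_a$ in \eqref{G_a} (which is exactly $\varphi_a F_a-\Real((ZF)_a)$ stripped of the overall factor $\half e^{-\theta_a/2}u_a^{3/2}$) vanishes identically, and likewise $G_{n+a}$. The paper proves this by a Liouville argument: replacing $\lambda_a$ by $\lambda_a+w$, checking that all residues vanish, using $2\pi\ri$-periodicity and decay as $\Real(w)\to\infty$. You instead gesture at ``a simple partial-fraction manipulation'' and ``structural match,'' but no such match is exhibited, and the route you sketch is not accurate as stated: the $\sinh(\ri\nu+2\lambda_a)$ denominator in \eqref{varphi_a_OK} does not arise from partial fractions of $1/[\sinh(2\lambda_a)\sinh(\ri\mu\pm 2\lambda_a)]$; it comes from the factor $\sinh(\ri\nu+2\Lambda_j)$ sitting inside $z_j$ \eqref{z_a}, and the cancellation it participates in couples all three groups of terms in $G_a$. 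The identity $G_a=0$ is precisely the content of the lemma, and you have not supplied a proof of it — direct hyperbolic manipulation might work, but it is exactly what the paper replaces by a complex-analytic Liouville argument, so it cannot be waved away as mere bookkeeping. To complete your argument you would need to carry out the residue/boundedness check (as the paper does), or else exhibit the explicit algebraic cancellation, which is the hard step.
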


%%%%%%%%%%%%%%%%%%%%%%%%%%%%%%%%%%%%%%%%%%%%%%%%%%%%%%%%%%%%%%%%%%%%%%%%%%%%%%%
%%% PROOF
%%%%%%%%%%%%%%%%%%%%%%%%%%%%%%%%%%%%%%%%%%%%%%%%%%%%%%%%%%%%%%%%%%%%%%%%%%%%%%%
\begin{proof}
Upon introducing the column vector 
\be
    J = \bsX_H [F] + B_{\mfm} F - Z F \in \bC^N,
\label{J}
\ee 
it is enough to prove that $J_k = 0$ for all $k \in \bN_N$, at each point
$(\lambda, \theta)$ of the phase space $P$ \eqref{P}. Starting with the
upper $n$ components of $J$, notice that by Proposition 
\ref{PROPOSITION_varphi} and the formulae \eqref{Z_entries}-\eqref{B_mfm} we 
can write that
\be
    J_a = \half e^{-\frac{\theta_a}{2}} u_a^{\frac{3}{2}} G_a
    \qquad
    (a \in \bN_n),
\label{J_a}
\ee
where $G_a$ is an appropriate function depending only on $\lambda$. More
precisely, it has the form
\be
\begin{split}
    G_a 
    = \Real
        \bigg(
            & \frac{2 \ri \sin(\nu)}{\sinh(2 \lambda_a) 
                \sinh(\ri \nu + 2 \lambda_a)}
            + \sum_{\substack{j = 1 \\ (j \neq a, n + a)}}^N
                \frac{\ri \sin(\mu) (1 + \bar{z}_j \bar{z}_a^{-1})}
                    {\sinh(\lambda_a - \Lambda_j) 
                    \sinh(\ri \mu + \lambda_a - \Lambda_j)}
            \\
            & + \frac{\ri \sin(\mu) (z_a \bar{z}_a^{-1} - 1)
                        + \ri \sin(\mu - \nu)(\bar{z}_a^{-1} - z_a^{-1})}
                {\sinh(2 \lambda_a) \sinh(\ri \mu + 2 \lambda_a)}
        \bigg),
\end{split}
\label{G_a}
\ee
that can be made quite explicit by exploiting the definition of the 
constituent functions $z_j$ \eqref{z_a}. Now, following the same strategy 
we applied in the proof of Proposition \ref{PROPOSITION_L_in_G}, let us 
introduce a complex valued function $g_a$ depending only on a single 
complex variable $w$, obtained simply by replacing $\lambda_a$ with 
$\lambda_a + w$ in the explicit expression of right-hand side of the 
above equation \eqref{G_a}. In mod $\ri \pi$ sense this meromorphic
function has at most first order poles at the points
\be
    w \equiv -\lambda_a, \, 
    w \equiv \pm \ri \mu / 2 - \lambda_a, \, 
    w \equiv \pm \ri \nu / 2 - \lambda_a, \,
    w \equiv \Lambda_j - \lambda_a, \, 
    w \equiv \pm (\ri \mu + \Lambda_j) - \lambda_a \, (j \in \bN_N).
\label{poles-a}
\ee
However, at each of these points the residue of $g_a$ turns out to be zero. 
Moreover, it is obvious that $g_a(w)$ vanishes as $\Real(w) \to \infty$, 
therefore Liouville's theorem implies that $g_a(w) = 0$ for all $w \in \bC$. 
In particular $G_a = g_a(0) = 0$, and so by \eqref{J_a} we conclude that 
$J_a = 0$.

Turning to the lower $n$ components of the column vector $J$ \eqref{J}, let 
us note that our previous result $J_a = 0$ allows us to write that
\be
    J_{n+a}
    = -\sinh(\theta_a) e^{-\frac{\theta_a}{2}} 
        u_a^{\half} \bar{z}_a G_{n + a}
    \qquad
    (a \in \bN_n),
\label{J_n+a}
\ee
where $G_{n + a}$ is again an appropriate smooth function depending only on 
$\lambda$, as can be seen from the formula
\begin{align}
    G_{n + a} 
    = & \frac{2 \ri \sin(\nu)}
            {\sinh(2 \lambda_a) \sinh(\ri \nu - 2 \lambda_a)}
    - \frac{\ri \sin(\mu) + \ri \sin(\mu - \nu) \bar{z}_a^{-1}}
            {\sinh(2 \lambda_a) \sinh(\ri \mu - 2 \lambda_a)}
    + \bar{z}_a^{-1} 
        \frac{\ri \sin(\mu) z_a + \ri \sin(\mu-\nu)}
            {\sinh(2 \lambda_a) \sinh(\ri \mu + 2 \lambda_a)}
    \nonumber \\
    & + \sum_{\substack{j = 1 \\ (j \neq a, n + a)}}^N
            \frac{1}{\sinh(\lambda_a - \Lambda_j)}
            \left(
                \frac{\ri \sin(\mu)}{\sinh(\ri \mu - \lambda_a + \Lambda_j)}
                + \frac{\ri \sin(\mu) \bar{z}_a^{-1} \bar{z}_j}
                    {\sinh(\ri \mu + \lambda_a - \Lambda_j)}
            \right).
\label{G_n+a}
\end{align}
Next, let us plug the definition of $z_j$ \eqref{z_a} into the above
expression \eqref{G_n+a} and introduce the complex valued function 
$g_{n + a}$ of $w \in \bC$ by replacing $\lambda_a$ with $\lambda_a + w$ 
in the resulting formula. Note that $g_{n + a}$ has at most first order 
poles at the points
\be
    w \equiv -\lambda_a, \,
    w \equiv \pm \ri \mu / 2 - \lambda_a, \, 
    w \equiv \Lambda_j - \lambda_a \, (j \in \bN_N)
    \pmod{\ri \pi},
\label{poles-n+a}
\ee
but all these singularities are removable. Since $g_{n + a}(w) \to 0$ 
as $\Real(w) \to \infty$, the boundedness of the periodic function 
$g_{n + a}$ is also obvious. Thus, Liouville's theorem entails that 
$g_{n + a}= 0$ on the whole complex plane, whence the relationship 
$G_{n + a} = g_{n + a}(0) = 0$ also follows. Now, looking back to the
equation \eqref{J_n+a}, we end up with the desired equation 
$J_{n + a} = 0$.
\end{proof}
%%%%%%%%%%%%%%%%%%%%%%%%%%%%%%%%%%%%%%%%%%%%%%%%%%%%%%%%%%%%%%%%%%%%%%%%%%%%%%%

%%%%%%%%%%%%%%%%%%%%%%%%%%%%%%%%%%%%%%%%%%%%%%%%%%%%%%%%%%%%%%%%%%%%%%%%%%%%%%%
%%% SUBSECTION: Lax representation
%%%%%%%%%%%%%%%%%%%%%%%%%%%%%%%%%%%%%%%%%%%%%%%%%%%%%%%%%%%%%%%%%%%%%%%%%%%%%%%
\subsection{Lax representation of the dynamics}
\label{SUBSECTION_Lax_representation}
Based on our proposed Lax matrix \eqref{L}, in this subsection we wish to
construct a Lax representation for the dynamics of the van Diejen system 
\eqref{H}. As it turns out, Lemmas \ref{LEMMA_commut_rel} and 
\ref{LEMMA_F_derivative} prove to be instrumental in our approach. As the 
first step, by applying the Hamiltonian vector field $\bsX_H$ \eqref{bsX_H}
on the Ruijsenaars type commutation relation \eqref{commut_rel}, let us 
observe that the Leibniz rule yields
\be
\begin{split}
    & e^{\ri \mu} e^{\ad_{\bsLambda}} 
        \left( 
            \bsX_H [L] 
            - \left[ L, e^{-\bsLambda} \bsX_H [e^{\bsLambda}] \right] 
        \right) 
    - e^{-\ri \mu} e^{-\ad_{\bsLambda}} 
        \left( 
            \bsX_H [L] 
            + \left[ L, \bsX_H [e^{\bsLambda}] e^{-\bsLambda} \right] 
        \right)
    \\
    & \quad = 2 \ri \sin(\mu) 
                \left( 
                    \bsX_H [F] F^* + F (\bsX_H [F])^* 
                \right).
\end{split}
\label{commut_rel_der_1}
\ee
By comparing the formula appearing in \eqref{lambda_dot} with the matrix
entries \eqref{D_entries} of the diagonal matrix $D$, it is clear that
\be
    \bsX_H [\bsLambda] = D,
\label{bsLambda_der}
\ee 
which in turn implies that
\be
    e^{-\bsLambda} \bsX_H [e^{\bsLambda}] 
    = \bsX_H [e^{\bsLambda}] e^{-\bsLambda} 
    = D.
\label{e_bsLambda_der}
\ee
Thus, the above equation \eqref{commut_rel_der_1} can be cast into the 
fairly explicit form
\be
\begin{split}
    & e^{\ri \mu} e^{\ad_{\bsLambda}} 
        \left( 
            \bsX_H [L] - \left[ L, D \right] 
        \right)
    - e^{-\ri \mu} e^{-\ad_{\bsLambda}} 
        \left( 
            \bsX_H [L] + \left[ L, D \right] 
        \right) 
    \\
    & \quad = 2 \ri \sin(\mu) 
                \left( 
                    \bsX_H [F] F^* + F (\bsX_H [F])^* 
                \right),
\end{split}
\label{commut_rel_der_2}
\ee
which serves as the starting point in our analysis on the derivative 
$\bsX_H [L]$. Before formulating the main result of this subsection, over
the phase space $P$ \eqref{P} we define the matrix valued function
\be
    B_{\mfm^\perp} = -\coth(\wad_{\bsLambda}) Y \in \mfm^\perp.
\label{B_mfm_perp}
\ee
Recalling the definition \eqref{Y}, we see that $B_{\mfm^\perp}$ is actually 
an off-diagonal matrix. Furthermore, for its non-trivial entries we have the 
explicit expressions
\be
    (B_{\mfm^\perp})_{k, l} 
    = - \coth(\Lambda_k - \Lambda_l) \frac{L_{k, l} - (L^{-1})_{k, l}}{2}
    \qquad
    (k, l \in \bN_N, \, k \neq l).
\label{B_mfm_perp_entries}
\ee
Finally, with the aid of the diagonal matrix $B_{\mfm}$ \eqref{B_mfm}, over
the phase space $P$ \eqref{P} we also define the $\mfk$-valued smooth function
\be
    B = B_{\mfm} + B_{\mfm^\perp} \in \mfk.
\label{B}
\ee

%%%%%%%%%%%%%%%%%%%%%%%%%%%%%%%%%%%%%%%%%%%%%%%%%%%%%%%%%%%%%%%%%%%%%%%%%%%%%%%
%%% THEOREM: Lax representation
%%%%%%%%%%%%%%%%%%%%%%%%%%%%%%%%%%%%%%%%%%%%%%%%%%%%%%%%%%%%%%%%%%%%%%%%%%%%%%%
\begin{THEOREM}
\label{THEOREM_Lax_representation}
The derivative of the matrix valued function $L$ \eqref{L} along the 
Hamiltonian vector field $\bsX_H$ \eqref{bsX_H} takes the Lax form
\be
    \bsX_H [L] = [L, B].
\label{Lax_representation}
\ee
In other words, the matrices $L$ \eqref{L} and $B$ \eqref{B} provide a Lax 
pair for the dynamics generated by the Hamiltonian \eqref{H}.
\end{THEOREM}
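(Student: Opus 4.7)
The strategy is to deduce the Lax equation from the differentiated commutation relation \eqref{commut_rel_der_2}. Setting $\tilde L_\pm = e^{\pm \ri \mu} e^{\pm \ad_{\bsLambda}} L$, introduce the linear operator $\mathcal{D} \colon \mfgl(N, \bC) \to \mfgl(N, \bC)$ defined by $\mathcal{D}(X) = e^{\ri \mu} e^{\ad_{\bsLambda}} X - e^{-\ri \mu} e^{-\ad_{\bsLambda}} X$; entrywise it acts as multiplication by $2 \sinh(\ri \mu + \Lambda_k - \Lambda_l)$, and since $\sin(\mu) \neq 0$ and $\bsLambda \in \mfc \subseteq \mfa_\reg$ none of these factors can vanish, so $\mathcal{D}$ is invertible. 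Consequently the relation \eqref{commut_rel_der_2} determines $\bsX_H[L]$ uniquely, and to establish \eqref{Lax_representation} it is enough to verify that $[L, B]$ obeys the very same equation, i.e.
\begin{equation*}
    \mathcal{D}([L, B]) - [\tilde L_+ + \tilde L_-, D]
    = 2 \ri \sin(\mu) \bigl( \bsX_H[F] F^* + F (\bsX_H[F])^* \bigr).
\end{equation*}

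The next step is to turn this into a derivative-free algebraic identity in $L$, $\bsLambda$, and $F$. By Lemma \ref{LEMMA_F_derivative} and the anti-Hermiticity of $Z$ and $B_{\mfm}$ (both lie in $\mfk$), the right-hand side equals $2 \ri \sin(\mu) [Z - B_{\mfm}, F F^*]$, and the undifferentiated commutation relation \eqref{commut_rel} lets us substitute $2 \ri \sin(\mu) F F^* = \tilde L_+ - \tilde L_- - 2 \ri \sin(\mu - \nu) C$. The block-diagonal form $B_{\mfm} = \diag(\cM_1, \dots, \cM_n, \cM_1, \dots, \cM_n)$ gives $[B_{\mfm}, C] = 0$ together with $e^{\pm \ad_{\bsLambda}} B_{\mfm} = B_{\mfm}$, so the $B_{\mfm}$ contributions on the left, namely $[\tilde L_+ - \tilde L_-, B_{\mfm}]$, match exactly those produced on the right and cancel. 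Using further that $\mathcal{D}([L, B_{\mfm^\perp}]) = [\tilde L_+, e^{\ad_{\bsLambda}} B_{\mfm^\perp}] - [\tilde L_-, e^{-\ad_{\bsLambda}} B_{\mfm^\perp}]$, the claim reduces to
\begin{equation*}
    [\tilde L_+, e^{\ad_{\bsLambda}} B_{\mfm^\perp}]
    - [\tilde L_-, e^{-\ad_{\bsLambda}} B_{\mfm^\perp}]
    - [\tilde L_+ + \tilde L_-, D]
    = [Z, \tilde L_+ - \tilde L_-] - 2 \ri \sin(\mu - \nu) [Z, C].
\end{equation*}

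It remains to check this identity entrywise. For the off-diagonal $(k, l)$-entries, the defining prescription $(B_{\mfm^\perp})_{k, l} = -\coth(\Lambda_k - \Lambda_l) \cdot (L - L^{-1})_{k, l} / 2$ from \eqref{B_mfm_perp_entries} is calibrated precisely so that the terms proportional to $(B_{\mfm^\perp})_{k, l}$ itself balance on the two sides once we divide through by the nonvanishing factor $2 \sinh(\ri \mu + \Lambda_k - \Lambda_l)$; the contribution $[\tilde L_+ + \tilde L_-, D]_{k,l} = 2\cosh(\ri\mu + \Lambda_k - \Lambda_l) L_{k,l} (D_{l,l} - D_{k,k})$ then enters via the explicit form $D_{k,k} = \sinh(\Theta_k) u_k$. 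The leftover expression, quadratic in the entries of $L$ and $L^{-1}$, is handled by the residue-plus-Liouville technique of Proposition \ref{PROPOSITION_L_in_G}: one views the $\lambda_a$-dependence as a meromorphic function on $\bC/\ri\pi\bZ$, checks that the residue at every potential pole vanishes, and uses decay as $\Real(w)\to\infty$ to conclude it is identically zero. The diagonal $(k,k)$-entries, where $[\tilde L_+ + \tilde L_-, D]_{k,k} = 0$, amount to a separate identity between $L$, $Z$, and $B_{\mfm^\perp}$ proved by the same technique. The principal obstacle is the meticulous meromorphic bookkeeping required in this final step; it is computationally heavy but uses no idea beyond what appears in Section \ref{SUBSECTION_Lax_mat_and_Lie_group}.
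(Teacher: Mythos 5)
Your architecture matches the paper's almost exactly: you isolate the entrywise-multiplication operator $\mathcal{D}(X)=e^{\ri\mu}e^{\ad_{\bsLambda}}X-e^{-\ri\mu}e^{-\ad_{\bsLambda}}X$ (the paper's $2\sinh(\ri\mu\,\Id+\ad_{\bsLambda})$), invoke its invertibility, use Lemma \ref{LEMMA_F_derivative} together with the anti-Hermiticity of $Z,B_\mfm$ to rewrite the right-hand side as $2\ri\sin(\mu)[Z-B_\mfm,FF^*]$, cancel the $B_\mfm$ pieces through \eqref{commut_rel} and $[B_\mfm,C]=0$, and land on the residual identity
\begin{equation*}
[\tilde L_+, e^{\ad_{\bsLambda}}B_{\mfm^\perp}]-[\tilde L_-, e^{-\ad_{\bsLambda}}B_{\mfm^\perp}]-[\tilde L_++\tilde L_-, D]=[Z,\tilde L_+-\tilde L_-]-2\ri\sin(\mu-\nu)[Z,C].
\end{equation*}
All of that is correct and is a faithful reorganization of the paper's steps \eqref{Psi_and_R}--\eqref{term_2}.

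Where the proposal goes astray is the final step. You claim the residual identity must be checked entrywise and that the ``leftover expression, quadratic in the entries of $L$ and $L^{-1}$,'' requires the residue-plus-Liouville machinery of Proposition~\ref{PROPOSITION_L_in_G}. It does not; the Liouville technique is needed only once, inside the proof of Lemma~\ref{LEMMA_F_derivative}, which you have already invoked. Once $B_{\mfm^\perp}=-\coth(\wad_{\bsLambda})Y$ and $Z=\sinh(\wad_{\bsLambda})^{-1}Y$ are inserted, the identity $\coth(w)\pm 1 = e^{\pm w}/\sinh(w)$ gives $e^{\pm\ad_{\bsLambda}}[L,B_{\mfm^\perp}]=-[e^{\pm\ad_{\bsLambda}}L,Z]\mp e^{\pm\ad_{\bsLambda}}[D,L]$ after one uses $(L-L^{-1})/2=D+Y$, i.e.\ $[L,Y]=[D,L]$. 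Summing with the $e^{\pm\ri\mu}$ weights, the $[D,L]$ pieces exactly cancel the $[\tilde L_++\tilde L_-,D]$ term (since $D\in\mfa$ commutes with $e^{\pm\ad_{\bsLambda}}$), and what remains is $[Z,\tilde L_+-\tilde L_-]$; the right side reduces to the same thing because $Z\in\mfm^\perp\subseteq\mfk$ forces $[Z,C]=0$. So the residual identity is a three-line algebraic consequence of the definitions, not a meromorphic computation. In short: right strategy, right intermediate reduction, but you have mislocated the ``hard'' part of the argument — after Lemma~\ref{LEMMA_F_derivative} there is no residue bookkeeping left to do.
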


%%%%%%%%%%%%%%%%%%%%%%%%%%%%%%%%%%%%%%%%%%%%%%%%%%%%%%%%%%%%%%%%%%%%%%%%%%%%%%%
%%% PROOF
%%%%%%%%%%%%%%%%%%%%%%%%%%%%%%%%%%%%%%%%%%%%%%%%%%%%%%%%%%%%%%%%%%%%%%%%%%%%%%%
\begin{proof}
For simplicity, let us introduce the matrix valued smooth functions
\be
    \Psi = \bsX_H [L] - [L, B]
    \quad \text{and} \quad
    R = \sinh(\ri \mu \Id_{\mfgl(N, \bC)} + \ad_{\bsLambda}) \Psi
\label{Psi_and_R}
\ee
defined on the phase space $P$ \eqref{P}. Our goal is to prove that $\Psi = 0$. 
However, since $\sin(\mu) \neq 0$, the linear operator
\be
    \sinh(\ri \mu \Id_{\mfgl(N, \bC)} + \ad_{\bsLambda})
    \in \End(\mfgl(N, \bC))
\label{Lax_repr_linear_op}
\ee
is invertible at each point of $P$, whence it is enough to show that $R = 0$. 
For this reason, notice that from the relationship \eqref{commut_rel_der_2}
we can infer that
\be
\begin{split}
    2 R 
    = & 
        e^{\ri \mu} e^{\ad_{\bsLambda}} \Psi
        - e^{-\ri \mu} e^{-\ad_{\bsLambda}} \Psi
    \\
    = & 2 \ri \sin(\mu) \left( \bsX_H [F] F^* + F (\bsX_H [F])^* \right) 
        -\left( 
            e^{\ri \mu} e^{\ad_{\bsLambda}} [L, B_{\mfm^\perp}]
            - e^{-\ri \mu} e^{-\ad_{\bsLambda}} [L, B_{\mfm^\perp}] 
        \right)
    \\ 
    &   -\left( 
            e^{\ri \mu} e^{\ad_{\bsLambda}} [L, B_{\mfm}]
            - e^{-\ri \mu} e^{-\ad_{\bsLambda}} [L, B_{\mfm}] 
        \right)
        -\left( 
            e^{\ri \mu} e^{\ad_{\bsLambda}} [D, L]
            + e^{-\ri \mu} e^{-\ad_{\bsLambda}} [D, L] 
        \right).
\end{split}
\label{R_expanded}
\ee
Our strategy is to inspect the right-hand side of the above equation 
term-by-term.

As a preparatory step, from the definitions of $D$ \eqref{D} and
$Y$ \eqref{Y} we see that
\be
    (L - L^{-1}) / 2 = D + Y,
\label{D+Y}
\ee
thus the commutation relation
\be
    [L, Y] = [ L, -D + (L - L^{-1}) / 2 ] = [D, L]
\label{L_Y_D_comm_rel}
\ee
readily follows. Keeping in mind the relationship \eqref{L_Y_D_comm_rel}
and the standard hyperbolic functional equations
\be
    \coth(w) \pm 1 = \frac{e^{\pm w}}{\sinh(w)}
    \qquad
    (w \in \bC),
\label{coth_iden}
\ee
from the definitions of $Z$ \eqref{Z} and $B_{\mfm^\perp}$ \eqref{B_mfm_perp}
we infer that
\be
\begin{split}
    e^{\ad_{\bsLambda}} [L, B_{\mfm^\perp}]
    = & -e^{\ad_{\bsLambda}} [L, \coth(\wad_{\bsLambda}) Y]
    \\
    = & -e^{\ad_{\bsLambda}}
        \left(
            [L, (\coth(\wad_{\bsLambda}) 
                - \Id_{\mfm^\perp \oplus \mfa^\perp}) Y]
            + [L, Y]
        \right)
    \\
    = & -e^{\ad_{\bsLambda}}
        \left(
            [L, e^{-\wad_{\bsLambda}} \sinh(\wad_{\bsLambda})^{-1} Y]
            + [D, L]
        \right)
    \\
    = & -[e^{\ad_{\bsLambda}} L, Z] - e^{\ad_{\bsLambda}} [D, L].
\end{split}
\label{term_1_first_half}
\ee
Along the same lines, one finds immediately that
\be
    e^{-\ad_{\bsLambda}} [L, B_{\mfm^\perp}] 
    = -[e^{-\ad_{\bsLambda}} L, Z] + e^{-\ad_{\bsLambda}} [D, L].
\label{term_1_second_half}
\ee
At this point let us recall that $Z$ \eqref{Z} takes values in the subspace 
$\mfm^\perp \subseteq \mfk$, thus it is anti-Hermitian and commutes with the 
matrix $C$ \eqref{C}. Therefore, by utilizing equations
\eqref{term_1_first_half} and \eqref{term_1_second_half}, the application of 
the commutation relation \eqref{commut_rel} leads to the relationship
\be
\begin{split}
    & e^{\ri \mu} e^{\ad_{\bsLambda}} [L, B_{\mfm^\perp}]
    - e^{-\ri \mu} e^{-\ad_{\bsLambda}} [L, B_{\mfm^\perp}]
    \\
    & \quad 
    = -[e^{\ri \mu} e^{\ad_{\bsLambda}} L 
        - e^{-\ri \mu} e^{-\ad_{\bsLambda}} L, Z]
        - \left( 
            e^{\ri \mu} e^{\ad_{\bsLambda}} [D, L]
            + e^{-\ri \mu} e^{-\ad_{\bsLambda}} [D, L] 
        \right)
    \\
    & \quad
    = -[2 \ri \sin(\mu) F F^* + 2 \ri \sin(\mu - \nu) C, Z]
        - \left( 
            e^{\ri \mu} e^{\ad_{\bsLambda}} [D, L]
            + e^{-\ri \mu} e^{-\ad_{\bsLambda}} [D, L] 
        \right)
    \\
    & \quad
    = 2 \ri \sin(\mu) \left( (Z F) F^* + F (Z F)^* \right)
        - \left( 
            e^{\ri \mu} e^{\ad_{\bsLambda}} [D, L]
            + e^{-\ri \mu} e^{-\ad_{\bsLambda}} [D, L] 
        \right).
\end{split}
\label{term_1}
\ee
To proceed further, let us recall that $B_\mfm$ \eqref{B_mfm} takes values 
in $\mfm \subseteq \mfk$, whence it is also anti-Hermitian and also commutes 
with the matrix $C$ \eqref{C}. Thus, by applying commutation relation
\eqref{commut_rel} again, we obtain at once that
\be
\begin{split}
    & e^{\ri \mu} e^{\ad_{\bsLambda}} [L, B_{\mfm}]
    - e^{-\ri \mu} e^{-\ad_{\bsLambda}} [L, B_{\mfm}]
    \\
    & \quad 
    = e^{\ri \mu} [e^{\ad_{\bsLambda}} L, e^{\ad_{\bsLambda}} B_\mfm]
        -e^{-\ri \mu} [e^{-\ad_{\bsLambda}} L, e^{-\ad_{\bsLambda}} B_\mfm]
    = [e^{\ri \mu} e^{\ad_{\bsLambda}} L 
        - e^{-\ri \mu} e^{-\ad_{\bsLambda}} L, B_\mfm]
    \\
    & \quad
    = [2 \ri \sin(\mu) F F^* + 2 \ri \sin(\mu - \nu) C, B_\mfm]
    = -2 \ri \sin(\mu) \left( (B_\mfm F) F^* + F (B_\mfm F)^* \right).
\end{split}
\label{term_2}
\ee
Now, by plugging the expressions \eqref{term_1} and \eqref{term_2}
into \eqref{R_expanded}, we obtain that
\be
    R = \ri \sin(\mu) 
        \left(
            (\bsX_H [F] - Z F + B_\mfm F) F^*
            + F (\bsX_H [F] - Z F + B_\mfm F)^*
        \right).
\label{R_OK}
\ee
Giving a glance at Lemma \ref{LEMMA_F_derivative}, we conclude that $R = 0$, 
thus the Theorem follows.
\end{proof}
%%%%%%%%%%%%%%%%%%%%%%%%%%%%%%%%%%%%%%%%%%%%%%%%%%%%%%%%%%%%%%%%%%%%%%%%%%%%%%%

At this point we wish to make a short comment on matrix 
$B = B(\lambda, \theta; \mu, \nu)$ \eqref{B} appearing in the Lax 
representation \eqref{Lax_representation} of the dynamics \eqref{H}. It is 
an important fact that by taking its `rational limit' we can recover the 
second member of the Lax pair of the rational $C_n$ van Diejen system with 
two parameters $\mu$ and $\nu$. More precisely, up to some irrelevant 
numerical factors, in the $\alpha \to 0^+$ limit the matrix 
$\alpha B(\alpha \lambda, \theta; \alpha \mu, \alpha \nu)$ tends to the 
second member $\hat{\cB}(\lambda, \theta; \mu, \nu, \kappa = 0)$ of the 
rational Lax pair, that first appeared in equation (4.60) of the recent paper 
\cite{Pusztai_NPB2015}. In other words, matrix $B$ \eqref{B} is an appropriate 
hyperbolic generalization of the `rational' matrix $\hat{\cB}$ with two 
coupling parameters. We can safely state that the results presented in 
\cite{Pusztai_NPB2015} has played a decisive role in our present work. As a 
matter of fact, most probably we could not have guessed the form of the 
non-trivial building blocks \eqref{B_mfm} and \eqref{B_mfm_perp} without the 
knowledge of rational analogue of $B$. 

In order to harvest some consequences of the Lax representation 
\eqref{Lax_representation}, we continue with a simple corollary of Theorem 
\ref{THEOREM_Lax_representation}, that proves to be quite handy in the 
developments of the next subsection.

%%%%%%%%%%%%%%%%%%%%%%%%%%%%%%%%%%%%%%%%%%%%%%%%%%%%%%%%%%%%%%%%%%%%%%%%%%%%%%%
%%% PROPOSITION: Derivatives of D and Y
%%%%%%%%%%%%%%%%%%%%%%%%%%%%%%%%%%%%%%%%%%%%%%%%%%%%%%%%%%%%%%%%%%%%%%%%%%%%%%%
\begin{PROPOSITION}
\label{PROPOSITION_D_and_Y_derivatives}
For the derivatives of the matrix valued smooth functions $D$ \eqref{D}
and $Y$ \eqref{Y} along the Hamiltonian vector field $\bsX_H$ \eqref{bsX_H}
we have
\be
    \bsX_H [D] = [Y, B_{\mfm^\perp}]_{\mfa}
    \quad \text{and} \quad
    \bsX_H [Y] 
    = [Y, B_{\mfm^\perp}]_{\mfa^\perp} 
        + [D, B_{\mfm^\perp}] 
        + [Y, B_{\mfm}].
\label{D_and_Y_der}
\ee
\end{PROPOSITION}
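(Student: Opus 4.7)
The plan is to differentiate the relationship $L - L^{-1} = 2(D + Y)$ along $\bsX_H$ using the Lax equation of Theorem \ref{THEOREM_Lax_representation}, and then project onto the orthogonal summands $\mfa$ and $\mfa^\perp$ of $\mfp$.

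First I would handle the inverse. By Proposition \ref{PROPOSITION_L_in_G} we have $L^{-1} = C L C$, and since $B \in \mfk$ obeys both $B^* = -B$ and $B^* C + C B = 0$, it commutes with $C$; combined with $C^2 = \bsone_N$ this yields
\be
    \bsX_H [L^{-1}]
    = C\, \bsX_H [L]\, C
    = C\, [L, B]\, C
    = [C L C, C B C]
    = [L^{-1}, B].
\label{proof_plan_inverse}
\ee
Subtracting and dividing by two, and using $B = B_\mfm + B_{\mfm^\perp}$, gives
\be
    \bsX_H [D] + \bsX_H [Y]
    = [D, B_\mfm] + [D, B_{\mfm^\perp}] + [Y, B_\mfm] + [Y, B_{\mfm^\perp}].
\label{proof_plan_expand}
\ee
The term $[D, B_\mfm]$ vanishes because $B_\mfm \in \mfm = Z_K(\mfa)$ and $D \in \mfa$, so $\mfm$ commutes pointwise with $\mfa$.

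Next I would carry out the projection. Since $D$ takes values in the fixed subspace $\mfa$ and $Y$ takes values in $\mfa^\perp$, one has $\bsX_H[D] \in \mfa$ and $\bsX_H[Y] \in \mfa^\perp$. It remains to classify each commutator on the right of \eqref{proof_plan_expand} by its diagonal/off-diagonal components. Because $D$ and $B_\mfm$ are diagonal while $B_{\mfm^\perp}$ and $Y$ are off-diagonal, the identity $[X, T]_{k,l} = (X_{k,k} - X_{l,l}) T_{k,l}$ for diagonal $X$ shows that
\be
    [D, B_{\mfm^\perp}] \in \mfa^\perp
    \quad \text{and} \quad
    [Y, B_\mfm] \in \mfa^\perp,
\label{proof_plan_offdiag}
\ee
while $[Y, B_{\mfm^\perp}]$ generally has components in both $\mfa$ and $\mfa^\perp$. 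Taking the $\mfa$-part of \eqref{proof_plan_expand} then yields $\bsX_H[D] = [Y, B_{\mfm^\perp}]_\mfa$, and taking the $\mfa^\perp$-part yields the claimed formula for $\bsX_H[Y]$.

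The only genuinely non-routine point I anticipate is verifying $CB = BC$; everything else reduces to the Lax equation, the quadratic identity $LCL = C$, and the elementary decomposition $\mfg = \mfm \oplus \mfm^\perp \oplus \mfa \oplus \mfa^\perp$ of \eqref{refined_decomposition}. Since $B \in \mfk$ by construction \eqref{B} (being the sum of an $\mfm$-valued and an $\mfm^\perp$-valued piece), both defining conditions of $\mfk$ apply, immediately giving this commutation. Hence no further calculation with the explicit matrix entries of $L$, $D$, $Y$ or $B$ is needed; the whole statement is extracted from Theorem \ref{THEOREM_Lax_representation} by algebraic manipulation in the reductive pair $(\mfg, \mfk)$.
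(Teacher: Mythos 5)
Your proof is correct and takes essentially the same route as the paper: use $L^{-1}=CLC$, observe that $B\in\mfk$ commutes with $C$, derive $\bsX_H[L^{-1}]=[L^{-1},B]$, and project $\bsX_H[(L-L^{-1})/2]=[D+Y,B]$ onto $\mfa$ and $\mfa^\perp$. You merely spell out some of the details the paper leaves implicit (the $CB=BC$ verification, the vanishing of $[D,B_\mfm]$, and the diagonal/off-diagonal bookkeeping of the commutator terms).
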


%%%%%%%%%%%%%%%%%%%%%%%%%%%%%%%%%%%%%%%%%%%%%%%%%%%%%%%%%%%%%%%%%%%%%%%%%%%%%%%
%%% PROOF
%%%%%%%%%%%%%%%%%%%%%%%%%%%%%%%%%%%%%%%%%%%%%%%%%%%%%%%%%%%%%%%%%%%%%%%%%%%%%%%
\begin{proof}
As a consequence of Proposition \ref{PROPOSITION_L_in_G}, for the inverse of 
$L$ we can write that $L^{-1} = C L C$. Since the matrix valued function $B$ 
\eqref{B} takes values in $\mfk$ \eqref{mfk}, from Theorem 
\ref{THEOREM_Lax_representation} we infer that
\be
    \bsX [L^{-1}] 
    = C \bsX_H [L] C = C [L, B] C = [C L C, C B C] = [L^{-1}, B],
\label{L_inv_der}
\ee
thus the equation
\be
    \bsX_H [(L - L^{-1}) / 2] = [(L - L^{-1}) / 2, B]
\label{bsX_H_on_L-L_inv}
\ee
is immediate. Due to the relationship \eqref{D+Y}, by simply projecting of 
the above equation onto the subspaces $\mfa$ and $\mfa^\perp$, respectively, 
the derivatives displayed in \eqref{D_and_Y_der} follow at once.
\end{proof}
%%%%%%%%%%%%%%%%%%%%%%%%%%%%%%%%%%%%%%%%%%%%%%%%%%%%%%%%%%%%%%%%%%%%%%%%%%%%%%%

%%%%%%%%%%%%%%%%%%%%%%%%%%%%%%%%%%%%%%%%%%%%%%%%%%%%%%%%%%%%%%%%%%%%%%%%%%%%%%%
%%% SUBSECTION: Geodesic interpretation
%%%%%%%%%%%%%%%%%%%%%%%%%%%%%%%%%%%%%%%%%%%%%%%%%%%%%%%%%%%%%%%%%%%%%%%%%%%%%%%
\subsection{Geodesic interpretation}
\label{SUBSECTION_Geodesic_interpretation}
The geometric study of the CMS type integrable systems goes back to the 
fundamental works of Olshanetsky and Perelomov (see e.g. 
\cite{Olsha_Pere_1976, OlshaPere}). Since their landmark papers the 
so-called projection method has been vastly generalized to cover many 
variants of the CMS type particle systems. By now some result are
available in the context of the RSvD models, too. For details, see
e.g. \cite{KKS, Feher_Pusztai_NPB2006, Feher_Pusztai_2007, 
Feher_Klimcik_0901, Pusztai_NPB2011, Pusztai_NPB2012}. The primary goal 
of this subsection is to show that the Hamiltonian flow generated by the 
Hamiltonian \eqref{H} can be also obtained by an appropriate `projection 
method' from the geodesic flow of the Lie group $U(n ,n)$. In order to 
make this statement more precise, take the maximal integral curve
\be
    \bR \ni 
        t 
        \mapsto (
        \lambda(t), \theta(t)) 
        = (\lambda_1(t), \ldots, \lambda_n(t), 
            \theta_1(t), \ldots, \theta_n(t))
    \in P
\label{trajectory}
\ee
of the Hamiltonian vector field $\bsX_H$ \eqref{bsX_H} satisfying the initial 
condition
\be
    \gamma(0) = \gamma_0,
\label{trajectory_init_cond}
\ee
where $\gamma_0 \in P$ is an arbitrary point. By exploiting Proposition 
\ref{PROPOSITION_D_and_Y_derivatives}, we start our analysis with the 
following observation.

%%%%%%%%%%%%%%%%%%%%%%%%%%%%%%%%%%%%%%%%%%%%%%%%%%%%%%%%%%%%%%%%%%%%%%%%%%%%%%%
%%% PROPOSITION: dynamics of bsLambda and Y
%%%%%%%%%%%%%%%%%%%%%%%%%%%%%%%%%%%%%%%%%%%%%%%%%%%%%%%%%%%%%%%%%%%%%%%%%%%%%%%
\begin{PROPOSITION}
\label{PROPOSITION_ddot_bsLambda_etc}
Along the maximally defined trajectory \eqref{trajectory}, the time evolution
of the diagonal matrix $\bsLambda = \bsLambda(t) \in \mfc$ \eqref{bsLambda} 
obeys the second order differential equation
\be
    \ddot{\bsLambda} + [Y, \coth(\wad_{\bsLambda}) Y]_\mfa = 0,
\label{bsLambda_DE}
\ee
whilst for the evolution of $Y = Y(t)$ \eqref{Y} we have the first order 
equation
\be
    \dot{Y} + [Y, \coth(\wad_{\bsLambda}) Y]_{\mfa^\perp} - [Y, B_{\mfm}]
        + [\dot{\bsLambda}, \coth(\wad_{\bsLambda}) Y] = 0.
\label{Y_DE}
\ee
\end{PROPOSITION}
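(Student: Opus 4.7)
The plan is to read off both differential equations directly from Proposition~\ref{PROPOSITION_D_and_Y_derivatives}, with essentially no extra work beyond substituting the defining formula $B_{\mfm^\perp} = -\coth(\wad_{\bsLambda}) Y$ \eqref{B_mfm_perp} and the identity $\dot{\bsLambda} = D$ that comes from \eqref{bsLambda_der}. Along the maximal integral curve $\gamma$ \eqref{trajectory}, the time derivative of any smooth function on $P$ coincides with the action of $\bsX_H$, so the two formulas in \eqref{D_and_Y_der} translate verbatim into ODEs for $D(t)$ and $Y(t)$.

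For the first equation, I would begin from $\dot{\bsLambda} = \bsX_H[\bsLambda] = D$ and differentiate once more to obtain $\ddot{\bsLambda} = \bsX_H[D]$. Applying the first identity of Proposition~\ref{PROPOSITION_D_and_Y_derivatives} gives $\ddot{\bsLambda} = [Y, B_{\mfm^\perp}]_\mfa$, and substituting \eqref{B_mfm_perp} immediately yields
\begin{equation*}
    \ddot{\bsLambda} = -[Y, \coth(\wad_{\bsLambda}) Y]_\mfa,
\end{equation*}
which is \eqref{bsLambda_DE}.

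For the second equation, the same pattern works: by the second identity in \eqref{D_and_Y_der},
\begin{equation*}
    \dot{Y} = \bsX_H[Y]
    = [Y, B_{\mfm^\perp}]_{\mfa^\perp} + [D, B_{\mfm^\perp}] + [Y, B_\mfm].
\end{equation*}
Replacing $B_{\mfm^\perp}$ by $-\coth(\wad_{\bsLambda}) Y$ and $D$ by $\dot{\bsLambda}$ reproduces \eqref{Y_DE} after a sign adjustment on the first two commutator terms.

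Since the core dynamical input — namely Theorem~\ref{THEOREM_Lax_representation} and its Proposition~\ref{PROPOSITION_D_and_Y_derivatives} corollary — is already in hand, there is no genuine obstacle here; the statement is really a bookkeeping consequence. The only thing to be careful about is distinguishing the $\mfa$ and $\mfa^\perp$ components of the commutator $[Y, \coth(\wad_{\bsLambda}) Y]$ and making sure the `diagonal' piece and the `off-diagonal' piece end up in the correct equation, but this follows transparently from the orthogonal decomposition \eqref{refined_decomposition} together with the facts that $Y \in \mfa^\perp$ and $B_\mfm \in \mfm$ (so $[Y, B_\mfm] \in \mfa^\perp$ does not contaminate the $\mfa$-component).
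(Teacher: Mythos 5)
Your proof is correct and is essentially identical to the paper's: the paper likewise writes $\dot{\bsLambda} = D$, $\dot{D} = [Y, B_{\mfm^\perp}]_\mfa$, and $\dot{Y} = [Y, B_{\mfm^\perp}]_{\mfa^\perp} + [D, B_{\mfm^\perp}] + [Y, B_\mfm]$, then substitutes $B_{\mfm^\perp} = -\coth(\wad_{\bsLambda})Y$ to obtain \eqref{bsLambda_DE} and \eqref{Y_DE}. The closing remark on why $[Y,B_\mfm]$ and $[D,B_{\mfm^\perp}]$ land in $\mfa^\perp$ is a useful sanity check but is not needed, since the $\mfa$/$\mfa^\perp$ projections are already supplied by Proposition~\ref{PROPOSITION_D_and_Y_derivatives}.
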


%%%%%%%%%%%%%%%%%%%%%%%%%%%%%%%%%%%%%%%%%%%%%%%%%%%%%%%%%%%%%%%%%%%%%%%%%%%%%%%
%%% PROOF
%%%%%%%%%%%%%%%%%%%%%%%%%%%%%%%%%%%%%%%%%%%%%%%%%%%%%%%%%%%%%%%%%%%%%%%%%%%%%%%
\begin{proof}
Due to equation \eqref{bsLambda_der}, along the solution curve 
\eqref{trajectory} we can write 
\be
    \dot{\bsLambda} = D, 
\label{bsLambda_dot} 
\ee
whereas from the relationships displayed in \eqref{D_and_Y_der} we get
\be
    \dot{D} = [Y, B_{\mfm^\perp}]_{\mfa}
    \quad \text{and} \quad
    \dot{Y} 
    = [Y, B_{\mfm^\perp}]_{\mfa^\perp} 
        + [D, B_{\mfm^\perp}] 
        + [Y, B_{\mfm}].
\label{D_dot_and_Y_dot}
\ee
Recalling the definition \eqref{B_mfm_perp}, equations \eqref{bsLambda_DE}
and \eqref{Y_DE} clearly follow.
\end{proof}
%%%%%%%%%%%%%%%%%%%%%%%%%%%%%%%%%%%%%%%%%%%%%%%%%%%%%%%%%%%%%%%%%%%%%%%%%%%%%%%

Next, by evaluating the matrices $Z$ \eqref{Z} and $B_\mfm$ \eqref{B_mfm}
along the fixed trajectory \eqref{trajectory}, for all $t \in \bR$ we define
\be
    \cK(t) = B_{\mfm}(t) - Z(t) \in \mfk.
\label{cK}
\ee
Since the dependence of $\cK$ on $t$ is smooth, there is a unique maximal
smooth solution 
\be
    \bR \ni t \mapsto k(t) \in GL(N, \bC)
\label{k}
\ee
of the first order differential equation
\be
    \dot{k}(t) = k(t) \cK(t)
    \qquad
    (t \in \bR)
\label{k_DE}
\ee
satisfying the initial condition 
\be
    k(0) = \bsone_N. 
\label{k_initial_cond}
\ee
Since \eqref{k_DE} is a linear differential equation for $k$, the existence 
of such a global fundamental solution is obvious. Moreover, since $\cK$ 
\eqref{cK} takes values in the Lie algebra $\mfk$ \eqref{mfk}, the trivial 
observations
\be
    \frac{\dd (k C k^*)}{\dd t}
    = \dot{k} C k^* + k C \dot{k}^*
    = k (\cK C + C \cK^*) k^*
    = 0
    \quad \text{and} \quad
    k(0) C k(0)^* = C
\label{k_quadratic_eq}
\ee
imply immediately that $k$ \eqref{k} actually takes values in the subgroup 
$K$ \eqref{K}; that is,
\be
    k(t) \in K
    \qquad
    (t \in \bR).
\label{k_in_K}
\ee
Utilizing $k$, we can formulate the most important technical result of this 
subsection.

%%%%%%%%%%%%%%%%%%%%%%%%%%%%%%%%%%%%%%%%%%%%%%%%%%%%%%%%%%%%%%%%%%%%%%%%%%%%%%%
%%% LEMMA: A
%%%%%%%%%%%%%%%%%%%%%%%%%%%%%%%%%%%%%%%%%%%%%%%%%%%%%%%%%%%%%%%%%%%%%%%%%%%%%%%
\begin{LEMMA}
\label{LEMMA_A}
The smooth function
\be
    \bR \ni t 
        \mapsto 
        A(t) = k(t) e^{2 \bsLambda(t)} k(t)^{-1}
    \in \exp(\mfp_\reg)
\label{A}
\ee
satisfies the second order geodesic differential equation
\be
    \frac{\dd}{\dd t} \left( \frac{\dd A(t)}{\dd t} A(t)^{-1} \right) = 0
    \qquad
    (t \in \bR).
\label{A_geodesic_eqn}
\ee
\end{LEMMA}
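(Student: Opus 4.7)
The plan is to show that $\dot A(t) A(t)^{-1}$ is constant in $t$, by first rewriting it in a form controlled by the Lax equation of Theorem \ref{THEOREM_Lax_representation} and then reducing the geodesic condition to a purely Lie-algebraic identity.

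First I would compute $\dot A A^{-1}$ explicitly. Differentiating $A = k e^{2 \bsLambda} k^{-1}$ using $\dot k = k \cK$ from \eqref{k_DE}, the relation $\dot{\bsLambda} = D$ from \eqref{bsLambda_dot}, and $\frac{\dd}{\dd t}(k^{-1}) = -\cK k^{-1}$, I expect
\[
    \dot A A^{-1} = k \bigl[\, 2 D + (\Id - e^{2 \ad_{\bsLambda}}) \cK \,\bigr] k^{-1}.
\]
Since $B_\mfm$ commutes with $\bsLambda$, only the $Z$ component of $\cK = B_\mfm - Z$ contributes, and the identity $\sinh(\wad_{\bsLambda}) Z = Y$ built into \eqref{Z} will collapse this bracket to $e^{\ad_{\bsLambda}}(L - L^{-1})$, giving
\[
    \dot A A^{-1} = k\, W\, k^{-1},
    \qquad W := e^{\ad_{\bsLambda}}(L - L^{-1}).
\]

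Next, a routine conjugation calculation yields $\frac{\dd}{\dd t}(k W k^{-1}) = k \bigl(\dot W + [\cK, W]\bigr) k^{-1}$, so the desired equation \eqref{A_geodesic_eqn} becomes equivalent to $\dot W + [\cK, W] = 0$. To evaluate $\dot W$, I would use $\frac{\dd}{\dd t} e^{\ad_{\bsLambda}} = \ad_D \circ e^{\ad_{\bsLambda}}$ (valid because $D$ commutes with $\bsLambda$), the Lax equation $\dot L = [L, B]$ of Theorem \ref{THEOREM_Lax_representation} together with its immediate corollary $\frac{\dd}{\dd t}(L^{-1}) = [L^{-1}, B]$, and the automorphism property of $e^{\ad_{\bsLambda}}$ on brackets. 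These combine to give $\dot W = [D, W] + [W, e^{\ad_{\bsLambda}} B]$, so the target reduces to
\[
    \bigl[\, W,\; D + \cK - e^{\ad_{\bsLambda}} B \,\bigr] = 0.
\]

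The decisive step is to show that $D + \cK - e^{\ad_{\bsLambda}} B$ is itself proportional to $W$, making the commutator vanish trivially. Expanding $B = B_\mfm + B_{\mfm^\perp}$ and $\cK = B_\mfm - Z$, the $B_\mfm$ terms cancel, leaving $D - Z - e^{\ad_{\bsLambda}} B_{\mfm^\perp}$. A short entry-wise calculation using $Z_{k,l} = Y_{k,l} / \sinh(\Lambda_k - \Lambda_l)$, $(B_{\mfm^\perp})_{k,l} = -\coth(\Lambda_k - \Lambda_l) Y_{k,l}$, and the elementary identity $e^{x} \cosh(x) - 1 = e^{x} \sinh(x)$ will collapse this to $D + e^{\ad_{\bsLambda}} Y = W/2$. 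Since $[W, W/2] = 0$, the geodesic equation follows. That $A(t)$ actually takes values in $\exp(\mfp_\reg)$ is automatic from the characterization \eqref{mfp_reg_identification}, since $\bsLambda(t) \in \mfc \subseteq \mfa_\reg$ by Theorem \ref{THEOREM_Completeness} and $k(t) \in K$.

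The main obstacle I anticipate is precisely the final paragraph: recognising that the specific choice of $\cK$ encoded in \eqref{cK} and \eqref{k_DE} is calibrated so that $D + \cK - e^{\ad_{\bsLambda}} B$ reduces to a scalar multiple of $W$ itself. Once this structural observation is in place, all the remaining manipulations are essentially mechanical, relying only on the Lax equation and the commutativity of $D$ and $B_\mfm$ with $\bsLambda$.
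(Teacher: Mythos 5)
Your proof is correct, and it takes a genuinely more direct route than the paper's. The paper first establishes Propositions \ref{PROPOSITION_D_and_Y_derivatives} and \ref{PROPOSITION_ddot_bsLambda_etc} as intermediaries, then splits $\dot{A}A^{-1}$ into its symmetric and antisymmetric parts, identifies $\frac{1}{4}(\dot{A}A^{-1}+A^{-1}\dot{A}) = k\cL k^{-1}$ with $\cL = \dot{\bsLambda}+\cosh(\wad_{\bsLambda})Y\in\mfp$ and $\frac{1}{4}(\dot{A}A^{-1}-A^{-1}\dot{A}) = k\cN k^{-1}$ with $\cN = \sinh(\wad_{\bsLambda})Y\in\mfk$, and proves \emph{two} identities $\dot{\cL}-[\cL,\cK]=0$, $\dot{\cN}-[\cN,\cK]=0$ by direct substitution of \eqref{bsLambda_DE}--\eqref{Y_DE}. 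Your $W=e^{\ad_{\bsLambda}}(L-L^{-1})$ is precisely $2(\cL+\cN)$, i.e.\ (twice) the sum of the paper's $\mfp$- and $\mfk$-pieces, and you bypass the Cartan splitting by feeding the Lax equation \eqref{Lax_representation} and $\dot{L}^{-1}=[L^{-1},B]$ directly into $\dot W$, which collapses the whole problem to the algebraic observation $D+\cK-e^{\ad_{\bsLambda}}B = W/2$. I checked the key entrywise step: with $x=\Lambda_k-\Lambda_l$ one has $-\tfrac{1}{\sinh x}+e^x\coth x = \tfrac{e^x\cosh x - 1}{\sinh x} = e^x$, so indeed $-Z-e^{\ad_{\bsLambda}}B_{\mfm^\perp}=e^{\ad_{\bsLambda}}Y$ and the rest follows from $D+Y=(L-L^{-1})/2$. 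What you give up is the slightly stronger information in the paper's version, namely that $A^{-1}\dot A$ is also separately constant in $t$; what you gain is economy (no need for Propositions \ref{PROPOSITION_D_and_Y_derivatives}, \ref{PROPOSITION_ddot_bsLambda_etc}, nor the hyperbolic manipulations in \eqref{comm_term_1}--\eqref{comm_term_4}) and the conceptually clean endpoint that the geodesic covariant derivative of $W$ is a commutator of $W$ with a scalar multiple of itself.
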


%%%%%%%%%%%%%%%%%%%%%%%%%%%%%%%%%%%%%%%%%%%%%%%%%%%%%%%%%%%%%%%%%%%%%%%%%%%%%%%
%%% PROOF
%%%%%%%%%%%%%%%%%%%%%%%%%%%%%%%%%%%%%%%%%%%%%%%%%%%%%%%%%%%%%%%%%%%%%%%%%%%%%%%
\begin{proof}
First, let us observe that \eqref{A} is a well-defined map. Indeed, since 
along the trajectory \eqref{trajectory} we have $\bsLambda(t) \in \mfc$, 
from \eqref{mfp_reg_identification} we see that $A$ does take values in 
$\exp(\mfp_\reg)$. Continuing with the proof proper, notice that for all 
$t \in \bR$ we have $A^{-1} = k e^{-2 \bsLambda} k^{-1}$ and
\be
    \dot{A} 
    = \dot{k} e^{2 \bsLambda} k^{-1}
        + k e^{2 \bsLambda} 2 \dot{\bsLambda} k^{-1}
        - k e^{2 \bsLambda} k^{-1} \dot{k} k^{-1},
\label{A_dot}
\ee
thus the formulae
\be
    \dot{A} A^{-1}  
    = k \left( 
            2 \dot{\bsLambda} - e^{2 \ad_{\bsLambda}} \cK + \cK 
        \right) k^{-1}
    \quad \text{and} \quad
    A^{-1} \dot{A} 
    = k \left( 
            2 \dot{\bsLambda} + e^{-2 \ad_{\bsLambda}} \cK - \cK 
        \right) k^{-1}
\label{A_inv_and_A_dot}
\ee
are immediate. Upon introducing the shorthand notations
\begin{align}
    & \cL(t) = \dot{\bsLambda}(t) + \cosh(\wad_{\bsLambda(t)}) Y(t) \in \mfp,
    \label{cL} \\
    & \cN(t) = \sinh(\wad_{\bsLambda(t)}) Y(t) \in \mfk,
    \label{cN}
\end{align}
from \eqref{A_inv_and_A_dot} we conclude that
\be
\begin{split}
    \frac{\dot{A} A^{-1} + A^{-1} \dot{A}}{4} 
    & = k \left( 
            \dot{\bsLambda} - \half \sinh(2 \ad_{\bsLambda}) \cK 
            \right) k^{-1} 
    = k \left( 
            \dot{\bsLambda} - \half \sinh(2 \wad_{\bsLambda}) \cK_{\mfm^\perp} 
            \right) k^{-1}
    \\
    & = k \left( 
            \dot{\bsLambda} 
            + \cosh(\wad_{\bsLambda}) \sinh(\wad_{\bsLambda}) Z
            \right) k^{-1}
    = k \cL k^{-1},
\end{split}
\label{A_dot_+}
\ee
and the relationship
\be
\begin{split}
    \frac{\dot{A} A^{-1} - A^{-1} \dot{A}}{4} 
    & = k \frac{\cK - \cosh(2 \ad_{\bsLambda}) \cK}{2} k^{-1} 
    = - k \left( \sinh(\ad_{\bsLambda})^2 \cK \right) k^{-1} 
    \\
    & = k \left( \sinh(\wad_{\bsLambda})^2 Z \right) k^{-1}
    = k \cN k^{-1}
\end{split}
\label{A_dot_-}
\ee
also follows. 

Now, by differentiating \eqref{A_dot_+} with respect to time $t$, we get
\be
    \frac{\dd}{\dd t} \frac{\dot{A} A^{-1} + A^{-1} \dot{A}}{4}
    = k \left (\dot{\cL} - [\cL, \cK] \right) k^{-1}.
\label{A_ddot_+}
\ee
Recalling the definition \eqref{cL}, Leibniz rule yields
\be
    \dot{\cL}
    = \ddot{\bsLambda} 
        + [\dot{\bsLambda}, \sinh(\wad_{\bsLambda}) Y] 
        + \cosh(\wad_{\bsLambda}) \dot{Y},
\label{cL_dot}
\ee
and the commutator
\be
    [\cL, \cK] 
    = - [\dot{\bsLambda}, \sinh(\wad_{\bsLambda})^{-1} Y]
        + [\cosh(\wad_{\bsLambda}) Y, B_{\mfm}]
        - [\cosh(\wad_{\bsLambda}) Y, \sinh(\wad_{\bsLambda})^{-1} Y]
\label{cL_and_cK_commutator}
\ee
is also immediate. By inspecting the right-hand side of the above equation, 
for the second term one can easily derive that
\be
\begin{split}
    [\cosh(\wad_{\bsLambda}) Y, B_{\mfm}] 
    & = \half [e^{\ad_{\bsLambda}} Y, B_{\mfm}] 
        + \half [e^{-\ad_{\bsLambda}} Y, B_{\mfm}]
    = \half e^{\ad_{\bsLambda}} [Y, B_{\mfm}]
        + \half e^{-\ad_{\bsLambda}} [Y, B_{\mfm}]
    \\
    & = \cosh(\ad_{\bsLambda}) [Y, B_{\mfm}]
    = \cosh(\wad_{\bsLambda}) [Y, B_{\mfm}].
\end{split}
\label{comm_term_1}
\ee
Furthermore, bearing in mind the identities appearing in \eqref{coth_iden}, 
a slightly longer calculation also reveals that the third term in 
\eqref{cL_and_cK_commutator} can be cast into the form
\be
\begin{split}
    & [\cosh(\wad_{\bsLambda}) Y, \sinh(\wad_{\bsLambda})^{-1} Y]
    = \half [e^{\ad_{\bsLambda}} Y, \sinh(\wad_{\bsLambda})^{-1} Y]
        + \half [e^{-\ad_{\bsLambda}} Y, \sinh(\wad_{\bsLambda})^{-1} Y]
    \\
    & \quad
    = \half e^{\ad_{\bsLambda}} 
                [Y, e^{-\wad_{\bsLambda}} \sinh(\wad_{\bsLambda})^{-1} Y]
        + \half e^{-\ad_{\bsLambda}} 
                [Y, e^{\wad_{\bsLambda}} \sinh(\wad_{\bsLambda})^{-1} Y]
    \\
    & \quad
    = \cosh(\ad_{\bsLambda}) [Y, \coth(\wad_{\bsLambda}) Y]
    = [Y, \coth(\wad_{\bsLambda}) Y]_{\mfa} 
        + \cosh(\wad_{\bsLambda}) [Y, \coth(\wad_{\bsLambda}) Y]_{\mfa^\perp}. 
\end{split}
\label{comm_term_2}
\ee
Now, by plugging the expressions \eqref{comm_term_1} and \eqref{comm_term_2}
into \eqref{cL_and_cK_commutator}, and by applying the hyperbolic identity
\be
    \sinh(w) + \frac{1}{\sinh(w)} = \cosh(w) \coth(w)
    \qquad
    (w \in \bC),
\label{sinh_iden}
\ee
one finds immediately that
\be
\begin{split}
    \dot{\cL} - [\cL, \cK]
    = & \ddot{\bsLambda} + [Y, \coth(\wad_{\bsLambda}) Y]_{\mfa}
    \\ 
    & + \cosh(\wad_{\bsLambda}) 
        \left( 
            \dot{Y} 
            + [Y, \coth(\wad_{\bsLambda}) Y]_{\mfa^\perp}
            - [Y, B_{\mfm}]
            + [\dot{\bsLambda}, \coth(\wad_{\bsLambda}) Y]
        \right).
\end{split}
\label{cL_key}
\ee
Looking back to Proposition \ref{PROPOSITION_ddot_bsLambda_etc}, we see
that $\dot{\cL} - [\cL, \cK] = 0$, thus by \eqref{A_ddot_+} we end up with 
the equation
\be
    \frac{\dd}{\dd t} \frac{\dot{A} A^{-1} + A^{-1} \dot{A}}{4} = 0.
\label{A_ddot_+_OK}
\ee

Next, upon differentiating \eqref{A_dot_-} with respect to $t$, we see that
\be
    \frac{\dd}{\dd t} \frac{\dot{A} A^{-1} - A^{-1} \dot{A}}{4}
    = k \left (\dot{\cN} - [\cN, \cK] \right) k^{-1}.
\label{A_ddot_-}
\ee
Remembering the form of $\cN$ \eqref{cN}, Leibniz rule yields
\be
    \dot{\cN} 
    = \cosh(\wad_{\bsLambda}) [\dot{\bsLambda}, Y] 
        + \sinh(\wad_{\bsLambda}) \dot{Y}
    = \sinh(\wad_{\bsLambda})
        \left(
            \coth(\wad_{\bsLambda}) [\dot{\bsLambda}, Y] + \dot{Y}
        \right),
\label{cN_dot}
\ee
and the formula
\be
    [\cN, \cK] 
    = [\sinh(\wad_{\bsLambda}) Y, B_{\mfm}] 
        - [\sinh(\wad_{\bsLambda}) Y, \sinh(\wad_{\bsLambda})^{-1} Y]
\label{cN_and_cK_commutator}
\ee
is also immediate. Now, let us observe that the first term on the 
right-hand side of the above equation can be transformed into the 
equivalent form
\be
\begin{split}
    [\sinh(\wad_{\bsLambda}) Y, B_{\mfm}] 
    & = \half [e^{\ad_{\bsLambda}} Y, B_{\mfm}] 
        - \half [e^{-\ad_{\bsLambda}} Y, B_{\mfm}]
    = \half e^{\ad_{\bsLambda}} [Y, B_{\mfm}]
        - \half e^{-\ad_{\bsLambda}} [Y, B_{\mfm}]
    \\
    & = \sinh(\ad_{\bsLambda}) [Y, B_{\mfm}]
    = \sinh(\wad_{\bsLambda}) [Y, B_{\mfm}],
\end{split}
\label{comm_term_3}
\ee
while for the second term we get
\be
\begin{split}
    & [\sinh(\wad_{\bsLambda}) Y, \sinh(\wad_{\bsLambda})^{-1} Y]
    = \half [e^{\ad_{\bsLambda}} Y, \sinh(\wad_{\bsLambda})^{-1} Y]
        - \half [e^{-\ad_{\bsLambda}} Y, \sinh(\wad_{\bsLambda})^{-1} Y]
    \\
    & \quad
    = \half e^{\ad_{\bsLambda}} 
                [Y, e^{-\wad_{\bsLambda}} \sinh(\wad_{\bsLambda})^{-1} Y]
        - \half e^{-\ad_{\bsLambda}} 
                [Y, e^{\wad_{\bsLambda}} \sinh(\wad_{\bsLambda})^{-1} Y]
    \\
    & \quad
    = \sinh(\ad_{\bsLambda}) [Y, \coth(\wad_{\bsLambda}) Y]
    = \sinh(\wad_{\bsLambda}) [Y, \coth(\wad_{\bsLambda}) Y]_{\mfa^\perp}. 
\end{split}
\label{comm_term_4}
\ee
Taking into account the above expressions, we obtain that
\be
\begin{split}
    \dot{\cN} - [\cN, \cK]
    = \sinh(\wad_{\bsLambda}) 
        \left( 
            \dot{Y} 
            + [Y, \coth(\wad_{\bsLambda}) Y]_{\mfa^\perp}
            - [Y, B_{\mfm}]
            + [\dot{\bsLambda}, \coth(\wad_{\bsLambda}) Y]
        \right),
\end{split}
\label{cN_key}
\ee
whence by Proposition \ref{PROPOSITION_ddot_bsLambda_etc} we are entitled 
to write that $\dot{\cN} - [\cN, \cK] = 0$. Giving a glance at the 
relationship \eqref{A_ddot_-}, it readily follows that
\be
    \frac{\dd}{\dd t} \frac{\dot{A} A^{-1} - A^{-1} \dot{A}}{4} = 0.
\label{A_ddot_-_OK}
\ee
To complete the proof, observe that the desired geodesic equation 
\eqref{A_geodesic_eqn} is a trivial consequence of the equations
\eqref{A_ddot_+_OK} and \eqref{A_ddot_-_OK}.
\end{proof}
%%%%%%%%%%%%%%%%%%%%%%%%%%%%%%%%%%%%%%%%%%%%%%%%%%%%%%%%%%%%%%%%%%%%%%%%%%%%%%%

To proceed further, let us observe that by integrating the differential
equation \eqref{A_geodesic_eqn}, we obtain immediately that
\be
    \dot{A}(t) A(t)^{-1} = \dot{A}(0) A(0)^{-1}
    \qquad
    (t \in \bR).
\label{A_first_order_DE}
\ee
However, recalling the definitions \eqref{cL} and \eqref{cN}, and also the 
relationships \eqref{bsLambda_dot} and \eqref{D+Y}, from the equations 
\eqref{A_dot_+}, \eqref{A_dot_-} and \eqref{k_initial_cond} we infer that
\be
\begin{split}
    \dot{A}(0) A(0)^{-1} 
    & = 2 k(0) (\cL(0) + \cN(0)) k(0)^{-1}
    = 2 (\dot{\bsLambda}(0) + e^{\ad_{\bsLambda(0)}} Y(0))
    \\
    & = 2 e^{\ad_{\bsLambda(0)}} (D(0) + Y(0))
    = e^{\bsLambda(0)} (L(0) - L(0)^{-1}) e^{-\bsLambda(0)}.
\end{split}
\label{A_dot_t=0}
\ee
Moreover, remembering \eqref{k_initial_cond} and the definition \eqref{A}, 
at $t = 0$ we can also write that
\be
    A(0) = k(0) e^{2 \bsLambda(0)} k(0)^{-1} = e^{2 \bsLambda(0)}.
\label{A(0)}
\ee
Putting the above observations together, it is now evident that the unique 
maximal solution of the first order differential equation 
\eqref{A_first_order_DE} with the initial condition \eqref{A(0)} is the 
smooth curve
\be
    A(t) 
    = e^{t e^{\bsLambda(0)} (L(0) - L(0)^{-1}) e^{-\bsLambda(0)}} 
        e^{2 \bsLambda(0)}
    = e^{\bsLambda(0)} e^{t (L(0) - L(0)^{-1})} e^{\bsLambda(0)}
    \qquad
    (t \in \bR).
\label{A(t)_exp_form}
\ee
Comparing this formula with \eqref{A}, the following result is immediate.

%%%%%%%%%%%%%%%%%%%%%%%%%%%%%%%%%%%%%%%%%%%%%%%%%%%%%%%%%%%%%%%%%%%%%%%%%%%%%%%
%%% THEOREM: eigenvalue dynamics
%%%%%%%%%%%%%%%%%%%%%%%%%%%%%%%%%%%%%%%%%%%%%%%%%%%%%%%%%%%%%%%%%%%%%%%%%%%%%%%
\begin{THEOREM}
\label{THEOREM_eigenvalue_dynamics}
Take an arbitrary maximal solution \eqref{trajectory} of the van Diejen 
system \eqref{H}, then at each $t \in \bR$ it can be recovered uniquely  
from the spectral identification
\be
    \{ e^{\pm 2 \lambda_a(t)} \, | \, a \in \bN_n \}
    = \mathrm{Spec} 
        (e^{\bsLambda(0)} e^{t (L(0) - L(0)^{-1})} e^{\bsLambda(0)}).
\label{spectral_identification}
\ee
\end{THEOREM}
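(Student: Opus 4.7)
My plan is to combine the geometric framework of Lemma \ref{LEMMA_A} with the explicit integration of the geodesic equation already carried out in equations \eqref{A_dot_t=0}--\eqref{A(t)_exp_form}. By the definition \eqref{A}, the matrix $A(t) = k(t)\, e^{2\bsLambda(t)}\, k(t)^{-1}$ is conjugate (via the group element $k(t) \in K$) to the diagonal matrix $e^{2\bsLambda(t)}$, so
\[
    \mathrm{Spec}(A(t)) = \mathrm{Spec}(e^{2\bsLambda(t)}) = \{e^{\pm 2\lambda_a(t)} \mid a \in \bN_n\}.
\]
On the other hand, the closed form \eqref{A(t)_exp_form} reads $A(t) = e^{\bsLambda(0)}\, e^{t(L(0) - L(0)^{-1})}\, e^{\bsLambda(0)}$. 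Equating the two spectral descriptions of $A(t)$ yields the identity \eqref{spectral_identification} at once.

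For the uniqueness clause I would invoke the Weyl chamber condition $\bsLambda(t) \in \mfc$: the components obey $\lambda_1(t) > \ldots > \lambda_n(t) > 0$, so the multiset on the right-hand side of \eqref{spectral_identification} consists of $n$ pairs of distinct, mutually reciprocal positive reals. The $\lambda_a(t)$ are therefore uniquely recovered as the $n$ half-logarithms of the eigenvalues strictly greater than $1$, listed in decreasing order. Once the curve $t \mapsto \lambda(t)$ is known, the conjugate coordinates $\theta_a(t)$ are recovered from the first Hamilton equation \eqref{lambda_dot}, $\dot{\lambda}_a = \sinh(\theta_a)\, u_a$, using the strict monotonicity of $\sinh$ and the positivity of $u_a$.

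I do not foresee a genuine obstacle here, since Lemma \ref{LEMMA_A} together with the integration leading to \eqref{A(t)_exp_form} performs the main analytic work. The only point that demands care is global well-definedness: the trajectory must exist for all $t \in \bR$ and the diagonal matrix $\bsLambda(t)$ must stay inside the Weyl chamber $\mfc$ at every moment, so that the factorisation $A(t) = k(t)\, e^{2\bsLambda(t)}\, k(t)^{-1}$ persists and the diffeomorphism \eqref{mfp_reg_identification} keeps the reconstruction of $\lambda(t)$ from $A(t)$ unambiguous. Both facts are supplied by Theorem \ref{THEOREM_Completeness} together with the identification $Q \cong \mfc$ set up in Section \ref{SECTION_Preliminaries}.
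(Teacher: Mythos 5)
Your proposal is correct and follows essentially the same route as the paper: Lemma \ref{LEMMA_A} together with the integration steps \eqref{A_dot_t=0}--\eqref{A(t)_exp_form} already carry the analytic load, and the spectral identification then follows by comparing the two expressions for $A(t)$, exactly as the paper does. Your additional remarks on the uniqueness of the reconstruction (ordering of $\lambda_a(t)$ from the Weyl chamber condition, recovering $\theta_a(t)$ from $\dot{\lambda}_a = \sinh(\theta_a)\,u_a$, and invoking completeness from Theorem \ref{THEOREM_Completeness}) match the discussion the paper gives immediately after the theorem, including equation \eqref{theta_evolution}.
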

%%%%%%%%%%%%%%%%%%%%%%%%%%%%%%%%%%%%%%%%%%%%%%%%%%%%%%%%%%%%%%%%%%%%%%%%%%%%%%%

The essence of the above theorem is that any solution \eqref{trajectory} of 
the van Diejen system \eqref{H} can be obtained by a purely algebraic process 
based on the diagonalization of a matrix flow. Indeed, once one finds the 
evolution of $\lambda(t)$ from \eqref{spectral_identification}, the evolution 
of $\theta(t)$ also becomes accessible by the formula
\be
    \theta_a(t) 
    = \mathrm{arcsinh} 
        \left( \frac{\dot{\lambda}_a(t)}{u_a(\lambda(t))} \right)
    \qquad
    (a \in \bN_n), 
\label{theta_evolution}
\ee
as dictated by the equation of motion \eqref{lambda_dot}.

%%%%%%%%%%%%%%%%%%%%%%%%%%%%%%%%%%%%%%%%%%%%%%%%%%%%%%%%%%%%%%%%%%%%%%%%%%%%%%%
%%% SUBSECTION: Temporal asymptotics
%%%%%%%%%%%%%%%%%%%%%%%%%%%%%%%%%%%%%%%%%%%%%%%%%%%%%%%%%%%%%%%%%%%%%%%%%%%%%%%
\subsection{Temporal asymptotics}
\label{SUBSECTION_asymptotics}
One of the immediate consequences of the projection method formulated
in the previous subsection is that the Hamiltonian \eqref{H} describes 
a `repelling' particle system, thus it is fully justified to inquire
about its scattering properties. Although rigorous scattering theory is 
in general a hard subject, a careful study of the algebraic solution 
algorithm described in Theorem \ref{THEOREM_eigenvalue_dynamics} allows 
us to investigate the asymptotic properties of any maximally defined 
trajectory \eqref{trajectory} as $t \to \pm \infty$. In this respect our 
main tool is Ruijsenaars' theorem on the spectral asymptotics of exponential 
type matrix flows (see Theorem A2 in \cite{Ruij_CMP1988}). To make it work, 
let us look at the relationship \eqref{L_diagonalized} and Lemma 
\ref{LEMMA_regularity}, from where we see that there is a group element 
$y \in K$ and a unique real $n$-tuple 
$\htheta = (\htheta_1, \ldots, \htheta_n) \in \bR^n$ satisfying 
\be
    \htheta_1 > \ldots > \htheta_n > 0, 
\label{theta_order}
\ee
such that with the (regular) diagonal matrix $\hbsTheta \in \mfc$ defined 
in \eqref{hbsTheta} we can write that
\be
    L(0) = y e^{2 \hbsTheta} y^{-1}.
\label{L(0)_diagonalized}
\ee
Following the notations of the previous subsection, here $L(0)$ still stands 
for the Lax matrix \eqref{L} evaluated along the trajectory \eqref{trajectory} 
at $t = 0$. Since
\be
    L(0) - L(0)^{-1} = 2 y \sinh(2 \hbsTheta) y^{-1},
\label{exponent}
\ee
with the aid of the positive definite matrix
\be
    \hat{L} = y^{-1} e^{2 \bsLambda(0)} y \in \exp(\mfp)
\label{hat_L_0}
\ee
for the spectrum of the matrix flow appearing in \eqref{A(t)_exp_form} 
we obtain at once that
\be
    \mathrm{Spec} 
        (e^{\bsLambda(0)} e^{t (L(0) - L(0)^{-1})} e^{\bsLambda(0)})
    = \mathrm{Spec} (\hat{L} e^{2 t \sinh(2 \hbsTheta)}).
\label{spectrum_of_matrix_flow}
\ee
In order to make a closer contact with Ruijsenaars' theorem, let us also
introduce the Hermitian $n \times n$ matrix $\cR$ with entries
\be
    \cR_{a, b} = \delta_{a + b, n + 1}.
\label{cR}
\ee
Since $\cR^2 = \bsone_n$, we have $\cR^{-1} = \cR$, whence the 
block-diagonal matrix
\be
    \cW = \begin{bmatrix}
        \bsone_n & 0_n \\
        0_n & \cR_n
    \end{bmatrix} 
    \in GL(N, \bC),
\label{cW}
\ee
also satisfies the relations $\cW^{-1} = \cW = \cW^*$. As the most important 
ingredients of our present analysis, now we introduce the matrices
\be
    \bsTheta^+ = 2 \cW \hbsTheta \cW^{-1}
    \quad \text{and} \quad
    \tilde{L} = \cW \hat{L} \cW^{-1}.
\label{bsTheta_+_and_L_tilde}
\ee
Recalling the relationships \eqref{spectral_identification} and 
\eqref{spectrum_of_matrix_flow}, it is clear that for all $t \in \bR$
we can write that
\be
    \{ e^{\pm 2 \lambda_a(t)} \, | \, a \in \bN_n \}
    = \mathrm{Spec} 
        (
            \tilde{L} e^{2 t \sinh(\bsTheta^+)}
        ).
\label{spectral_identification_OK}
\ee
However, upon performing the conjugations with the unitary matrix $\cW$
\eqref{cW} in the defining equations displayed in 
\eqref{bsTheta_+_and_L_tilde}, we find immediately that
\be
    \bsTheta^+ 
    = \diag(\theta_1^+, \ldots, \theta_n^+, 
            - \theta_n^+, \ldots, -\theta_1^+),
\label{bsTheta_+}
\ee
where
\be
    \theta_a^+ = 2 \htheta_a
    \qquad
    (a \in \bN_n).
\label{theta_+}
\ee
The point is that, due to our regularity result formulated in Lemma
\ref{LEMMA_regularity}, the diagonal matrix \eqref{bsTheta_+} has a simple 
spectrum, and its eigenvalues are in strictly decreasing order along the 
diagonal (see \eqref{theta_order}). Moreover, since $\hat{L}$ 
\eqref{hat_L_0} is positive definite, so is $\tilde{L}$. In particular, the 
leading principal minors of matrix $\tilde{L}$ are all strictly positive. 
So, the exponential type matrix flow
\be
    \bR \ni t \mapsto \tilde{L} e^{2 t \sinh(\bsTheta^+)} \in GL(N, \bC)
\label{matrix_flow}
\ee
does meet all the requirements of Ruijsenaars' aforementioned theorem.
Therefore, essentially by taking the logarithm of the quotients of 
the consecutive leading principal minors of the $n \times n$ submatrix taken 
from the upper-left-hand corner of $\tilde{L}$, one finds a unique real 
$n$-tuple
\be
    \lambda^+ = (\lambda_1^+, \dots, \lambda_n^+) \in \bR^n
\label{lambda_+}
\ee
such that for all $a \in \bN_n$ we can write 
\be
    \lambda_a(t) \sim t \sinh(\theta_a^+) + \lambda_a^+
    \quad \text{and} \quad
    \theta_a(t) \sim \theta_a^+,
\label{lambda_asymptotics_t_to_infty}
\ee
up to exponentially vanishing small terms as $t \to \infty$. It is obvious 
that the same ideas work for the case $t \to -\infty$, too, with complete
control over the asymptotic momenta $\theta_a^-$ and the asymptotic phases 
$\lambda_a^-$ as well. The above observations can be summarized as follows.

%%%%%%%%%%%%%%%%%%%%%%%%%%%%%%%%%%%%%%%%%%%%%%%%%%%%%%%%%%%%%%%%%%%%%%%%%%%%%%%
%%% LEMMA: temporal asymptotics
%%%%%%%%%%%%%%%%%%%%%%%%%%%%%%%%%%%%%%%%%%%%%%%%%%%%%%%%%%%%%%%%%%%%%%%%%%%%%%%
\begin{LEMMA}
\label{LEMMA_asymptotics}
For an arbitrary maximal solution \eqref{trajectory} of the hyperbolic 
$n$-particle van Diejen system \eqref{H} the particles move asymptotically 
freely as $\vert t \vert \to \infty$. More precisely, for all $a \in \bN_n$ 
we have the asymptotics
\be
    \lambda_a(t) \sim t \sinh(\theta_a^\pm) + \lambda_a^\pm
    \quad \text{and} \quad
    \theta_a(t) \sim \theta_a^\pm
    \qquad
    (t \to \pm \infty),
\label{asymptotic_result}
\ee
where the asymptotic momenta obey
\be
    \theta_a^- = -\theta_a^+
    \quad \text{and} \quad
    \theta_1^+ > \ldots > \theta_n^+ > 0.
\label{asymptotic_momenta_conds}
\ee
\end{LEMMA}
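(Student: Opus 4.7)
My plan is essentially to package into a formal proof the heuristic argument the authors have already sketched in the paragraphs leading up to the Lemma, pinning down the precise inputs and handling the backward asymptotics by symmetry. The work splits into a forward analysis, a time reversal argument, and a short verification of the momentum asymptotics.

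For the $t \to +\infty$ direction, I would start by invoking Theorem \ref{THEOREM_eigenvalue_dynamics} together with the reduction carried out above the Lemma, so that the positions are determined by
\be
    \{e^{\pm 2\lambda_a(t)} \mid a \in \bN_n\}
    = \mathrm{Spec}\bigl(\tilde{L}\, e^{2t \sinh(\bsTheta^+)}\bigr).
\label{plan_spec}
\ee
To apply Ruijsenaars' asymptotic theorem (Theorem A2 in \cite{Ruij_CMP1988}) to this matrix flow, I need to check two hypotheses. First, the diagonal matrix $\bsTheta^+$ \eqref{bsTheta_+} must have a simple spectrum arranged in strictly decreasing order on the diagonal; this is exactly the content of Lemma \ref{LEMMA_regularity}, rephrased via \eqref{theta_+}. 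Second, the Hermitian coefficient matrix $\tilde L = \cW \hat L \cW^{-1}$ must have all leading principal minors positive; since Lemma \ref{LEMMA_L_in_exp_p} gives $L(0) \in \exp(\mfp)$, both $\hat L$ and its unitary conjugate $\tilde L$ are positive definite, and hence pass Sylvester's test. Ruijsenaars' theorem then delivers real numbers $\lambda_1^+,\dots,\lambda_n^+$ such that the eigenvalue branches of \eqref{plan_spec} decompose as $e^{\pm 2(t \sinh(\theta_a^+) + \lambda_a^+)}$ up to exponentially small corrections, which yields the position asymptotics in \eqref{asymptotic_result} with the ordering $\theta_1^+ > \cdots > \theta_n^+ > 0$ from \eqref{theta_order}.

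To extract the momentum asymptotics $\theta_a(t) \to \theta_a^+$, I would differentiate the position asymptotics and feed them back into the equation of motion \eqref{lambda_dot}: as $t \to \infty$ the differences $\lambda_a(t) - \lambda_c(t)$ and the coordinates $\lambda_a(t)$ themselves blow up like $t$, so every sinh-denominator in the definition \eqref{u_a} of $u_a$ tends to infinity and $u_a(\lambda(t)) \to 1$ exponentially fast. Dividing $\dot{\lambda}_a(t)$ by $u_a(\lambda(t))$ thus gives $\sinh(\theta_a(t)) \to \sinh(\theta_a^+)$, and applying $\mathrm{arcsinh}$ finishes this step (compare \eqref{theta_evolution}).

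For the $t \to -\infty$ asymptotics and the relation $\theta_a^- = -\theta_a^+$, the cleanest route is time reversal. Because $H$ \eqref{H} is even in each $\theta_a$, the map $(\lambda,\theta) \mapsto (\lambda, -\theta)$ reverses the Hamiltonian flow; equivalently, if $\gamma(t) = (\lambda(t), \theta(t))$ is a maximal integral curve of $\bsX_H$, then $\tilde\gamma(t) = (\lambda(-t), -\theta(-t))$ is another one. Applying the forward result already established to $\tilde\gamma$ (whose Lax matrix $L$ at $t=0$ has the same spectrum as the original one by conservation, hence the same $\htheta_a$'s) gives position asymptotics for $\tilde\gamma$ as $t \to +\infty$, which translate directly to position and momentum asymptotics for $\gamma$ as $t \to -\infty$, with $\theta_a^- = -\theta_a^+$ and $\lambda_a^-$ determined by a suitable reflection of $\lambda_a^+$.

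The main obstacle is verifying that the hypotheses of Ruijsenaars' theorem really are the ones met here (in particular, matching the block structure of $\bsTheta^+$ with the convention in \cite{Ruij_CMP1988}, which is why the auxiliary conjugation by $\cW$ \eqref{cW} was introduced), and, on the physical side, making sure that the exponentially small error terms produced by that theorem are genuinely integrable so that the argument $u_a(\lambda(t)) \to 1$ in the momentum step is quantitative enough to conclude $\theta_a(t) \to \theta_a^+$. Everything else is bookkeeping with the identifications already set up in \eqref{bsTheta_+_and_L_tilde}--\eqref{spectral_identification_OK}.
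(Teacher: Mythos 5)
Your forward analysis ($t \to +\infty$) reproduces the paper's route almost verbatim: you invoke Theorem~\ref{THEOREM_eigenvalue_dynamics}, cast the flow in the form \eqref{matrix_flow} via the conjugation by $\cW$, and verify the two hypotheses of Ruijsenaars' Theorem~A2 from Lemma~\ref{LEMMA_regularity} (simple, strictly decreasing spectrum of $\bsTheta^+$) and Lemma~\ref{LEMMA_L_in_exp_p} (positivity of all leading principal minors of $\tilde L$). That is exactly what the paper does; no complaint there.

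Where you genuinely diverge is the backward limit. The paper's phrase ``the same ideas work for $t\to-\infty$'' means applying Ruijsenaars' theorem directly to the same flow \eqref{matrix_flow} as $t\to-\infty$: because the diagonal of $e^{2t\sinh(\bsTheta^+)}$ reverses its growth ordering under $t\to -t$, the $a$-th largest eigenvalue is asymptotically governed by $e^{-2t\sinh(\theta_a^+)}$, giving $\lambda_a(t)\sim -t\sinh(\theta_a^+)+\lambda_a^-$ and hence $\theta_a^-=-\theta_a^+$ with no further argument. You instead run a time-reversal trick on the Hamiltonian flow itself. The trick is valid --- $H$ in \eqref{H} is indeed even in each $\theta_a$, so $(\lambda,\theta)\mapsto(\lambda,-\theta)$ conjugates $\bsX_H$ to $-\bsX_H$ --- and it is a perfectly legitimate alternative; it trades an appeal to the two-sided nature of Ruijsenaars' theorem for a symmetry of the flow.

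There is, however, one flawed justification in your time-reversal step. You write that $L(\tilde\gamma(0))=L(\lambda(0),-\theta(0))$ ``has the same spectrum as the original one by conservation.'' Conservation says $\mathrm{Spec}(L(\gamma(t)))$ is $t$-independent and likewise for $\tilde\gamma$, but $\gamma(0)$ and $\tilde\gamma(0)$ are different phase-space points lying on different integral curves, so conservation gives no relation between their spectra. The correct reason the spectra agree is a parity statement: the coefficients $K_m$ of the characteristic polynomial \eqref{L_kar_poly} are, by Lemma~\ref{LEMMA_linear_relation}, linear combinations of van Diejen's Hamiltonians $H_l$ \eqref{H_vD}, and each $H_l$ depends on $\theta$ only through $\cosh(\theta_{\eps J})$, hence is even in $\theta$. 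Therefore $K_m(\lambda,-\theta)=K_m(\lambda,\theta)$, which is what you actually need. (Alternatively, once both one-sided asymptotics are in hand, the conserved spectrum $\{e^{\pm\theta_a^+}\}=\{e^{\pm\theta_a^-}\}$ forces $\theta_a^-=-\theta_a^+$ directly.) Finally, a small caution on the momentum step: ``differentiating the position asymptotics'' is not automatic --- an asymptotic for $\lambda_a(t)$ does not by itself control $\dot\lambda_a(t)$. What saves you is either the explicit error control built into Ruijsenaars' Theorem~A2, or the observation that $\dot\theta_a(t)\to 0$ (the logarithmic derivatives in \eqref{theta_dot} decay exponentially once the $\lambda$-separations grow linearly), so $\theta_a(t)$ is Cauchy and its limit is pinned down by $\dot\lambda_a=\sinh(\theta_a)u_a$ with $u_a\to 1$. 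Worth making that explicit.
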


We find it quite remarkable that, up to an overall sign, the asymptotic 
momenta are preserved \eqref{asymptotic_momenta_conds}. Following 
Ruijsenaars' terminology \cite{Ruij_CMP1988, Ruij_FiniteDimSolitonSystems}, 
we may say that the $2$-parameter family of van Diejen systems \eqref{H} are
\emph{finite dimensional pure soliton systems}. Now, let us remember that 
for each pure soliton system analyzed in the earlier literature, the 
scattering map has a factorized form. That is, the $n$-particle scattering 
can be completely reconstructed from the $2$-particle processes, and also 
by the $1$-particle scattering on the external potential (see e.g. 
\cite{Kulish_1976, Moser_1977, Ruij_CMP1988, Ruij_FiniteDimSolitonSystems, 
Pusztai_NPB2013}). Albeit the results we shall present in rest of the paper 
do not rely on this peculiar feature of the scattering process, still, it 
would be of considerable interest to prove this property for the hyperbolic 
van Diejen systems \eqref{H}, too. However, because of its subtleties, we 
wish to work out the details of the scattering theory in a later publication.

%%%%%%%%%%%%%%%%%%%%%%%%%%%%%%%%%%%%%%%%%%%%%%%%%%%%%%%%%%%%%%%%%%%%%%%%%%%%%%%
%%% SECTION: Spectral invariants of the Lax matrix
%%%%%%%%%%%%%%%%%%%%%%%%%%%%%%%%%%%%%%%%%%%%%%%%%%%%%%%%%%%%%%%%%%%%%%%%%%%%%%%
\section{Spectral invariants of the Lax matrix}
\label{SECTION_Spectral_invariants}
The ultimate goal of this section is to prove that the eigenvalues of the 
Lax matrix $L$ \eqref{L} are in involution. Superficially, one could say
that it follows easily from the scattering theoretical results presented
in the previous section. A convincing argument would go as follows. Recalling 
the notations \eqref{trajectory} and \eqref{trajectory_init_cond}, let us 
consider the flow
\be
    \Phi \colon \bR \times P \rightarrow P,
    \qquad
    (t, \gamma_0) \mapsto \Phi_t (\gamma_0) = \gamma(t)
\label{Phi_flow}
\ee
generated by the Hamiltonian vector field $\bsX_H$ \eqref{bsX_H}. Since 
for all $t \in \bR$ the map $\Phi_t \colon P \rightarrow P$ is a 
symplectomorphism, for all $a ,b \in \bN_n$ we can write that
\be
    \{ \theta_a \circ \Phi_t, \theta_b \circ \Phi_t \}
    = \{ \theta_a, \theta_b \} \circ \Phi_t 
    = 0.
\label{theta_and_Phi}
\ee
On the other hand, from \eqref{asymptotic_result} it is also clear that 
at each point of the phase space $P$, for all $c \in \bN_n$ we have
\be
    \theta_c \circ \Phi_t \to \theta_c^+
    \qquad
    (t \to \infty).    
\label{theta_and_theta_+}
\ee
Recalling \eqref{htheta_smooth} and \eqref{theta_+}, it is evident that 
$\theta_c^+ \in C^\infty(P)$. Therefore, by a `simple interchange of limits', 
from \eqref{theta_and_Phi} and \eqref{theta_and_theta_+} one could infer 
that the asymptotic momenta $\theta_c^+$ $(c \in \bN_n)$ Poisson commute. 
Bearing in mind the relationships \eqref{theta_+} and \eqref{spec_L}, it 
would also follow that the eigenvalues of $L$ \eqref{L} generate a maximal 
Abelian Poisson subalgebra. However, to justify the interchange of limits, 
one does need a deeper knowledge about the scattering properties than the 
pointwise limit formulated in \eqref{theta_and_theta_+}. Since we wish 
to work out the full scattering theory elsewhere, in this paper we choose an 
alternative approach by merging the temporal asymptotics of the trajectories 
with van Diejen's earlier results \cite{van_Diejen_ComposMath, 
van_Diejen_TMP1994, van_Diejen_JMP1995}.

%%%%%%%%%%%%%%%%%%%%%%%%%%%%%%%%%%%%%%%%%%%%%%%%%%%%%%%%%%%%%%%%%%%%%%%%%%%%%%%
%%% SUBSECTION: Link to van Diejen's 5-parameter system
%%%%%%%%%%%%%%%%%%%%%%%%%%%%%%%%%%%%%%%%%%%%%%%%%%%%%%%%%%%%%%%%%%%%%%%%%%%%%%%
\subsection{Link to the $5$-parameter family of van Diejen systems}
\label{SUBSECTION_link_to_vD}
As is known from the seminal papers \cite{van_Diejen_TMP1994, 
van_Diejen_JMP1995}, the definition of the classical hyperbolic van Diejen 
system is based on the smooth functions 
$v, w \colon \bR \setminus \{ 0 \} \rightarrow \bC$ defined by the formulae
\be
    v(x) = \frac{\sinh(\ri g + x)}{\sinh(x)},
    \quad
    w(x) = \frac{\sinh(\ri g_0 + x)}{\sinh(x)} 
            \frac{\cosh(\ri g_1 + x)}{\cosh(x)}
            \frac{\sinh(\ri g'_0 + x)}{\sinh(x)}
            \frac{\cosh(\ri g'_1 + x)}{\cosh(x)},
\label{v-w}
\ee
where the five independent real numbers $g$, $g_0$, $g_1$, $g_0'$, $g_1'$ 
are the coupling constants. Parameter $g$ in the `potential' function $v$ 
controls the strength of inter-particle interaction, whereas the remaining 
four constants appearing in the `external potential' $w$ are responsible 
for the influence of the ambient field. Conforming to the notations 
introduced in the aforementioned papers, let us recall that the set of 
Poisson commuting functions found by van Diejen can be succinctly written as
\be
    H_l 
    = \sum_{\substack{J \subseteq \bN_n, \ \vert J \vert \leq l 
            \\ \eps_j = \pm 1, \ j \in J}}
                \cosh(\theta_{\eps J}) 
                \vert V_{\eps J; J^c} \vert
                U_{J^c, l - \vert J \vert}
    \qquad
    (l \in \bN_n),
\label{H_vD}
\ee
where the various constituents are defined by the formulae
\be
    \theta_{\eps J} = \sum_{j \in J} \eps_j \theta_j,
    \quad
    V_{\eps J; J^c} 
    = \prod_{j \in J} w(\eps_j \lambda_j)
        \prod_{\substack{j, j' \in J \\ (j < j')}}
            v(\eps_j \lambda_j + \eps_{j'} \lambda_{j'})^2
        \prod_{\substack{j \in J \\ k \in J^c}}
            v(\eps_j \lambda_j + \lambda_k) 
            v(\eps_j \lambda_j - \lambda_k),
\label{V}
\ee
together with the expression
\be
    U_{J^c, l - \vert J \vert}
    = (-1)^{l - \vert J \vert} 
    \mkern-25mu 
    \sum_{\substack{I \subseteq J^c, \ \vert I \vert = l - \vert J \vert 
                    \\ \eps_i = \pm 1, \ i \in I}}
        \prod_{i \in I} w(\eps_i \lambda_i)
        \mkern-5mu 
        \prod_{\substack{i, i' \in I \\ (i < i')}}
            \vert v(\eps_i \lambda_i + \eps_{i'} \lambda_{i'}) \vert^2
        \mkern-5mu 
        \prod_{\substack{i \in I \\ k \in J^c \setminus I}}
        \mkern-5mu
            v(\eps_i \lambda_i + \lambda_k) 
            v(\eps_i \lambda_i-\lambda_k).
\label{U}
\ee
At this point two short technical remarks are in order. First, we extend 
the family of the first integrals \eqref{H_vD} with the constant function 
$H_0 = 1$. Analogously, in the last equation \eqref{U} it is understood 
that $U_{J^c, 0} = 1$. 

To make contact with the $2$-parameter family of van Diejen systems of 
our interest \eqref{H}, for the coupling parameters of the potential 
functions \eqref{v-w} we make the special choice 
\be
    g = \mu,
    \quad
    g_0 = g_1 = \frac{\nu}{2},
    \quad
    g'_0 = g'_1 = 0.
\label{2parameters}
\ee
Under this assumption, from the definitions \eqref{z_a} and \eqref{V} it is
evident that with the singleton $J = \{ a \}$ we can write that
\be
    V_{\{ a \}; \{ a \}^c} = -z_a
    \qquad 
    (a \in \bN_n).
\label{V-z}
\ee
Giving a glance at \eqref{U}, it is also clear that the term corresponding 
to $J = \emptyset$ in the defining sum of $H_1$ \eqref{H_vD} is a constant
function of the form
\be
    U_{\bN_n, 1} = 2 \sum_{a = 1}^n \Real(z_a)
    = - 2 \cos \left( \nu + (n - 1) \mu \right) 
        \frac{\sin(n \mu)}{\sin(\mu)}.
\label{U-z}
\ee
Plugging the above formulae into van Diejen's main Hamiltonian $H_1$ 
\eqref{H_vD}, one finds immediately that
\be
    H_1 + 2 \cos \big(\nu + (n - 1) \mu \big) \frac{\sin(n \mu)}{\sin(\mu)}
    = 2 H 
    =\tr(L).
\label{H1_vs_H}
\ee
That is, up to some irrelevant constants, our Hamiltonian $H$ \eqref{H} 
can be identified with $H_1$ \eqref{H_vD}, provided the coupling parameters 
are related by the equations displayed in \eqref{2parameters}. At this 
point one may suspect that the quantities $\tr(L^l)$ are also expressible 
with the aid of the Poisson commuting family of functions $H_l$ \eqref{H_vD}. 
Clearly, it would imply immediately that the eigenvalues of the Lax matrix 
$L$ \eqref{L} are in involution. However, due to the complexity of the 
underlying objects \eqref{V}-\eqref{U}, this naive approach would lead to 
a formidable combinatorial task, that we do not wish to pursue in this 
paper. To circumvent the difficulties, below we rather resort to a clean 
analytical approach by exploiting the scattering theoretical results 
formulated in the previous section.

%%%%%%%%%%%%%%%%%%%%%%%%%%%%%%%%%%%%%%%%%%%%%%%%%%%%%%%%%%%%%%%%%%%%%%%%%%%%%%%
%%% SUBSECTION: The Poisson brackets of the eigenvalues of L
%%%%%%%%%%%%%%%%%%%%%%%%%%%%%%%%%%%%%%%%%%%%%%%%%%%%%%%%%%%%%%%%%%%%%%%%%%%%%%%
\subsection{Poisson brackets of the eigenvalues of $L$}
\label{SUBSECTION_eigenvalue_PBs}
Take an arbitrary point $\gamma_0 \in P$ and consider the unique maximal 
integral curve 
\be
    \bR \ni t \mapsto \gamma(t) = (\lambda(t), \theta(t)) \in P
\label{curve_gamma}
\ee
of the Hamiltonian vector field $\bsX_H$ \eqref{bsX_H} satisfying the 
initial condition 
\be
    \gamma(0) = \gamma_0.
\label{init_cond}
\ee 
Since the functions $H_l$ \eqref{H_vD} are first integrals of the dynamics, 
their values at the point $\gamma_0$ can be recovered by inspecting the 
limit of $H_l (\gamma(t))$ as $t \to \infty$. Now, recalling the potentials 
\eqref{H_vD} and the specialization of the coupling parameters 
\eqref{2parameters}, it is evident that
\be
    \lim_{x \to \pm \infty} v(x) = e^{\pm \ri \mu}
    \quad \text{and} \quad
    \lim_{x \to \pm \infty} w(x) = e^{\pm \ri \nu}.
\label{lim_v_w}
\ee
Therefore, taking into account the regularity properties 
\eqref{asymptotic_momenta_conds} of the asymptotic momenta $\theta_c^+$
\eqref{asymptotic_result}, from Lemma \ref{LEMMA_asymptotics} and the 
definitions \eqref{H_vD}-\eqref{U} one finds immediately that
\be
    H_l(\gamma_0)
    = \lim_{t \to \infty} H_l (\gamma(t))
    = \sum_{\substack{J \subseteq \bN_n, \ \vert J \vert \leq l 
                        \\ \eps_j = \pm 1, \ j \in J}}
        \cosh(\theta_{\eps J}^+) \ \cU_{J^c, l - \vert J \vert}
    \qquad
    (l \in \bN_n),
\label{H_l_gamma_0}
\ee
where
\be
    \cU_{J^c, l - \vert J \vert}
    = (-1)^{l - \vert J \vert}
        \sum_{\substack{I \subseteq J^c, \ \vert I \vert = l - \vert J \vert 
                        \\ \eps_j = \pm 1, \ j \in I}}
        \prod_{j \in I} e^{\eps_j \ri \nu}
        \prod_{\substack{j \in I, \ k \in J^c \setminus I \\ (j < k)}} 
            e^{\eps_j 2 \ri \mu}.
\label{cU}
\ee
By inspecting the above expression, let us observe that the value of 
$\cU_{J^c, l - \vert J \vert}$ does \emph{not} depend on the specific 
choice of the subset $J$, but only on its cardinality $\vert J \vert$. 
More precisely, if $J \subseteq \bN_n$ is an arbitrary subset of cardinality 
$\vert J \vert = k$ $(0 \leq k \leq l - 1)$, then we can write that
\be
    \cU_{J^c, l - \vert J \vert}
    = (-1)^{l - k}
        \sum_{\substack{1 \leq j_1 < \dots < j_{l - k} \leq n - k 
                        \\ \eps_1 = \pm 1, \dots, \eps_{l - k} = \pm 1}}
            \exp 
                \left(
                    \ri \sum_{m = 1}^{l - k} 
                        \eps_m \left( \nu + 2(n - l + m - j_m) \mu \right)
                \right).
\label{cU_ell-k}
\ee

To proceed further, let us now turn to the study of the Lax matrix $L$ 
\eqref{L}. Due to the Lax representation of the dynamics that we established 
in Theorem \ref{THEOREM_Lax_representation}, the eigenvalues of $L$ are 
conserved quantities. Consequently, the coefficients 
$K_0, K_1, \ldots, K_N \in C^\infty(P)$ of the characteristic polynomial 
\be
    \det(L - y \bsone_N) = \sum_{m = 0}^N K_{N - m} y^m
    \qquad
    (y \in \bC)
\label{L_kar_poly}
\ee
are also first integrals. As expected, the special algebraic properties 
of $L$ formulated in Proposition \ref{PROPOSITION_L_in_G} and Lemma 
\ref{LEMMA_L_in_exp_p} have a profound impact on these coefficients 
as well, as can be seen from the relations 
\be
    K_{N - m} = K_m
    \qquad 
    (m = 0, 1, \dots, N).
\label{K_m-symmetry}
\ee
So, it is enough to analyze the properties of the members 
$K_0 = 1, K_1, \ldots, K_n$. In this respect the most important ingredient 
is the relationship
\be
    \lim_{t \to \infty} L(\gamma(t)) = \exp(\bsTheta^+),
\label{asymptotic-Lax}
\ee
where $\bsTheta^+$ is the $N \times N$ diagonal matrix \eqref{bsTheta_+} 
containing the asymptotic momenta. Therefore, looking back to the definition
\eqref{L_kar_poly}, for any $m = 0, 1, \dots, n$ we obtain at once that 
\be
    K_m(\gamma_0)
    = \lim_{t \to \infty} K_m(\gamma(t))
    = (-1)^m
        \sum_{a = 0}^{\genfrac{\lfloor}{\rfloor}{}{}{m}{2}}
        \sum_{\substack{J \subseteq \bN_n, \ \vert J \vert = m - 2 a 
                        \\ \eps_j = \pm 1, \ j \in J}}
        \binom{n - \vert J \vert}{a} \cosh(\theta_{\eps J}^+).
\label{K_m_gamma_0}
\ee
Based on the formulae \eqref{H_l_gamma_0} and \eqref{K_m_gamma_0}, we can
prove the following important technical result.

%%%%%%%%%%%%%%%%%%%%%%%%%%%%%%%%%%%%%%%%%%%%%%%%%%%%%%%%%%%%%%%%%%%%%%%%%%%%%%%
%%% LEMMA: Linear relation
%%%%%%%%%%%%%%%%%%%%%%%%%%%%%%%%%%%%%%%%%%%%%%%%%%%%%%%%%%%%%%%%%%%%%%%%%%%%%%%
\begin{LEMMA}
\label{LEMMA_linear_relation}
The two distinguished families of first integrals $\{ H_l \}_{l = 0}^n$ 
and $\{ K_m \}_{m = 0}^n$ are connected by an invertible linear relation with 
purely numerical coefficients depending only on the coupling parameters 
$\mu$ and $\nu$.
\end{LEMMA}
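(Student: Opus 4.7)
The plan is to introduce an intermediate family of smooth first integrals $\{S_k\}_{k=0}^n$ on $P$, in terms of which both $\{H_l\}$ and $\{K_m\}$ are expressible as invertible lower-triangular linear combinations with coefficients depending only on $\mu,\nu$. Recalling from \eqref{theta_+} and \eqref{htheta_smooth} that the asymptotic momenta $\theta_a^+ = 2\htheta_a$ are smooth functions on $P$, I set
\be
    S_k
    = \sum_{\substack{J \subseteq \bN_n, \ \vert J \vert = k
            \\ \eps_j = \pm 1, \ j \in J}}
        \cosh(\theta_{\eps J}^+)
    \in C^\infty(P)
    \qquad
    (k = 0, 1, \ldots, n),
\ee
so that $S_0 = 1$ and $S_k$ collects all signed-subset hyperbolic cosines of the asymptotic momenta of given size.

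The essential algebraic input, already isolated in \eqref{cU_ell-k}, is that $\cU_{J^c, l - \vert J \vert}$ depends on $J$ only through its cardinality; denoting this common value by $\alpha_{l, k}(\mu, \nu)$ when $\vert J \vert = k$, the asymptotic identity \eqref{H_l_gamma_0} rearranges to
\be
    H_l = \sum_{k = 0}^{l} \alpha_{l, k}(\mu, \nu) \, S_k
    \qquad
    (l = 0, 1, \ldots, n),
\ee
with $\alpha_{l, l} = \cU_{J^c, 0} = 1$. Hence the $(n + 1) \times (n + 1)$ coefficient matrix $(\alpha_{l, k})$ is lower-triangular with unit diagonal, therefore invertible, with inverse entries depending only on $\mu, \nu$. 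An analogous reading of \eqref{K_m_gamma_0} yields
\be
    K_m
    = (-1)^m \sum_{a = 0}^{\lfloor m / 2 \rfloor}
        \binom{n - m + 2 a}{a} \, S_{m - 2 a}
    \qquad
    (m = 0, 1, \ldots, n),
\ee
a lower-triangular transformation in the $S_k$ whose diagonal entries $(-1)^m$ are nonzero, so it too is invertible with integer-coefficient inverse.

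Composing the inverse of the first transformation with the second produces an invertible lower-triangular $(n + 1) \times (n + 1)$ matrix of numerical entries depending only on $\mu, \nu$ that expresses each $K_m$ as a linear combination of $H_0, \ldots, H_m$, and symmetrically $H_l$ as a linear combination of $K_0, \ldots, K_l$. Since the point $\gamma_0 \in P$ used to derive \eqref{H_l_gamma_0} and \eqref{K_m_gamma_0} was arbitrary, these equalities of values extend to identities of smooth functions on $P$, establishing the Lemma. The only genuine input is the $J$-independence encoded in \eqref{cU_ell-k}; once this is accepted, the remaining work is elementary linear algebra in the two triangular bases, and no substantive obstacle is anticipated.
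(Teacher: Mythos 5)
Your proposal is correct and essentially identical to the paper's own proof: the intermediate family $S_k$ you introduce is exactly the paper's $\cA_k$ from \eqref{cA}, and both arguments conclude via the two lower-triangular transformations \eqref{H_vs_cA} and \eqref{K_vs_cA} with nonzero diagonals, composed to yield the invertible matrix $\cC$. The key observation you correctly isolate---that $\cU_{J^c, l - \vert J \vert}$ depends on $J$ only through its cardinality, as recorded in \eqref{cU_ell-k}---is precisely the crux the paper emphasizes.
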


%%%%%%%%%%%%%%%%%%%%%%%%%%%%%%%%%%%%%%%%%%%%%%%%%%%%%%%%%%%%%%%%%%%%%%%%%%%%%%%
%%% PROOF
%%%%%%%%%%%%%%%%%%%%%%%%%%%%%%%%%%%%%%%%%%%%%%%%%%%%%%%%%%%%%%%%%%%%%%%%%%%%%%%
\begin{proof}
For brevity, let us introduce the notation
\be
    \cA_k 
    = \sum_{\substack{J \subseteq \bN_n, \ \vert J \vert = k 
                        \\ \eps_j = \pm 1, \ j \in J}}
            \cosh(\theta_{\eps J}^+)
    \qquad 
    (k = 0, 1, \dots, n).
\label{cA}
\ee
As we have seen in \eqref{cU_ell-k}, the coefficients 
$\cU_{J^c, l - \vert J \vert}$ appearing in the formula \eqref{H_l_gamma_0} 
depend only on the cardinality of $J$, whence for any 
$l \in \{ 0, 1, \dots, n \}$ we can write that
\be
    H_l(\gamma_0) = \sum_{k = 0}^l \cU_{\bN_{n - k}, l - k} \cA_k.
\label{H_vs_cA}
\ee
Since $\cU_{\bN_{n - l}, 0} = 1$, the matrix transforming
$\{ \cA_k \}_{k = 0}^n$ into $\{ H_l(\gamma_0) \}_{l = 0}^n$ is lower 
triangular with plus ones on the diagonal, whence the above linear 
relation \eqref{H_vs_cA} is invertible. Comparing the formulae 
\eqref{K_m_gamma_0} and \eqref{cA}, it is also clear that 
\be
    K_m(\gamma_0)
    = (-1)^m \sum_{a = 0}^{\genfrac{\lfloor}{\rfloor}{}{}{m}{2}}
                \binom{n - (m - 2 a)}{a} \cA_{m - 2 a},
\label{K_vs_cA}
\ee
which in turn implies that the matrix relating $\{ \cA_k \}_{k = 0}^n$ to 
$\{ K_m(\gamma_0) \}_{m = 0}^n$ is lower triangular with diagonal entries 
$\pm 1$. Hence the linear relationship \eqref{K_vs_cA} is also invertible. 
Putting together the above observations, it is clear that there is an 
invertible $(n + 1) \times (n + 1)$ matrix $\cC$ with purely numerical 
entries $\cC_{m, l}$ depending only on $\mu$ and $\nu$ such that
\be
    K_m(\gamma_0) = \sum_{l = 0}^n \cC_{m, l} H_l(\gamma_0).
\label{K_vs_H}
\ee
Since $\gamma_0$ is an arbitrary point of the phase space $P$ \eqref{P}, 
the Lemma follows.
\end{proof}
%%%%%%%%%%%%%%%%%%%%%%%%%%%%%%%%%%%%%%%%%%%%%%%%%%%%%%%%%%%%%%%%%%%%%%%%%%%%%%%

The scattering theoretical idea in the proof the above Lemma goes back 
to the fundamental works of Moser (see e.g. \cite{Moser_1975}). However, 
in the recent paper \cite{Gorbe_Feher_PLA2015} it has been revitalized in 
the context of the rational $BC_n$ van Diejen model, too. Compared to the 
rational case, it is a significant difference that our coefficients 
$\cU_{J^c, l - \vert J \vert}$ \eqref{cU} do depend on the parameters 
$\mu$ and  $\nu$ in a non-trivial manner, whence the observations surrounding 
the derivations of formula \eqref{cU_ell-k} turns out to be crucial in our 
presentation.

Since the family of functions $\{ H_l \}_{l = 0}^n$ Poisson commute, Lemma 
\ref{LEMMA_linear_relation} readily implies that the first integrals 
$\{ K_m \}_{m = 0}^n$ are also in involution. Now, let us recall that the 
spectrum of the Lax matrix $L$ is simple, as we have seen in Lemma 
\ref{LEMMA_regularity}. As a consequence, the eigenvalues of $L$ can be 
realized as smooth functions of the coefficients of the characteristic 
polynomial \eqref{L_kar_poly}, thus the following result is immediate.

%%%%%%%%%%%%%%%%%%%%%%%%%%%%%%%%%%%%%%%%%%%%%%%%%%%%%%%%%%%%%%%%%%%%%%%%%%%%%%%
%%% THEOREM: The eigenvalues of L Poisson commute
%%%%%%%%%%%%%%%%%%%%%%%%%%%%%%%%%%%%%%%%%%%%%%%%%%%%%%%%%%%%%%%%%%%%%%%%%%%%%%%
\begin{THEOREM}
\label{THEOREM_commuting_eigenvalues}
The eigenvalues of the Lax matrix $L$ \eqref{L} are in involution.
\end{THEOREM}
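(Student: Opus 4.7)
The plan is to exploit Lemma \ref{LEMMA_linear_relation} together with the known commutativity of van Diejen's Hamiltonian family $\{H_l\}_{l=0}^n$ \eqref{H_vD}, which is part of the input from \cite{van_Diejen_ComposMath, van_Diejen_TMP1994, van_Diejen_JMP1995}. From this input we have $\{H_l, H_{l'}\} = 0$ for all $l, l' \in \{0, 1, \ldots, n\}$. By Lemma \ref{LEMMA_linear_relation}, there is an invertible matrix $\cC$ with purely numerical entries depending only on $\mu, \nu$ such that $K_m = \sum_{l = 0}^n \cC_{m, l} H_l$, so the Poisson bracket bilinearity gives $\{K_m, K_{m'}\} = 0$ for all $m, m' \in \{0, 1, \ldots, n\}$. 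Combined with the symmetry relations \eqref{K_m-symmetry}, this shows that every coefficient of the characteristic polynomial \eqref{L_kar_poly} lies in an Abelian Poisson subalgebra of $C^\infty(P)$.

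The second step is to transfer the involution property from the coefficients $\{K_m\}_{m=0}^N$ to the eigenvalues themselves. By Lemma \ref{LEMMA_regularity}, at every point of $P$ the Lax matrix $L$ has simple spectrum, namely $\{e^{\pm 2 \htheta_a}\}_{a \in \bN_n}$ with $\htheta_1 > \ldots > \htheta_n > 0$. Consequently, each eigenvalue is locally a smooth function of the matrix entries of $L$ and hence, via the elementary symmetric functions, a smooth function of $K_1, \ldots, K_N$; equivalently, each $\htheta_a$ (or $e^{2\htheta_a}$) depends smoothly on the $K_m$'s in a neighbourhood of any point of $P$. Applying the chain rule to the Poisson bracket, one obtains $\{\htheta_a, \htheta_b\} = \sum_{m, m'} (\partial \htheta_a / \partial K_m)(\partial \htheta_b / \partial K_{m'}) \{K_m, K_{m'}\} = 0$ for all $a, b \in \bN_n$, and the theorem follows.

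The only step requiring a little care is verifying that the eigenvalues can indeed be represented smoothly as functions of the characteristic-polynomial coefficients; this is where the regularity result of Lemma \ref{LEMMA_regularity} plays a decisive role, for without simplicity of the spectrum the eigenvalues would in general be only continuous (and a priori not differentiable) in the $K_m$'s. There is no serious obstacle here: simplicity of the spectrum, combined with the implicit function theorem applied to $\det(L - y \bsone_N) = 0$, yields local smoothness of each eigenvalue in $K_1, \ldots, K_N$, and this is all that is needed. Hence the argument is essentially a three-line deduction from Lemma \ref{LEMMA_linear_relation}, Lemma \ref{LEMMA_regularity}, and the known commutativity of van Diejen's Hamiltonians.
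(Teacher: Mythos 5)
Your proof is correct and follows the paper's argument essentially verbatim: involution of the $K_m$'s via Lemma \ref{LEMMA_linear_relation} and the commutativity of van Diejen's family, then transfer to the eigenvalues via the simplicity of the spectrum from Lemma \ref{LEMMA_regularity}. The extra care you take in justifying smooth dependence of the eigenvalues on the $K_m$'s (implicit function theorem, chain rule for Poisson brackets) is a welcome amplification of the paper's phrase ``can be realized as smooth functions,'' but it is the same proof.
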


To conclude this section, let us note that the proof of Theorem 
\ref{THEOREM_commuting_eigenvalues} is quite indirect in the sense that
it hinges on the commutativity of the family of functions \eqref{H_vD}.
However, the only available proof of this highly non-trivial fact is
based on the observation that the Hamiltonians \eqref{H_vD} can be
realized as classical limits of van Diejen's commuting analytic difference 
operators \cite{van_Diejen_ComposMath}. As a more elementary approach, let 
us note that Theorem \ref{THEOREM_commuting_eigenvalues} would also follow 
from the existence of an $r$-matrix encoding the tensorial Poisson bracket 
of the Lax matrix $L$ \eqref{L}. Due to Lemma \ref{LEMMA_linear_relation}, 
it would imply the commutativity of the family \eqref{H_vD}, too, at least 
under the specialization \eqref{2parameters}. To find such an $r$-matrix, 
one may wish to generalize the analogous results on the rational system 
\cite{Pusztai_NPB2015}.

%%%%%%%%%%%%%%%%%%%%%%%%%%%%%%%%%%%%%%%%%%%%%%%%%%%%%%%%%%%%%%%%%%%%%%%%%%%%%%%
%%% SECTION: Discussion
%%%%%%%%%%%%%%%%%%%%%%%%%%%%%%%%%%%%%%%%%%%%%%%%%%%%%%%%%%%%%%%%%%%%%%%%%%%%%%%
\section{Discussion}
\label{SECTION_Discussion}
One of the most important objects in the study of integrable systems is the
Lax representation of the dynamics. By generalizing the earlier results on 
the rational $BC_n$ RSvD models \cite{Pusztai_NPB2011, Pusztai_NPB2015},
in this paper we succeeded in constructing a Lax pair for the $2$-parameter 
family of hyperbolic van Diejen systems \eqref{H}. Making use of this 
construction, we showed that the dynamics can be solved by a projection 
method, which in turn allowed us to initiate the study of the scattering 
properties of \eqref{H}. Moreover, by combining our scattering theoretical 
results with the ideas of the recent paper \cite{Gorbe_Feher_PLA2015}, we 
proved that the first integrals provided by the eigenvalues of the proposed 
Lax matrix \eqref{L} are in fact in involution. To sum up, it is fully 
justified to say that the matrices $L$ \eqref{L} and $B$ \eqref{B} form a 
Lax pair for the hyperbolic van Diejen system \eqref{H}. 

Apart from taking a non-trivial step toward the construction of Lax matrices 
for the most general hyperbolic van Diejen many-particle systems \eqref{H_vD}, 
let us not forget about the potential applications of our results. In analogy 
with the translation invariant RS systems, we expect that the van Diejen 
models may play a crucial role in clarifying the particle-soliton picture 
in the context of integrable boundary field theories. While the relationship 
between the $A$-type RS models and the soliton equations defined on the whole 
line is under control (see e.g. \cite{Ruij_Schneider, Ruij_CMP1988, 
Babelon_Bernard, Ruij_RIMS_2, Ruij_RIMS_3}), the link between the van Diejen 
models and the soliton systems defined on the half-line is less understood 
(see e.g. \cite{Saleur_et_al_NPB1995, Kapustin_Skorik}). As in the translation 
invariant case, the Lax matrices of the van Diejen systems could turn out to 
be instrumental for elaborating this correspondence.

Turning to the more recent activities surrounding the CMS and the RS 
many-particle models, let us recall the so-called classical/quantum duality 
(see e.g. \cite{Mukhin_et_al_2011, Alexandrov_et_al_NPB2014, 
Gorsky_et_al_JHEP2014, Tsuboi_et_al_JHEP2015, Beketov_et_al_NPB2016}), which 
relates the spectra of certain quantum spin chains with the Lax matrices of 
the classical CMS and RS systems. An equally remarkable development is the 
emergence of new integrable tops based on the Lax matrices of the CMS and 
the RS systems \cite{Aminov_et_al_2014, Levin_et_al_JHEP2014}. Relatedly, 
it would be interesting to see whether the Lax matrix \eqref{L} of the 
hyperbolic van Diejen system \eqref{H} can be fit into these frameworks.

One of the most interesting aspects of the CMS and the RSvD systems we 
have not addressed in this paper is the so-called Ruijsenaars duality, or 
action-angle duality. Based on hard analytical techniques, this remarkable 
property was first exhibited by Ruijsenaars \cite{Ruij_CMP1988} in the 
context of the translation invariant non-elliptic models. Let us note 
that in the recent papers \cite{Feher_Klimcik_0901, Feher_Ayadi_JMP2010,
Feher_Klimcik_CMP2011, Feher_Klimcik_NPB2012} almost all of these duality 
relationships have been successfully reinterpreted in a nice geometrical 
framework provided by powerful symplectic reduction methods. Moreover, 
by now some duality results are available also for the CMS and the RSvD 
models associated with the $BC$-type root systems \cite{Pusztai_NPB2011, 
Pusztai_NPB2012, Feher_Gorbe_JMP2014}. 

As for the key player of our paper, we have no doubt that the $2$-parameter
family of hyperbolic van Diejen systems \eqref{H} is self-dual. Indeed, upon 
diagonalizing the Lax matrix $L$ \eqref{L}, we see that the transformed 
objects defined in \eqref{L_diagonalized}-\eqref{hat_L_and_hat_F} obey the 
relationship \eqref{hat_commut_rel}, that has the same form as the Ruijsenaars 
type commutation relation \eqref{commut_rel} we set up in Lemma 
\ref{LEMMA_commut_rel}. Based on the method presented in \cite{Ruij_CMP1988}, 
we expect that the transformed matrix $\hat{L}$ \eqref{hat_L_and_hat_F} shall 
provide a Lax matrix for the dual system. Therefore, comparing the matrix 
entries displayed in \eqref{L} and \eqref{hat_L_entries}, the self-duality 
of the system \eqref{H} seems to be more than plausible. Admittedly, many 
subtle details are still missing for a complete proof. As for filling these 
gaps, the immediate idea is that either one could mimic Ruijsenaars' 
scattering theoretical approach, or invent an appropriate symplectic reduction 
framework. However, notice that the non-standard form of the Hamiltonian 
\eqref{H} poses severe analytical difficulties on the study of the scattering 
theory, whereas the weakness of the geometrical approach lies in the fact that 
up to now even the translation invariant hyperbolic RS model has not been 
derived from symplectic reduction. Nevertheless, by taking the analytical 
continuation of the Lax matrix $L$ \eqref{L}, it is conceivable that the 
self-duality of the compactified trigonometric version of \eqref{H} can be 
proved by adapting the quasi-Hamiltonian reduction approach advocated by 
Feh\'er and Klim\v{c}\'ik \cite{Feher_Klimcik_NPB2012}. For further 
motivation, let us recall that the duality properties are indispensable in 
the study of the recently introduced integrable random matrix ensembles 
\cite{Bogomolny_et_al_PRL2009, Bogomolny_et_al_Nonlinearity2011, 
Fyodorov_Giraud_2015}, too.

From the above paragraphs it is clear that our results on the $2$-parameter 
family of hyperbolic systems \eqref{H} open up a plethora of interesting 
problems. Besides, based on our numerical calculations, below we also wish to 
discuss some possible generalizations in two further directions. First, it is 
a time-honored principle that the inclusion of a spectral parameter into the 
Lax matrix of an integrable system can greatly enrich the analysis by 
borrowing techniques from complex geometry. Bearing this fact in mind, with 
the aid of the function
\be
    \Phi(x \, | \, \eta) 
    = e^{x \coth(\eta)} \left( \coth(x) - \coth(\eta) \right)
\label{Phi}
\ee
depending on the complex variables $x$ and $\eta$, over the phase space 
$P$ \eqref{P} we define the matrix valued smooth function 
$\cL = \cL(\lambda, \theta; \mu, \nu \, | \, \eta)$ with entries
\be
    \cL_{k, l}
    = \left(
        \ri \sin(\mu) F_k \bar{F}_l + \ri \sin(\mu - \nu) C_{k, l})
    \right)
    \Phi(\ri \mu + \Lambda_j - \Lambda_k \, | \, \eta)
    \qquad
    (k, l \in \bN_N).
\label{cL_matrix}
\ee
One of the outcomes of our numerical investigations is that for any values 
of $\eta$ the eigenvalues of $\cL$ provide a family of first integrals in 
involution for the van Diejen system \eqref{H}. Thinking of $\eta$ as a 
spectral parameter, let us also observe that, in the limit 
$\bR \ni \eta \to \infty$, from $\cL$ we can recover our Lax matrix $L$ 
\eqref{L}; that is, $\cL \to L$. Although the spectral parameter dependent 
matrix $\cL$ does not take values in the Lie group $U(n, n)$ \eqref{G}, we 
find it interesting that the constituent function $\Phi$ \eqref{Phi} can 
be seen as a hyperbolic limit of the elliptic Lam\'e function, that plays 
a prominent role in the theory of the elliptic CMS and RS systems (see e.g. 
the papers \cite{Krichever, Ruij_CMP1987} and the monograph 
\cite{Babelon_Bernard_Talon}). Therefore, it is tempting to think that an 
appropriate elliptic deformation of $\cL$ \eqref{L} may lead to a spectral 
parameter dependent Lax matrix of the elliptic van Diejen system with 
coupling parameters $\mu$ and $\nu$.

Hitherto we have studied the van Diejen system \eqref{H} with only two 
independent coupling parameters. Though a construction of a Lax matrix 
for the most general hyperbolic van Diejen system with five independent 
coupling parameters still seems to be out of reach, we can offer a plausible 
conjecture for a Lax matrix with \emph{three} independent coupling constants. 
Simply by generalizing the formulae appearing in the theory of the rational 
$BC_n$ RSvD systems \cite{Pusztai_NPB2013}, with the aid of an additional 
real parameter $\kappa$ let us define the real valued functions $\alpha$ 
and $\beta$ for any $x > 0$ by the formulae
\be
    \alpha(x)
    = \frac{1}{\sqrt{2}} 
        \left(
            1 + \left( 1 + \frac{\sin(\kappa)^2}{\sinh(2 x)^2} \right)^\half 
        \right)^\half
    \quad \text{and} \quad
    \beta(x) 
    = \frac{\ri}{\sqrt{2}}
        \left(
            -1 + \left( 1 + \frac{\sin(\kappa)^2}{\sinh(2 x)^2} \right)^\half
        \right)^\half.
\label{alpha_beta}
\ee
Built upon these functions, let us also introduce the Hermitian $N \times N$ 
matrix
\be
    h(\lambda) 
        = \begin{bmatrix}
        \diag(\alpha(\lambda_1), \ldots, \alpha(\lambda_n)) 
            & \diag(\beta(\lambda_1), \ldots, \beta(\lambda_n)) 
        \\
        -\diag(\beta(\lambda_1), \ldots, \beta(\lambda_n)) 
            & \diag(\alpha(\lambda_1), \ldots, \alpha(\lambda_n))
        \end{bmatrix}.
\label{h}
\ee
One can easily show that $h C h = C$, whence the matrix valued function
\be
    \tilde{L} = h^{-1} L h^{-1}
\label{L_tilde}
\ee
also takes values in the Lie group $U(n, n)$ \eqref{G}. Notice that the 
rational limit of matrix $\tilde{L}$ gives back the Lax matrix of the 
rational $BC_n$ RSvD system, that first appeared in equation (4.51) of 
paper \cite{Pusztai_NPB2012}. Moreover, upon setting
\be
    g = \mu,
    \quad
    g_0 = g_1 = \frac{\nu}{2},
    \quad
    g'_0 = g'_1 = \frac{\kappa}{2},
\label{3parameters}
\ee
for van Diejen's main Hamiltonian $H_1$ \eqref{H_vD} we get that
\be
    H_1 
    = 2 \sum_{a=1}^n
        \cosh(\theta_a) u_a 
        \left(
            1 + \frac{\sin(\kappa)^2}{\sinh(2 \lambda_a)^2}
        \right)^\half
    + 2 \sum_{a=1}^n
        \Real 
        \left( 
            z_a \frac{\sinh(\ri \kappa + 2 \lambda_a)}{\sinh(2 \lambda_a)}
        \right),
\label{H_1_mu_nu_kappa}
\ee
with the functions $z_a$ and $u_a$ defined in the equations \eqref{z_a} and 
\eqref{u_a}, respectively. The point is that, in complete analogy with 
\eqref{H1_vs_H}, one can establish the relationship
\be
    H_1 
    + 2 \cos 
        \left( 
            \nu + \kappa + (n - 1) \mu 
        \right) 
        \frac{\sin(n \mu)}{\sin(\mu)}
    = \tr(\tilde{L}).
\label{H1_vs_H_kappa}
\ee
Furthermore, based on numerical calculations for small values of $n$, 
it appears that the eigenvalues of $\tilde{L}$ \eqref{L_tilde} provide 
a commuting family of first integrals for the van Diejen system 
\eqref{H_1_mu_nu_kappa}. To sum up, we have numerous evidences that matrix 
$\tilde{L}$ \eqref{L_tilde} is a Lax matrix for the $3$-parameter family 
of van Diejen systems \eqref{H_1_mu_nu_kappa}, if the pertinent parameters 
are connected by the relationships displayed in \eqref{3parameters}.
As can be seen in \cite{Pusztai_NPB2012}, the new parameter $\kappa$ causes
many non-trivial technical difficulties even at the level of the rational
van Diejen system. Part of the difficulties can be traced back to the fact 
that for $\sin(\kappa) \neq 0$ the matrix $\tilde{L}$ \eqref{L_tilde} does 
not belong to the symmetric space $\exp(\mfp)$ \eqref{symm_space}, whence 
the diagonalization of $\tilde{L}$ requires a less direct approach than 
that provided by the canonical form \eqref{mfp_and_mfa_and_K}. We wish to 
come back to these problems in later publications.

%%%%%%%%%%%%%%%%%%%%%%%%%%%%%%%%%%%%%%%%%%%%%%%%%%%%%%%%%%%%%%%%%%%%%%%%%%%%%%%
%%% ACKNOWLEDGMENTS
%%%%%%%%%%%%%%%%%%%%%%%%%%%%%%%%%%%%%%%%%%%%%%%%%%%%%%%%%%%%%%%%%%%%%%%%%%%%%%%
\medskip
\noindent
\textbf{Acknowledgments.}
We are grateful to L.~Feh\'er (Univ. Szeged) for useful comments on the 
manuscript. The work of B.G.P. was supported by the J\'anos Bolyai Research 
Scholarship of the Hungarian Academy of Sciences, by the Hungarian 
Scientific Research Fund (OTKA grant K116505), and also by a Lend\"ulet 
Grant; he wishes to thank to Z.~Bajnok for hospitality in the MTA Lend\"ulet 
Holographic QFT Group. T.F.G. was supported in part by the Hungarian 
Scientific Research Fund (OTKA grant K111697) and by COST (European 
Cooperation in Science and Technology) in COST Action MP1405 QSPACE.

%%%%%%%%%%%%%%%%%%%%%%%%%%%%%%%%%%%%%%%%%%%%%%%%%%%%%%%%%%%%%%%%%%%%%%%%%%%%%%%
%%% BIBLIOGRAPHY
%%%%%%%%%%%%%%%%%%%%%%%%%%%%%%%%%%%%%%%%%%%%%%%%%%%%%%%%%%%%%%%%%%%%%%%%%%%%%%%

\end{document}